\documentclass[11pt,letterpaper]{article}

\usepackage[letterpaper,margin=1in]{geometry}
\usepackage[T1]{fontenc}
\usepackage[utf8]{inputenc}
\usepackage{times}
\usepackage{fullpage}
\usepackage{microtype}
\usepackage{setspace}
\usepackage{amsmath,amssymb,amsthm,mathtools}
\usepackage{graphicx}
\usepackage{xcolor}

\usepackage[ruled,vlined,linesnumbered]{algorithm2e}
\SetKw{Return}{return}
\SetKw{Break}{break}
\SetKwProg{WithProb}{\normalfont with probability}{}{end}
\SetKwComment{AlgComment}{$\triangleright$\ }{}

\usepackage{array}
\usepackage{booktabs}
\usepackage{multirow}
\usepackage{tabularx}
\usepackage{threeparttable}
\newcolumntype{Y}{>{\raggedright\arraybackslash}X}

\usepackage{enumitem}
\setlist{topsep=0.5em,itemsep=0.2em,parsep=0pt,partopsep=0pt}

\usepackage[
colorlinks=true,
linkcolor=blue!55!black,
citecolor=blue!55!black,
urlcolor=blue!55!black,
pdfauthor={Qin Zhang},
pdftitle={Optimal Passes and Perfect Sampling for Similarity Graph Statistics}
]{hyperref}

\allowdisplaybreaks
\newtheorem{theorem}{Theorem}[section]
\newtheorem{lemma}[theorem]{Lemma}
\newtheorem{claim}[theorem]{Claim}

\theoremstyle{definition}
\newtheorem{definition}[theorem]{Definition}

\theoremstyle{remark}

\theoremstyle{plain}

\newcommand{\abs}[1]{\left|#1\right|}

\newcommand{\E}{\mathbf{E}}

\renewcommand{\Pr}{\mathbf{Pr}}
\newcommand{\eps}{\epsilon}
\newcommand{\one}{\mathbf{1}}
\newcommand{\1}{\mathbf{1}}
\newcommand{\gray}[1]{\textcolor{gray}{#1}}

\newcommand{\DI}{\mathsf{DI}}
\newcommand{\DDI}{\mathsf{DDI}}
\newcommand{\TSI}{\mathsf{TSI}}
\newcommand{\TV}{d_{\mathrm{TV}}}
\newcommand{\Unif}{\operatorname{Unif}}
\newcommand{\bits}{\{0,1\}}
\newcommand{\FAIL}{\mathtt{null}}
\newcommand{\Null}{\mathtt{null}}

\newcommand{\cD}{\mathcal{D}}
\newcommand{\cE}{\mathcal{E}}
\newcommand{\cH}{\mathcal{H}}
\newcommand{\cL}{\mathcal{L}}
\newcommand{\cU}{\mathcal{U}}
\newcommand{\cB}{\mathcal{B}}

\newcommand{\PiAone}{\Pi^{A}_{1}}
\newcommand{\PiBone}{\Pi^{B}_{1}}
\newcommand{\PiCone}{\Pi^{C}_{1}}
\newcommand{\PiAtwo}{\Pi^{A}_{2}}
\newcommand{\PiBtwo}{\Pi^{B}_{2}}
\newcommand{\Piqry}{\Pi_{\mathrm{qry}}}

\newcommand{\ExactReject}{\mathsf{ExactReject}_p}
\newcommand{\DegreeCoin}{\mathsf{DegreeCoin}}
\newcommand{\High}{\mathsf{H}}
\newcommand{\Low}{\mathsf{L}}
\newcommand{\wt}[1]{\widetilde{#1}}

\makeatletter
\let\oldnl\nl
\newcommand{\nonl}{\renewcommand{\nl}{\let\nl\oldnl}}
\makeatother

\newcommand{\qinsays}[2][]{}

\title{Optimal Passes and Perfect Sampling for Similarity Graph Statistics}

\author{
	Qin Zhang \\
	Computer Science Department\\
	Indiana University\\
	\texttt{qzhangcs@iu.edu}
}

\date{}

\begin{document}
	\hypersetup{pageanchor=false}
	
	\maketitle
	
	\begin{abstract}
		We study statistical estimation on implicit weighted similarity graphs presented as node-arrival streams. Previous work~\cite{LZ26b} obtained constant-pass, sublinear-space algorithms for several basic statistics of such graphs, including the diversity index $\DI=\sum_i d_i^{-1}$ and the degree moments $M_p=\sum_i d_i^p$ for $p>0$, together with their associated sampling problems $L_{\DI}$ and $L_{M_p}$.
		
		We settle three questions left unresolved by the previous work. First, we show that any two-pass streaming algorithm that $1.1$-approximates $\DI$ requires $\Omega(n)$ bits of space; combined with the three-pass $\wt{O}(\sqrt n)$-space algorithm of~\cite{LZ26b}, this establishes a sharp transition between two and three passes for $\DI$. Second and third, for every fixed $c>0$ we give a three-pass $n^{-c}$-perfect $L_{\DI}$-sampler using $O_c(\sqrt n\log n)$ words and a two-pass $n^{-c}$-perfect $L_{M_p}$-sampler using $O_{p,c}(n^{1-1/(p+1)}\log n)$ words. Matching lower bounds up to polylogarithmic factors show that both the number of passes and the polynomial dependence on $n$ are optimal.
	\end{abstract}
	
	\medskip
	\noindent\textbf{Keywords:} node-arrival streams, similarity graphs, diversity index, degree moments, sampling
	
	\pagenumbering{arabic}
	\setcounter{page}{1}
	\hypersetup{pageanchor=true}

\section{Introduction}
\label{sec:intro}

Many fundamental problems in data-stream analysis are built on exact item equality. An item's frequency counts only its identical copies, and classical problems such as distinct elements, frequency moments, and $L_p$-sampling are all defined in terms of these frequencies. For modern data, however, exact equality can be too restrictive. This is particularly relevant for content such as text, images, and videos produced by generative models, where essentially the same content can appear in many different forms. For example, two documents may convey the same information through paraphrasing or summarization, yet exact matching treats them as entirely distinct items. A more natural notion of frequency should allow similar items to contribute to one another's frequencies, with the contribution determined by their degree of similarity.

One way to encode these similarities is through a weighted \emph{similarity graph}: data items are nodes, and the weight between two nodes is their similarity score. The weighted degree of a node is then its similarity-aware frequency. 
Prior work~\cite{LZ26b} initiated a systematic study of statistical problems defined on the similarity graph in the node-arrival data stream model. It gave constant-pass, sublinear-space algorithms for the diversity index ($\DI$), inverse-degree sampling ($L_{\DI}$), degree moments ($M_p$), and degree-moment sampling ($L_{M_p}$), together with lower bounds that are nearly tight in their dependence on the stream length. Several natural questions remained open. First, while diversity-index estimation admits a three-pass sublinear-space algorithm and requires linear space in one pass, its two-pass complexity was left unresolved. Second, the $L_{\DI}$- and $L_{M_p}$-sampling algorithms approximate each target probability only up to a multiplicative factor of $1\pm\eps$, leaving open whether perfect sampling is possible with the same number of passes and comparable space. This paper resolves these questions: we establish a sharp separation between two and three passes for diversity-index estimation, and give perfect $L_{\DI}$- and $L_{M_p}$-samplers with inverse-polynomially small failure probability and total-variation error.

\vspace{2mm}
\noindent{\bf Model and problems.\ }
Let $sim:\cU\times\cU\to[0,1]$ be a symmetric similarity function satisfying $sim(x,x)=1$ for every $x\in\cU$.  
The input is a node-arrival stream $\sigma=(\sigma_1,\ldots,\sigma_n)\in\cU^n$.  Each stream item is regarded as a distinct node, even when two items have the same label.  Write $[n]=\{1,\ldots,n\}$. The stream induces a complete weighted graph $G_\sigma$ on node set $[n]$, in which the pair $(i,j)$ has weight
$
w(i,j)=sim(\sigma_i,\sigma_j).
$
In particular, every node has a self-loop of unit weight.  The weighted degree of node $\sigma_j$ is
$$
d_j=d(\sigma_j) \coloneqq \sum_{i \in [n]}sim(\sigma_i,\sigma_j).
$$
Clearly, $1\le d_j\le n$.

A streaming algorithm makes one or more sequential passes over the node stream.  When $\sigma_j$ arrives, the algorithm may evaluate its similarity to nodes currently stored in memory through the known similarity function $sim(\cdot, \cdot)$.  Our goal is to solve the following problems while minimizing the space usage and the number of passes.
\begin{itemize}
	\item \emph{Diversity index.}  $\DI(\sigma)=\sum_{j \in [n]}d_j^{-1}.$
	\smallskip
	
	\item \emph{$L_{\DI}$-sampling.} The {\em $L_{\DI}$ distribution} on stream nodes is
	$
	\pi_{\DI}(j)=\frac{1}{\DI(\sigma)} d_j^{-1}
	$
	for any $j \in [n]$.
	The $L_{\DI}$-sampling problem asks the algorithm to output a node $j \in [n]$ according to this distribution.
	\smallskip
	
	\item \emph{$L_{M_p}$-sampling.} For a fixed constant $p>0$, define the $p$-th degree moment
	$
	M_p(\sigma)=\sum_{j \in [n]} d_j^p.
	$
	The {\em $L_{M_p}$ distribution} on stream nodes is
	$
	\pi_{M_p}(j)=\frac{1}{M_p(\sigma)}d_j^p.
	$
	The $L_{M_p}$-sampling problem asks the algorithm to output a node $j \in [n]$  according to this distribution.
\end{itemize}
When the stream $\sigma$ is clear, we write $\DI(\sigma)$ and $M_p(\sigma)$ as $\DI$ and $M_p$.

These definitions simultaneously generalize several classical stream statistics.  If $sim(\cdot, \cdot)$ is the equality function and an item $x$ occurs $f_x$ times, then each occurrence of $x$ has degree $f_x$.
Consequently, $\DI=\sum_x f_x f_x^{-1} = \sum_x 1$ is the number of distinct items, while $M_p=\sum_x f_x f_x^p = \sum_x f_x^{p+1}$ is the classical frequency moment $F_{p+1}$. Similarly, $L_{\DI}$-sampling becomes uniform sampling from the distinct item values, and $L_{M_p}$-sampling becomes
classical $L_{p+1}$-sampling. 
\smallskip

For a random variable $X$, let $\mathcal L(X)$ denote its distribution. For two finite distributions, their total variation distance is defined as $\TV(P,Q)=\frac12\sum_x\abs{P(x)-Q(x)}$. For the sampling problems, we use the following guarantees.
\begin{definition}[Perfect sampler]
	\label{def:perfect-sampling}
	Let $\pi$ be a target distribution on the stream nodes.  A
	streaming algorithm is a $\delta$-perfect sampler for $\pi$ if it outputs a
	stream node or $\FAIL$, and
	$$
	\Pr[J=\FAIL]\le\delta \qquad\text{and}\qquad \TV\bigl(\mathcal L(J\mid J\ne\FAIL),\pi\bigr)\le\delta.
	$$
	When this holds with $\delta=O(n^{-c})$ for any desired fixed constant $c>0$, we omit $\delta$ and simply call the algorithm a perfect sampler.
\end{definition}

\vspace{1mm}
\noindent{\bf Our results.\ }
We call $\widetilde X$ a $(1+\eps)$-approximation to $X\ge0$ if $\widetilde X=(1\pm\eps)X$, and write $x=a\pm b$ for $x\in[a-b,a+b]$. Subscripts in $O_c(\cdot)$ and $\Omega_c(\cdot)$ indicate constant dependence on $c$. Throughout, $\log$ denotes the logarithm to base $2$ and $\ln$ the natural logarithm. Upper bounds are measured in words and lower bounds in bits, with each node,
similarity value, counter, and $O(\log n)$-bit random value occupying $O(1)$ words. Our contributions are as follows.

For upper bounds, we work in the real-RAM model with free random bits; both assumptions are removed by standard techniques (discretization to $O(\log n)$ bits, and Nisan's generator~\cite{Nisan90} for the one random map that is too large to store, see Section~\ref{sec:L-DI-sampling}), with
inverse-polynomially small additional error absorbed into the $n^{-c}$ guarantee of Definition~\ref{def:perfect-sampling}.

\smallskip
\noindent{\bf Result 1: A sharp two-to-three-pass transition for diversity-index estimation.\ } We prove that every two-pass streaming algorithm that returns a
$1.1$-approximation to $\DI$ with probability at least $0.51$ requires $\Omega(n)$ bits of space, even for Boolean similarity and $O(\log n)$-bit node representations (Theorem~\ref{thm:streaming-main}).

Combined with the three-pass $\widetilde O(\sqrt n)$-space upper bound and the constant-pass $\Omega(\sqrt n)$-space lower bound of~\cite{LZ26b}, this nearly completes the pass--space landscape for constant accuracy: one and two passes require linear space; at three passes the complexity drops to $\widetilde\Theta(\sqrt n)$; and additional constant passes cannot improve the polynomial dependence on $n$.  This is an unusually abrupt pass transition; we are not aware of another natural statistic whose complexity drops from linear to $\sqrt{n}$ between two and three passes.

\smallskip
\noindent{\bf Result 2: Three-pass perfect $L_{\DI}$-sampling.\ }  For every fixed $c>0$, we give a three-pass $n^{-c}$-perfect $L_{\DI}$-sampler using $O_c(\sqrt n\log n)$ words of space (Theorem~\ref{thm:L-DI-upper}).  The sampler
of~\cite{LZ26b} had the same polynomial dependence on $n$ but incurred a multiplicative $1\pm\eps$ distortion in its output probabilities.

We also prove that every two-pass perfect $L_{\DI}$-sampler with constant failure probability requires $\Omega(n/\log^2 n)$ bits (Theorem~\ref{thm:L-DI-lb}), and that any constant-pass perfect sampler requires $\Omega(\sqrt n)$ bits (Theorem~\ref{thm:L-DI-constant-pass-lb}). Hence, the upper bound is optimal in its polynomial dependence on $n$, and three passes are necessary for sublinear-space perfect $L_{\DI}$-sampling, up to logarithmic factors.

\smallskip
\noindent{\bf Result 3: Two-pass perfect $L_{M_p}$-sampling.\ } For every fixed $p>0$, and every fixed $c>0$, we give a two-pass $n^{-c}$-perfect $L_{M_p}$-sampler using
$O_{p,c}\left(n^{1-1/(p+1)}\log n\right)$ words of space
(Theorem~\ref{thm:main}).  This improves the
two-pass $(1\pm\eps)$-multiplicatively approximate sampler of~\cite{LZ26b} to a perfect sampler, without changing the polynomial dependence on $n$.

Conversely, one pass requires $\Omega_p(n/\log^2 n)$ bits (Theorem~\ref{thm:LMp-one-pass-lb}), and every constant number of passes requires
$\Omega_p\left({n^{1-1/(p+1)}}/{\log^2 n}\right)$
bits (Theorem~\ref{thm:LMp-constant-pass-lb}).  Hence, the two-pass algorithm is optimal in the exponent of $n$, up to polylogarithmic factors.

\vspace{2mm}
\noindent{\bf Related work.}
We review several lines of work related to ours.

\smallskip
\noindent{\em The similarity-graph framework.\ } This paper builds on the similarity-graph framework introduced in \cite{LZ26b}.  The work \cite{LZ26b} defined the diversity index, degree moments, and their associated sampling problems.  This paper settles three questions left open there. Two recent works~\cite{Zhang26a,Zhang26b} studied two specific similarity functions, cosine similarity and the Gaussian kernel, in the same framework. They show that when the similarity function possesses suitable algebraic or geometric structure, one-pass sublinear-space algorithms are possible for estimating the diversity index and degree moments.

\smallskip
\noindent{\em Classical stream statistics.\ } Distinct elements and frequency moments are foundational problems in streaming
algorithms~\cite{AMS99,BJKST02,KNW10,IW05}.  Sampling from a frequency vector has likewise been studied extensively, including $\ell_0$-sampling and inverse sampling~\cite{Gibbons01,FIS05,CMR05}, relative-error $\ell_p$-sampling~\cite{MW10,AKO11,JST11},
and perfect $\ell_p$-sampling~\cite{JW18,JWZ22,WXZ25,SWZ25}.  When $sim(\cdot, \cdot)$ is the equality function, $\DI$ becomes $F_0$, $M_p$ becomes $F_{p+1}$, and our two node-sampling distributions induce $\ell_0$- and $\ell_{p+1}$-sampling on item values.  The access models are nevertheless quite different: in a classical
stream, each update identifies a coordinate of the frequency vector, whereas a node's weighted degree here is an implicit sum of pairwise similarities that
can be evaluated only against nodes stored in memory.  This restricted access leads to polynomial space requirements in our setting.

\smallskip
\noindent{\em Similarity-aware analysis of noisy streams.\ }
A related line of work studies classical stream statistics when nearby but nonidentical items should be treated as copies of the same entity.  Chen and
Zhang~\cite{CZ16} studied robust distinct-elements estimation in constant-dimensional Euclidean spaces and in metric spaces supporting suitable locality-sensitive hashing.  The follow-up work~\cite{CZ18} considered the
corresponding distinct-sampling problem, and Zhang~\cite{Zhang25} extended both
distinct-elements estimation and $\ell_0$-sampling to general metric spaces. Liu and Zhang~\cite{LZ26Mismatch} proposed a different noisy-data framework for estimating frequency moments of an unobserved ground-truth dataset, with the
approximation controlled by a mismatch-ambiguity parameter.  In contrast, this paper takes the observed weighted similarity graph itself as the object of computation.

\smallskip
\noindent{\em Node-arrival and graph-access models.\ }
Most graph-streaming research assumes an edge-arrival stream; see McGregor~\cite{McGregor14} for a survey.  In the node-arrival model, Emek, Halld\'orsson, and Ros\'en~\cite{EHR16}, and Cabello and P\'erez-Lantero~\cite{CP17} studied interval selection, where the stream consists of intervals whose intersections implicitly define the graph.  Cormode, Dark, and Konrad~\cite{CDK18} studied the
Caro--Wei bound in edge-arrival and {\em explicit} node-arrival streams, where an arriving node is accompanied by its edges to earlier nodes.  More
recently, Liu, Villalobos and Zhang~\cite{LVZ26} considered correlation clustering cost in node-arrival streams.  

\smallskip
\noindent{\em Degree statistics.\ }
Average degree, star counts, and higher degree-distribution moments have also been studied in sublinear-time graph-query models~\cite{Feige04,GR08,GRS11,ERS17}.
Those models provide combinations of uniform-node, degree, and neighbor queries and therefore differ substantially from our pairwise-access streaming
model.

\vspace{2mm}
\noindent{\bf Roadmap.\ }
Section~\ref{sec:DI-lower-bound} proves the two-pass lower bound for $\DI$. Section~\ref{sec:L-DI-sampling} gives the $L_{\DI}$ sampler and its lower bounds.  Section~\ref{sec:LMp-sampling} gives the two-pass $L_{M_p}$ sampler and the matching lower bounds.  Some probability and information-theoretic tools can be found in Appendix~\ref{sec:preliminaries}.

\section{A Two-Pass Lower Bound for Diversity-Index Estimation}
\label{sec:DI-lower-bound}

In this section, we show that any sublinear-space algorithm for accurately estimating $\DI$ requires at least three passes.

\begin{theorem}
	\label{thm:streaming-main}
	Any randomized two-pass streaming algorithm that, on every node-arrival stream $\sigma$ of length $n$, returns a $1.1$-approximation to $\DI(\sigma)$ with probability at least $0.51$ requires $\Omega(n)$ bits of space.  The lower bound holds even when the similarity function is Boolean and every stream item has an $O(\log n)$-bit representation.
\end{theorem}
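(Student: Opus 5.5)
The plan is to prove Theorem~\ref{thm:streaming-main} by reduction from a communication problem whose round structure matches two passes. Split the stream as $\sigma=(\sigma^A,\sigma^B)$, with Alice holding the first half and Bob the second. A two-pass streaming algorithm with space $s$ then yields a protocol in which Alice speaks first, with three messages of $s$ bits each ($A\to B\to A\to B$), after which Bob outputs. I will therefore look for a two-party function $F$ with the following properties.
\begin{enumerate}
\item Every three-message protocol for $F$ in which Alice speaks first needs $\Omega(n)$ bits.
\item Both players can locally map their inputs to $O(\log n)$-bit items under a fixed Boolean similarity, so that $\DI$ of the resulting stream differs by more than a factor $1.1^2$ between the $F=0$ and $F=1$ cases.
\end{enumerate}
The natural candidate for $F$ is a pointer-chasing function with four alternating pointer steps, starting on Alice's side. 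By the round-elimination lower bounds of Nisan--Wigderson and their refinements, this requires $\Omega(n)$ communication when only three messages are allowed and Alice must speak first. If round elimination for this exact variant is awkward, I would instead use a tree or "lifted" version of pointer chasing, proving its three-round hardness by an information-complexity argument with the tools in Appendix~\ref{sec:preliminaries}.

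\textbf{Encoding.} Each pointer $f(u)=v$ at level $\ell$ becomes an item $(\ell,u,v)$. The similarity between $(\ell,u,v)$ and $(\ell+1,u',v')$ is $1$ exactly when $v=u'$, and adjacency within a level is designed by gadgets. The difficulty is that $\DI=\sum_j d_j^{-1}$ is a symmetric global sum, while pointer chasing asks about a single path.

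To bridge this, I will replace a single chase by $\Theta(n)$ disjoint independent chases packed into the same stream (a direct-sum or "sum of many pointer chases" instance). Each chase contributes a constant-size gadget whose $\sum d_j^{-1}$ differs by a constant between outcomes, for example a terminal node whose degree is $1$ versus $\Theta(1)$ larger. A constant fraction of outcome-$1$ chases then shifts $\DI$ multiplicatively by more than $1.1^2$.

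Correspondingly, the communication problem becomes a gap-majority over many pointer-chasing instances with small individual domains. Its three-message hardness must be shown through a direct-sum argument: the information cost of the composed problem is at least $\Theta(n)$ times that of a single constant-domain instance, using the chain rule and an embedding of one instance into the packed stream with public randomness.

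\textbf{Main obstacle.} The step I expect to be hardest is obtaining a constant per-instance information lower bound for a constant-size, three-message pointer-chasing gadget under a product-like hard distribution. Its degree signature must also be a gadget in which the degree of the chase's endpoint genuinely depends on two alternations on each side. My first attempt would be to take the gadget as a layered graph of width $2$ and depth $4$. I would verify that the two passes provide one alternation too few by showing that Bob's first message, and then Alice's reply, carry only $o(1)$ information about the relevant bit; this is a round-elimination step applied per coordinate.

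Finally, standard amplification from success probability $0.51$ is unnecessary, because the gap version with distributional error $0.49$ still gives an $\Omega(n)$ information bound. This yields the claimed $\Omega(n)$-bit space lower bound, with $O(\log n)$-bit items and Boolean similarity by construction.
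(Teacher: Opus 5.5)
\textbf{Gap: the packing step makes the instances easy.} No per-instance bound can fix this. If the stream consists of $\Theta(n)$ disjoint gadgets of constant size $C$, every degree lies in $[1,C]$, every gadget contributes at least $1$, and $\DI=\Theta(n)$. Such inputs are easy in two passes, however the gadgets are wired:
\begin{itemize}
\item in Pass~1, pick $k$ uniform stream nodes;
\item in Pass~2, compute their exact degrees;
\item output $\frac nk\sum_{i\le k}d_{J_i}^{-1}$.
\end{itemize}
The variable $n\,d_J^{-1}$ has mean $\DI$ and variance at most $n\,\DI$. Chebyshev therefore gives a $1.1$-approximation with probability $0.51$ using $k=O(n/\DI)=O(1)$ words. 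More generally, if all inputs have $\DI\ge D$, this estimator needs only $O(n/D)$ words. So a distribution certifying an $\Omega(n)$ bound must have tiny $\DI$, meaning the answer hinges on a few low-degree nodes hidden among high-degree ones.

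The communication version fails the same way. A constant-gap majority over many independent constant-size instances is solved by publicly sampling $O(1)$ coordinates and exchanging their $O(1)$-size inputs. Direct-sum arguments need every coordinate to be solved, or gaps of order $\sqrt n$ as in Gap-Hamming, so they cannot give $\Omega(n)$ here.

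Two further problems:
\begin{itemize}
\item As you stated it (first pointer held by Alice, Alice speaks first, Bob outputs after three messages), four-step pointer chasing costs $O(\log n)$ bits. Alice sends $v_1$, Bob sends $v_2$, Alice sends $v_3$, and Bob computes $v_4$ himself.
\item With a pairwise similarity, linking $(\ell,u,v)$ to $(\ell+1,u',v')$ when $v=u'$ makes each degree record only local in- and out-degrees of the functional graph, not where a path ends.
\end{itemize}

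\textbf{Missing idea: a single needle, not many parallel instances.} The paper splits the stream into three contiguous blocks, so two passes give the five-message schedule $A\to B\to C\to A\to B\to C$. It reduces from Two-Step Indexing: Alice holds $X\in\{0,1\}^m$, Bob holds $Y\in[m]^m$, Charlie holds $Z\in[m]$, and the goal is $X_{Y_Z}$. The graph is built as follows.
\begin{itemize}
\item Alice's $m$ nodes form a clique.
\item Bob has $\gamma=20$ nodes $(\mathsf{B},Y_q,q,r)$ for each $q$, adjacent to Alice's node $i$ iff $i=Y_q$ and $X_i=1$.
\item Charlie has a clique of $\gamma m$ nodes, adjacent to every Bob node with $q\ne Z$.
\end{itemize}
Every node outside the $\gamma$ Bob nodes with $q=Z$ has degree at least $m$, and all of them together contribute at most $3$. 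The $q=Z$ nodes have degree $1+X_{Y_Z}$. Hence $\DI\ge20$ if $X_{Y_Z}=0$ and $\DI\le13$ if $X_{Y_Z}=1$. The large clique together with the $1/d$ nonlinearity selects coordinate $Z$; this is how pairwise similarities encode a two-step lookup.

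The $\Omega(m)$ bound for Two-Step Indexing is then a short information argument. Bob's first message is sent before $Z$ is known, so $I(Y_Z;\PiAone,\PiBone,Z)\le|\PiBone|/m$. By Pinsker, $J=Y_Z$ stays nearly uniform. Given that, Alice's two messages convey at most $(|\PiAone|+|\PiAtwo|)/m$ bits about $X_J$, which contradicts Fano unless the communication is $\Omega(m)$.
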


For the remainder of this section, set $m=n/41$. We omit floors and ceilings whenever they do not affect the asymptotic bounds.

\subsection{Proof Overview}

We prove Theorem~\ref{thm:streaming-main} through a three-party communication problem called \emph{Two-Step Indexing} ($\TSI$).  Alice receives a random bit vector $X=(X_1,\ldots,X_m)\in\bits^m$, Bob receives a random {\em pointer} vector $Y=(Y_1,\ldots,Y_m)\in[m]^m$, and Charlie receives a random index $Z\in[m]$.  Their goal is to output $X_{Y_Z}.$ The communication order is
$
\text{Alice}\to\text{Bob}\to\text{Charlie}
\to\text{Alice}\to\text{Bob}\to\text{Charlie}\to\text{output},
$
which is precisely the order induced by two passes over a stream partitioned into three contiguous blocks.

Viewing the stream as a graph, we partition the hard instance into three node sets: Alice creates one node for each index $i\in[m]$; Bob creates $\gamma:=20$ nodes for each index $q\in[m]$; and Charlie creates a clique on $\gamma m$ nodes. For every $q\neq Z$, each of Bob's $\gamma$ nodes corresponding to $q$ is adjacent to every node in Charlie's clique, and therefore has large degree. In contrast, the $\gamma$ nodes corresponding to $q=Z$ have no neighbors among Charlie's nodes. Each such node has degree $1+X_{Y_Z}.$ Hence, the $\gamma$ Bob nodes corresponding to $q=Z$ contribute $\frac{\gamma}{1+X_{Y_Z}}$ to the diversity index; their total contribution is $\gamma$ when $X_{Y_Z}=0$ and $\gamma/2$ when $X_{Y_Z}=1$. The total contribution from all remaining nodes is bounded by $3$. Consequently, if $X_{Y_Z}=0$, then $\DI\ge20$, whereas if $X_{Y_Z}=1$, then $\DI\le13$. Therefore, any $1.1$-approximation to $\DI$ determines the value of $X_{Y_Z}$.

We now return to the $\TSI$ problem. The communication bottleneck can be seen as follows. Bob sends his first message before learning Charlie's index $Z$, and hence before knowing which entry $J=Y_Z$ will eventually be relevant. Consequently, a short first message cannot reveal much information about this random pointer. After the first round, Alice may learn $Z$, but she still does not know the corresponding value $J=Y_Z$. Thus, across her two messages, Alice can convey only limited information about the randomly chosen query bit $X_J$. Therefore, no two-round protocol with short messages exists for $\TSI$. 

\subsection{Communication Problems and Hard Distributions}

We now formalize the above intuition, beginning with the problem definition and the construction of the hard instance.

\begin{definition}[Two-Step Indexing ($\TSI_m$)]
	\label{def:tsi}
	Alice, Bob, and Charlie receive $X\in\bits^m$,  $Y\in[m]^m$, and $Z\in[m]$, respectively, and their goal is to output $X_{Y_Z}$. They communicate in two rounds in the
	order
	$$
	\text{Alice}\xrightarrow{\PiAone}
	\text{Bob}\xrightarrow{\PiBone}
	\text{Charlie}\xrightarrow{\PiCone}
	\text{Alice}\xrightarrow{\PiAtwo}
	\text{Bob}\xrightarrow{\PiBtwo}
	\text{Charlie}\to\text{output}.
	$$
	We write
	$
	\Pi=\PiAone\circ\PiBone\circ\PiCone\circ\PiAtwo\circ\PiBtwo
	$
	for the communication transcript, excluding Charlie's final output.
\end{definition}

Two-Step Indexing is reminiscent of pointer chasing~\cite{PRV01}, as both problems involve a sequence of dependent lookups. However, the standard pointer-chasing lower bounds do not directly apply to $\TSI$: the latter has a different three-party input partition and communication order, and its final lookup returns a bit held by the first party. We therefore prove the communication lower bound for $\TSI$ directly using an information-theoretic argument.

We consider the following hard distribution for $\TSI_m$.

\begin{definition}[Distribution $\cD_{\mathrm{TSI}}$]
	In this distribution, $X$ is sampled uniformly at random from $\bits^m$,	$Y_1,\ldots,Y_m$ are sampled independently, uniformly at random from $[m]$, and
	$Z$ is sampled uniformly at random from $[m]$. 
\end{definition}

Let $J\coloneqq Y_Z$. Under $\cD_{\mathrm{TSI}}$, $J$ is uniform on $[m]$ and is independent of both $X$ and $Z$.  We will prove in the next subsection the following communication complexity bound for $\TSI_m$.

\begin{theorem}
	\label{thm:tsi-lb}
	For every constant $\delta<1/2$, any randomized two-round protocol for $\TSI_m$ with error at most $\delta$ has communication cost $\Omega(m)$ bits.
\end{theorem}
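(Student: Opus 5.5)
The plan is an information-theoretic argument under $\cD_{\mathrm{TSI}}$. By Yao's principle it suffices to consider deterministic protocols: fix the public randomness, and let every message have at most $c$ bits. Write $W=(\PiAone,\PiBone,Z)$. The goal is to show that Charlie's success probability is at most $1/2+O(\sqrt{c/m})$. That bound forces $c=\Omega(m)$ for every constant $\delta<1/2$.

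\emph{Step 1 (structure of the output).} Unwinding the communication order:
\begin{itemize}
\item $\PiBone$ is a function of $(Y,\PiAone)$;
\item $\PiCone$ is a function of $(Z,\PiBone)$, hence of $W$;
\item $\PiAtwo$ is a function of $(X,\PiAone,\PiCone)$;
\item $\PiBtwo$ is a function of $(Y,\PiAone,\PiCone,\PiAtwo)$.
\end{itemize}
Consequently Charlie's output is a function of $(Y,W,\PiAtwo)$ alone; $X$ enters only through Alice's two messages.

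\emph{Step 2 (conditional independence).} Conditioned on $\PiAone$, the inputs $X$ and $Y$ are independent. Further conditioning on $\PiBone$ (a function of $Y,\PiAone$) and on $Z$ (independent of everything) preserves this rectangle structure. Since $\PiAtwo$ is determined by $X$ and $W$, we get $(X,\PiAtwo)\perp Y$ given $W$. In particular $J=Y_Z$ is independent of $(X,\PiAtwo)$ given $W$. Moreover, given $W$ and $J$, the remaining coordinates of $Y$ are still independent of $(X,\PiAtwo)$. Hence, given $W$ and $J=j$, Charlie's best guess of $X_j$ is a function of $(W,\PiAtwo)$ alone, and
$$\Pr[\text{success}]\le \E_W\Bigl[\sum_j \mu_W(j)\,\bigl(\tfrac12+\alpha_j(W)\bigr)\Bigr],$$
where $\mu_W$ is the law of $J$ given $W$, and $\alpha_j(W)$ is the advantage of predicting $X_j$ from $\PiAtwo$ given $W$.

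\emph{Step 3 (two averaging bounds).} First, $J$ is nearly uniform given $W$. By the independence of the coordinates of $Y$,
$$\sum_q I(Y_q;\PiBone\mid\PiAone)\le I(Y;\PiBone\mid\PiAone)\le c.$$
Since $Z$ is uniform and independent, $I(J;\PiBone\mid \PiAone,Z)\le c/m$. Pinsker's inequality then gives $\E_W\,\TV(\mu_W,\Unif[m])\le O(\sqrt{c/m})$.

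Second, the advantages average out. By the chain rule over independent bits $X_j$,
$$\sum_j I(X_j;\PiAone,\PiAtwo\mid \PiBone,Z)\le O(c),$$
because $\PiBone$ carries information about $X$ only through $\PiAone$. Pinsker's inequality and Jensen's inequality then give $\frac1m\sum_j\E_W[\alpha_j(W)]\le O(\sqrt{c/m})$.

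Combining the two bounds, $\sum_j\mu_W(j)\alpha_j\le \frac1m\sum_j\alpha_j+\TV(\mu_W,\Unif[m])$. Taking expectations yields success probability $\le 1/2+O(\sqrt{c/m})$.

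The main obstacle is Step 2. One must handle correctly that Alice's second message does depend on $Y$, through Charlie's relay of the short message $\PiBone$. The point is to condition on exactly $W$, so that all of Alice's knowledge about $J$ is captured by the $O(c/m)$ bits of information that $\PiBone$ carries about the single random coordinate $Y_Z$. A secondary subtlety is in the second bound of Step 3: it requires checking that conditioning on $\PiBone$ does not increase the information about $X$ beyond what is in $\PiAone$ and $\PiAtwo$. This follows from the Markov chain $X\to\PiAone\to\PiBone$ given $Y\perp X$.
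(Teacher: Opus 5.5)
Your proposal is correct and follows the paper's proof essentially step for step. It uses the same conditioning on $\Piqry=(\PiAone,\PiBone,Z)$, the same rectangle property $(X,\PiAtwo)\perp Y\mid\Piqry$, the same bound $I(J;\Piqry)\le|\PiBone|/m$ followed by Pinsker, and the same total-variation test-function step. The only difference is that you bound the success probability directly through per-coordinate advantages and Pinsker, whereas the paper goes through Fano's inequality and the identity for $1-H(X_J\mid Y,\Piqry,\PiAtwo)$.

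One bookkeeping fix is needed. Since $\alpha_j(W)$ is measured against $1/2$, and $X_j$ may already be biased given $\PiAone$, the quantity you must bound is $\sum_j I(X_j;\PiAone,\PiBone,Z,\PiAtwo)$, not the version conditioned on $(\PiBone,Z)$; given $(\PiBone,Z)$ the $X_j$ need not even be independent. This quantity is at most $|\PiAone|+|\PiAtwo|$, by the unconditional independence of the $X_j$ together with exactly the Markov-chain fact you flag, $I(X;\PiBone,Z\mid\PiAone)=0$.
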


We next define the graph communication problem used in the reduction.

\begin{definition}[Distributed Diversity Index ($\DDI_m$)]
	\label{def:ddi}
	Alice, Bob, and Charlie hold node sets $V_A,V_B,V_C$, respectively.  A fixed symmetric similarity function $f$ is known to all three parties.  They communicate in the same
	two-round order as in Definition~\ref{def:tsi}, and Charlie must output a multiplicative approximation to the diversity index of the graph induced by $V_A\cup V_B\cup V_C$.
\end{definition}

Fix $\gamma=20$.  Consider the universe consisting of the following labeled items:
\begin{eqnarray*}
	\cU_A&=&\{(\mathsf A,i,x):i\in[m],\ x\in\bits\},\\
	\cU_B&=&\{(\mathsf B,j,q,r):j,q\in[m],\ r\in[\gamma]\},\\
	\cU_C&=&\{(\mathsf C,z,\ell):z\in[m],\ \ell\in[\gamma m]\}.
\end{eqnarray*}

We define a Boolean symmetric similarity function $f$ as follows.  Every
item is similar to itself.  For distinct items:
\begin{enumerate}
	\item all $\mathsf A$-type items are mutually similar;
	\item all $\mathsf C$-type items are mutually similar;
	\item
	$f\bigl((\mathsf A,i,x),(\mathsf B,j,q,r)\bigr)=1$ if and only if
	$i=j\ \text{and}\ x=1$;
	\item
	$f\bigl((\mathsf B,j,q,r),(\mathsf C,z,\ell)\bigr)=1$ if and only if 
	$q\neq z$;
	\item every other pair of distinct items has similarity zero.
\end{enumerate}

\begin{definition}[Distribution $\cD_{\mathrm{DDI}}$]
	\label{def:Dddi}
	Given $(X,Y,Z)\sim\cD_{\mathrm{TSI}}$, assign Alice, Bob, and Charlie the following node sets respectively: 
	\begin{eqnarray*}
		V_A(X)&=&\{a_i=(\mathsf A,i,X_i):i\in[m]\},\\
		V_B(Y)&=&\{b_{q,r}=(\mathsf B,Y_q,q,r):q\in[m],\ r\in[\gamma]\},\\
		V_C(Z)&=&\{c_\ell=(\mathsf C,Z,\ell):\ell\in[\gamma m]\}.
	\end{eqnarray*}
	Let $\cD_{\mathrm{DDI}}$ denote the induced distribution on
	$(V_A,V_B,V_C)$. 
\end{definition}

The total number of nodes is
$
(2\gamma+1)m=41m = n.
$
Each node label contains only a constant number of integers in $\{1, \ldots, \gamma m\}$ where $\gamma m = \Theta(n)$,
so it has an $O(\log n)$-bit representation; moreover, $f$ can be evaluated
directly from two labels.

We will show in Section~\ref{sec:reduction-TSI-DDI} that Theorem~\ref{thm:tsi-lb} implies the communication lower bound for  $\DDI_m$.
\begin{theorem}
	\label{thm:ddi-lb}
	Any randomized two-round protocol that, on the distribution
	$\cD_{\mathrm{DDI}}$, returns a $1.1$-approximation to the diversity index
	with error probability at most $0.49$ requires $\Omega(m)$ bits of
	communication.
\end{theorem}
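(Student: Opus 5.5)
We reduce $\TSI_m$ to $\DDI_m$ and then invoke Theorem~\ref{thm:tsi-lb}. Given a two-round protocol $P$ for $\DDI_m$ that works on $\cD_{\mathrm{DDI}}$, we build a protocol for $\TSI_m$ on $\cD_{\mathrm{TSI}}$. Alice, holding $X$, constructs $V_A(X)$ locally. Bob, holding $Y$, constructs $V_B(Y)$. Charlie, holding $Z$, constructs $V_C(Z)$. They then run $P$ in the same communication order, so the two protocols have identical cost. When $P$ finishes, Charlie compares the estimate $\widetilde{\DI}$ with the threshold $16$ and outputs $0$ if $\widetilde{\DI}\ge 16$ and $1$ otherwise. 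The map $(X,Y,Z)\mapsto(V_A,V_B,V_C)$ pushes $\cD_{\mathrm{TSI}}$ forward exactly to $\cD_{\mathrm{DDI}}$. Hence the derived protocol errs with probability at most $0.49<1/2$, and Theorem~\ref{thm:tsi-lb} (applied distributionally, or via Yao's principle to pass between randomized and distributional error) gives the $\Omega(m)$ bound.

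The core step is to verify the gap claimed in the overview: $\DI\ge 20$ when $X_J=0$ and $\DI\le 13$ when $X_J=1$, where $J=Y_Z$. We compute degrees directly.
\begin{itemize}
\item Each Charlie node $c_\ell$ is similar to the other $\gamma m-1$ clique nodes and to every $b_{q,r}$ with $q\ne Z$. Its degree is therefore at least $\gamma m$, so the whole clique contributes at most $\gamma m\cdot\frac{1}{\gamma m}=1$.
\item Each Alice node $a_i$ is similar to all $m$ Alice nodes, since all $\mathsf A$-type items are mutually similar. Its degree is at least $m$, so the Alice nodes contribute at most $1$.
\item For $q\ne Z$, node $b_{q,r}$ is adjacent to all $\gamma m$ clique nodes. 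Its degree is at least $\gamma m$, so these nodes contribute at most $\gamma m\cdot\frac{1}{\gamma m}\le 1$.
\item For $q=Z$, node $b_{Z,r}$ has no clique neighbors. Bob nodes are never mutually similar, since distinct labels in $\cU_B$ have similarity $0$. Its only possible neighbor is $a_{Y_Z}$, which is adjacent exactly when $X_{Y_Z}=1$. Its degree is therefore exactly $1+X_J$.
\end{itemize}
Consequently $\DI=\gamma/(1+X_J)+R$ with $0\le R\le 3$.

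With $\gamma=20$, this gives $\DI\in[20,23]$ when $X_J=0$ and $\DI\in[10,13]$ when $X_J=1$. A $1.1$-approximation maps the first interval into $[20/1.1,\,23\cdot 1.1]\supseteq[18.1,25.3]$ and the second into $[10/1.1,\,13\cdot1.1]\subseteq[9,14.3]$. These ranges are disjoint, so the threshold $16$ separates them. This holds for either convention of multiplicative approximation, $(1\pm0.1)$ or within a factor $1.1$.

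The main thing to check is the bookkeeping of the reduction rather than any hard inequality. First, each party's node set must depend only on its own input: $V_B$ depends only on $Y$ because the label $(\mathsf B,Y_q,q,r)$ uses only $Y$, and $V_C$ depends only on $Z$. Second, the similarity function $f$ must be fixed and input-independent, which it is. Third, the error notion must match Theorem~\ref{thm:tsi-lb}: the $\DDI$ hypothesis is an error bound on $\cD_{\mathrm{DDI}}$, whereas Theorem~\ref{thm:tsi-lb} is stated for worst-case error. I would resolve this by observing that the proof of Theorem~\ref{thm:tsi-lb} in the next subsection is information-theoretic on $\cD_{\mathrm{TSI}}$, so it yields a distributional lower bound for error $\delta<1/2$. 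If it does not, I would instead use the standard fact that a public-coin randomized protocol with distributional error $0.49$ can be fixed to a deterministic one with the same error, and apply the distributional version.
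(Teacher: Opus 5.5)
Your proof is correct and follows the paper's argument. It uses the same embedding $(X,Y,Z)\mapsto(V_A,V_B,V_C)$, the degree bookkeeping of Lemma~\ref{lem:degree-contributions} and Claim~\ref{claim:di-gap} giving $\DI\ge 20$ versus $\DI\le 13$, the threshold at $16$, and an appeal to Theorem~\ref{thm:tsi-lb}; your remark that that theorem's proof is really a distributional bound under $\cD_{\mathrm{TSI}}$ makes explicit a step the paper leaves implicit. One harmless slip: $[20/1.1,\,25.3]$ is contained in, not a superset of, $[18.1,25.3]$, and under the $(1\pm0.1)$ convention the lower end is $18$. Since $18>16>14.3$, the threshold still separates the two cases.
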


This result immediately gives the corresponding space lower bound for streaming diversity index.

\begin{proof}[Proof of Theorem~\ref{thm:streaming-main}]
	Suppose a two-pass streaming algorithm $\mathcal A$ uses $s$ bits of space and returns a $1.1$-approximation with error at most $0.49$.
	Alice first runs $\mathcal A$ on the block $V_A$ and sends the resulting $s$-bit memory state $\PiAone$ to Bob.  Bob continues on $V_B$ and sends the state $\PiBone$ to Charlie.  Charlie continues on $V_C$ and sends the state $\PiCone$ back to Alice.  In the second pass, Alice again processes $V_A$ and sends $\PiAtwo$ to Bob; Bob processes $V_B$ and sends
	$\PiBtwo$ to Charlie; Charlie processes $V_C$ and outputs whatever $\mathcal A$ outputs.  The resulting two-round $\DDI$ protocol communicates at most
	$5s$ bits.  Theorem~\ref{thm:ddi-lb} gives $5s=\Omega(m)$, so $s=\Omega(m)=\Omega(n)$.  
	
	Finally, by the definition of the distribution $\cD_{\mathrm{DDI}}$, the similarity function is Boolean, and each node can be represented using $O(\log n)$ bits.
\end{proof}

\subsection{Proof of Theorem~\ref{thm:tsi-lb}}

By Yao's minimax principle, it is enough to prove that every deterministic protocol having error at most $\delta$ under $\cD_{\mathrm{TSI}}$ uses $\Omega(m)$ bits.  Fix such a deterministic protocol.  

For a random message $M$, we use $|M|$ to denote the worst-case length (in bits) of $M$ over all inputs in the support of the hard distribution. We assume without loss of generality that every message is padded to its worst-case length. It is enough to prove that
$|\PiAone|+|\PiBone|+|\PiAtwo|=\Omega(m).$

\vspace{2mm}
\noindent{\bf Protocol augmentation.\ }
Define the \emph{query transcript}
$$
\Piqry\coloneqq(\PiAone,\PiBone,Z).
$$
We augment the second round of the protocol as follows: First, before Alice sends $\PiAtwo$, we reveal the entire query transcript $\Piqry$ to her.  This can only help Alice, since in the original protocol she only receives $\PiCone$, which is a deterministic function of $(\PiAone,\PiBone,Z)$, and thus can be
reconstructed from $\Piqry$. Second, we reveal $(Y,\Piqry,\PiAtwo)$ to Charlie at the end of the second round before he outputs the answer.  This can only
help Charlie. Therefore there is a decoder
$
\widehat W=\widehat W(Y,\Piqry,\PiAtwo)
$
that predicts
$
W\coloneqq X_J=X_{Y_Z}
$
with error probability at most $\delta$.
\smallskip

The following lemma shows that if the message sizes are small, then after the first round, the query index $J$ is still almost uniform.
\begin{lemma}
	\label{lem:query-nearly-uniform}
	Under the setup above,
	$
	I(J;\Piqry)\leq {|\PiBone|}/{m}.
	$
	Consequently, letting
	$
	p_\pi(j)\coloneqq\Pr[J=j\mid\Piqry=\pi]
	$
	and $U_m$ denote the uniform distribution on $[m]$,
	\begin{equation*}
		\label{eq:query-nearly-uniform}
		\E_{\Piqry}\bigl[\TV(p_{\Piqry},U_m)\bigr] \leq
		\sqrt{\frac{\ln2}{2}\,\frac{|\PiBone|}{m}}.
	\end{equation*}
\end{lemma}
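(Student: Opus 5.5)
We first bound $I(J;\Piqry)$ and then convert it to a total-variation bound via Pinsker's inequality and Jensen. Write $\Piqry=(\PiAone,\PiBone,Z)$, so by the chain rule
$$
I(J;\Piqry)=I(J;\PiAone)+I(J;Z\mid \PiAone)+I(J;\PiBone\mid \PiAone,Z).
$$
The plan is to show that the first two terms vanish and the third is at most $|\PiBone|/m$.

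For the first term, $\PiAone$ is a deterministic function of $X$. The pair $(Y,Z)$ is independent of $X$, so $J=Y_Z$ is independent of $\PiAone$, and $I(J;\PiAone)=0$. For the second term, $J$ is independent of $(X,Z)$ under $\cD_{\mathrm{TSI}}$: conditioned on $Z=z$ and $X$, $J=Y_z$ is uniform. Hence $I(J;Z,\PiAone)=0$, and in particular $I(J;Z\mid\PiAone)=0$.

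The key step is the third term. Here $\PiBone$ is a function of $(\PiAone,Y)$ only, with no dependence on $Z$. Conditioned on $\PiAone$, the variables $Y$ and $Z$ remain independent, with $Y$ having i.i.d.\ uniform coordinates and $Z$ uniform. So, conditioning throughout on $\PiAone$,
$$
I(J;\PiBone\mid\PiAone,Z)=\frac1m\sum_{z\in[m]} I(Y_z;\PiBone\mid \PiAone, Z=z)=\frac1m\sum_{z} I(Y_z;\PiBone\mid\PiAone),
$$
where the second equality holds because the event $Z=z$ is independent of $(Y,\PiAone,\PiBone)$. The coordinates $Y_z$ are mutually independent given $\PiAone$. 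By superadditivity of mutual information for independent variables (the chain rule plus $I(Y_z;Y_{<z}\mid\PiAone)=0$),
$$
\sum_z I(Y_z;\PiBone\mid\PiAone)\le I(Y;\PiBone\mid\PiAone)\le H(\PiBone)\le|\PiBone|.
$$
This gives $I(J;\Piqry)\le|\PiBone|/m$. The main care needed is in justifying that conditioning on $Z=z$ does not change the joint law of $(Y,\PiAone,\PiBone)$. This follows because $Z$ is independent of $(X,Y)$ and the first two messages do not involve $Z$.

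For the consequence, note that $J$ is uniform, so $I(J;\Piqry)=\E_{\Piqry}[D(p_{\Piqry}\,\|\,U_m)]$ with $D$ measured in bits. Pinsker's inequality in bits gives $\TV(p_\pi,U_m)\le\sqrt{\tfrac{\ln 2}{2}D(p_\pi\|U_m)}$. Jensen's inequality (concavity of the square root) then yields
$$
\E_{\Piqry}[\TV(p_{\Piqry},U_m)]\le\sqrt{\tfrac{\ln2}{2}\,I(J;\Piqry)}\le\sqrt{\tfrac{\ln2}{2}\,\tfrac{|\PiBone|}{m}}.
$$
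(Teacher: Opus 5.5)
Your proof is correct and follows essentially the same route as the paper: you eliminate the $\PiAone$ and $Z$ contributions by independence, average over the uniform $Z$ to get $\frac1m\sum_i I(Y_i;\PiBone\mid\PiAone)$, bound this sum by $I(Y;\PiBone\mid\PiAone)\le|\PiBone|$ using conditional independence of the coordinates of $Y$, and finish with Pinsker and Jensen. The differences are cosmetic: you split the chain rule into three terms instead of two, and you justify the superadditivity step via the chain rule where the paper uses subadditivity of conditional entropy.
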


\begin{proof}
	Because $\PiAone$ is a deterministic function of $X$, and $(Y,Z)$ is
	independent of $X$, the message $\PiAone$ is independent of $(J,Z)$.
	Moreover, $J=Y_Z$ is independent of $Z$, since for every $j,z\in[m]$,
	$$
	\Pr[J=j\mid Z=z] = \Pr[Y_z=j] = \frac1m = \Pr[J=j].
	$$
	Thus $I(J;\PiAone,Z)=0$, and the chain rule gives
	\begin{equation}
		\label{eq:why-chain-rule}
		I(J;\Piqry) = I(J;\PiAone,\PiBone,Z) = I(J;\PiAone,Z) + I(J;\PiBone\mid\PiAone,Z) = I(J;\PiBone\mid\PiAone,Z).
	\end{equation}
	
	Since $Z$ is uniform and independent of $(Y,\PiAone,\PiBone)$,
	conditioning on $Z=i$ simply selects the coordinate $Y_i$.  Therefore,
	\begin{equation}
		\label{eq:random-coordinate-average}
		I(J;\PiBone\mid\PiAone,Z) = I(Y_Z;\PiBone\mid\PiAone,Z)
		= \frac1m\sum_{i\in[m]} I(Y_i;\PiBone\mid\PiAone).
	\end{equation}
	
	The coordinates of $Y$ remain mutually independent after conditioning on $\PiAone$, because $\PiAone$ depends only on the independent variable
	$X$.  Therefore, 
	\begin{eqnarray}
		\sum_{i\in[m]}I(Y_i;\PiBone\mid\PiAone)
		&=&H(Y\mid\PiAone)
		-\sum_{i\in[m]}H(Y_i\mid\PiAone,\PiBone) \nonumber \\
		&\leq& H(Y\mid\PiAone)-H(Y\mid\PiAone,\PiBone) \nonumber \\
		&=&I(Y;\PiBone\mid\PiAone) \leq H(\PiBone)\leq |\PiBone|,
		\label{eq:independent-coordinate-info}
	\end{eqnarray}
	where the first inequality uses subadditivity of conditional entropy.  Combining
	\eqref{eq:why-chain-rule}, \eqref{eq:random-coordinate-average}, and \eqref{eq:independent-coordinate-info} proves
	$I(J;\Piqry)\leq {|\PiBone|}/{m}$.
	
	Since $J$ is uniformly distributed, we have
	$
	I(J;\Piqry) =\E_{\Piqry} \left[D_{\mathrm{KL}} \bigl(p_{\Piqry}\|U_m\bigr)\right].
	$
	Applying Pinsker's inequality pointwise and then Jensen's inequality gives
	\begin{eqnarray*}
		\E_{\Piqry}\bigl[\TV(p_{\Piqry},U_m)\bigr]
		&\leq& \E_{\Piqry}
		\left[\sqrt{\frac{\ln2}{2} D_{\mathrm{KL}}(p_{\Piqry}\Vert U_m)}\right] \\
		&\leq& \sqrt{\frac{\ln2}{2}
			\E_{\Piqry}\bigl[D_{\mathrm{KL}}(p_{\Piqry}\Vert U_m)\bigr]} \\
		&=& \sqrt{\frac{\ln2}{2} I(J;\Piqry)}
		\leq \sqrt{\frac{\ln2}{2}\cdot\frac{|\PiBone|}{m}},
	\end{eqnarray*}
	which proves the second item of the lemma.
\end{proof}

We use the following independence properties in the communication process.
\begin{lemma}
	\label{lem:query-conditional-independence}
	For the augmented protocol, we have
	$
	(X,\PiAtwo)\perp Y\mid\Piqry.
	$
\end{lemma}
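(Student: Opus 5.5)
We prove the conditional independence with the standard rectangle argument for deterministic protocols. The one subtlety is that the augmented protocol lets Alice see $Z$.

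\textbf{Step 1: the first round.} First we show $X\perp Y\mid \Piqry$. The inputs $X$, $Y$, $Z$ are mutually independent under $\cD_{\mathrm{TSI}}$. The message $\PiAone$ is a deterministic function $a(X)$, and $\PiBone$ is a deterministic function $b(Y,\PiAone)$. Fix a value $\pi=(\alpha,\beta,z)$ of $\Piqry$. The event $\{\Piqry=\pi\}$ is the intersection of three events: $\{a(X)=\alpha\}$, which depends only on $X$; $\{b(Y,\alpha)=\beta\}$, which depends only on $Y$; and $\{Z=z\}$, which depends only on $Z$. So it is a product event. Therefore
$$
\Pr[X=x,Y=y\mid \Piqry=\pi]=\frac{\Pr[X=x]\one[a(x)=\alpha]}{\Pr[a(X)=\alpha]}\cdot\frac{\Pr[Y=y]\one[b(y,\alpha)=\beta]}{\Pr[b(Y,\alpha)=\beta]},
$$
where we used that $Z$ is independent of $(X,Y)$ to cancel the factor $\Pr[Z=z]$. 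The right-hand side factorizes, which gives $X\perp Y\mid\Piqry$.

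\textbf{Step 2: Alice's second message.} In the augmented protocol, $\PiAtwo$ is a deterministic function $g(X,\Piqry)$ of Alice's input and the revealed query transcript. In the original protocol it was a function of $X$, $\PiAone$ and $\PiCone$, and $\PiCone$ is itself a function of $\Piqry$. Conditioned on $\Piqry=\pi$, the pair $(X,\PiAtwo)=(X,g(X,\pi))$ is therefore a function of $X$ alone. Any function of $X$ stays independent of $Y$ given $\Piqry=\pi$, so $(X,\PiAtwo)\perp Y\mid\Piqry$ follows from Step 1.

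\textbf{Main obstacle.} The only delicate point is making sure that nothing Alice sees in the second round depends on $Y$ except through $\Piqry$. This holds because $\PiCone$ is computed by Charlie from $(\PiBone,Z)$ and the earlier transcript, all of which lie in $\Piqry$. Public randomness has already been fixed by Yao's principle, so it causes no difficulty. For this reason, revealing the full $\Piqry$ to Alice is harmless: it keeps the conditioning event a product set in $X$ and $Y$.
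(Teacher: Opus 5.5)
Your proof is correct and follows essentially the same route as the paper. You write $\{\Piqry=\pi\}$ as the intersection of an $X$-event, a $Y$-event and a $Z$-event, so the conditional law of $(X,Y)$ factorizes, and you then use that $\PiAtwo$ is a deterministic function of $(X,\Piqry)$. The only omission is that $\pi$ should range over values of positive probability so that your denominators are nonzero, which the paper states explicitly.
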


\begin{proof}
	Fix a query transcript
	$\pi=(\pi^A_1,\pi^B_1,z)$ of positive probability.  Since the protocol is
	deterministic, there are functions $g_A$ and $g_B$ such that
	$\PiAone=g_A(X)$, and
	$  \PiBone=g_B(Y,\PiAone)$.
	For notational convenience, define
	$$
	\mu_\pi(x) \coloneqq\Pr[X=x]\,\1\{g_A(x)=\pi^A_1\}
	$$
	and
	$$
	\nu_\pi(y)
	\coloneqq\Pr[Y=y]\,\1\{g_B(y,\pi^A_1)=\pi^B_1\}.
	$$
	Since $X,Y,Z$ are independent, we have
	\begin{equation}
		\label{eq:perp-1}
		\Pr[X=x,Y=y,\Piqry=\pi] = \Pr[Z=z]\mu_\pi(x)\nu_\pi(y).
	\end{equation}
	On the other hand,
	\begin{equation}
		\label{eq:perp-2}
		\Pr[\Piqry=\pi] = \Pr[Z=z] \left(\sum_{x'}\mu_\pi(x')\right) \left(\sum_{y'}\nu_\pi(y')\right).
	\end{equation}
	Dividing \eqref{eq:perp-1} by \eqref{eq:perp-2}, we obtain
	\begin{eqnarray*}
		\Pr[X=x,Y=y\mid\Piqry=\pi]
		&=&
		\frac{\mu_\pi(x)}{\sum_{x'}\mu_\pi(x')} \cdot \frac{\nu_\pi(y)}{\sum_{y'}\nu_\pi(y')}\\
		&=&\Pr[X=x\mid\Piqry=\pi] \Pr[Y=y\mid\Piqry=\pi].
	\end{eqnarray*}
	
	In the augmented protocol, $\PiAtwo$ is a deterministic function of $(X,\Piqry)$.  Once $\Piqry$ is fixed, adding such a function to the $X$ side cannot create dependence with $Y$.  This gives the lemma.
\end{proof}

The following lemma shows that if the messages are of small sizes, then Alice reveals limited information about the queried bit.
\begin{lemma}
	\label{lem:alice-target-information}
	For the augmented protocol, we have
	\begin{equation*}
		\label{eq:final-info-upper}
		1-H(X_J\mid Y,\Piqry,\PiAtwo) \leq \frac{|\PiAone|+|\PiAtwo|}{m} + \sqrt{\frac{\ln2}{2}\,\frac{|\PiBone|}{m}}.
	\end{equation*}
\end{lemma}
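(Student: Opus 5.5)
The plan is to use the conditional independence of Lemma~\ref{lem:query-conditional-independence} to remove $Y$ from the conditioning, except through the conditional law of $J$. Then the question becomes how much $(\Piqry,\PiAtwo)$ reveals about a nearly uniform coordinate of $X$.

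\textbf{Step 1: remove $Y$.} Since $J=Y_Z$ and $Z$ is part of $\Piqry$, $J$ is a function of $(Y,\Piqry)$. By Lemma~\ref{lem:query-conditional-independence}, $(X,\PiAtwo)\perp Y\mid\Piqry$. Hence, conditioned on $\Piqry=\pi$ and $J=j$, the pair $(X,\PiAtwo)$ has the same law as conditioned on $\Piqry=\pi$ alone. Additional information in $Y$ beyond $J$ is irrelevant, so
\[
H(X_J\mid Y,\Piqry,\PiAtwo)=H(X_J\mid J,\Piqry,\PiAtwo)=\E_{\pi}\Bigl[\sum_j p_\pi(j)\,H(X_j\mid \Piqry=\pi,\PiAtwo)\Bigr].
\]
Writing $h_\pi(j)\coloneqq H(X_j\mid\Piqry=\pi,\PiAtwo)\in[0,1]$, this gives
\[
1-H(X_J\mid Y,\Piqry,\PiAtwo)=\E_\pi\Bigl[\sum_j p_\pi(j)(1-h_\pi(j))\Bigr].
\]

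\textbf{Step 2: replace $p_\pi$ by the uniform distribution.} Because $0\le 1-h_\pi(j)\le 1$,
\[
\sum_j p_\pi(j)(1-h_\pi(j))\le \frac1m\sum_j(1-h_\pi(j))+\TV(p_\pi,U_m).
\]
Taking expectations, the TV term is bounded by $\sqrt{\frac{\ln 2}{2}\,|\PiBone|/m}$ using Lemma~\ref{lem:query-nearly-uniform}.

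\textbf{Step 3: a uniform coordinate of $X$.} It remains to show
\[
\frac1m\sum_j\bigl(1-H(X_j\mid\Piqry,\PiAtwo)\bigr)\le \frac{|\PiAone|+|\PiAtwo|}{m}.
\]
By subadditivity,
\[
\sum_j H(X_j\mid\Piqry,\PiAtwo)\ge H(X\mid\Piqry,\PiAtwo),
\]
so it suffices to show $H(X\mid\Piqry,\PiAtwo)\ge m-|\PiAone|-|\PiAtwo|$. Write
\[
H(X\mid \Piqry,\PiAtwo)=H(X\mid\Piqry)-I(X;\PiAtwo\mid\Piqry)\ge H(X\mid\Piqry)-|\PiAtwo|.
\]
Next, $H(X\mid \PiAone,\PiBone,Z)=H(X\mid\PiAone)$ because $(\PiBone,Z)$ is conditionally independent of $X$ given $\PiAone$: $\PiBone$ is a function of $(Y,\PiAone)$, and $Y,Z$ are independent of $X$. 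Finally, $H(X\mid\PiAone)\ge m-|\PiAone|$. Combining these gives the claim.

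\textbf{Main obstacle.} The delicate step is Step 1. One must check that conditioning on all of $Y$, rather than just $J$, does not leak information about $X$. This holds exactly because of the conditional independence in Lemma~\ref{lem:query-conditional-independence}, together with the fact that $J$ is determined by $(Y,Z)$. The remaining steps are routine entropy bookkeeping.
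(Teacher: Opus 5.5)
Your proposal is correct and follows the paper's proof essentially step for step: the same conditional-independence identity writing $1-H(X_J\mid Y,\Piqry,\PiAtwo)$ as a $p_\pi$-weighted average of $1-H(X_j\mid\Piqry=\pi,\PiAtwo)$, the same total-variation comparison with $U_m$ via Lemma~\ref{lem:query-nearly-uniform}, and the same subadditivity-plus-chain-rule bound on $I(X;\Piqry,\PiAtwo)$. The only differences are cosmetic. You pass through $H(X_J\mid J,\Piqry,\PiAtwo)$ instead of grouping values of $y$ by $y_z$ directly, and you phrase the last step with entropies rather than mutual information.
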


\begin{proof}
	For every transcript value $\pi$ and coordinate $j\in[m]$, define
	$$
	\lambda_j(\pi) \coloneqq 1-H(X_j\mid\Piqry=\pi,\PiAtwo).
	$$
	Since $X_j$ is a bit,
	$0\leq \lambda_j(\pi)\leq1$.
	
	We first show the following identity
	\begin{equation}
		\label{eq:target-entropy-expansion}
		1-H(X_J\mid Y,\Piqry,\PiAtwo) =\E_{\Piqry} \Bigg[\sum_{j\in[m]}p_{\Piqry}(j)\lambda_j(\Piqry)\Bigg].
	\end{equation}
	
	We start with some preparation. By Lemma~\ref{lem:query-conditional-independence}, marginalizing over $X$  gives
	$
	\PiAtwo\perp Y\mid\Piqry.
	$
	Since $Z$ is contained in $\Piqry$ and $J=Y_Z$, it follows that
	\begin{equation}
		\label{eq:J-given-query-transcript}
		\Pr[J=j\mid\Piqry=\pi,\PiAtwo=\alpha] = \Pr[J=j\mid\Piqry=\pi]
		=p_\pi(j).
	\end{equation}
	Again by Lemma~\ref{lem:query-conditional-independence}, conditioning on $\PiAtwo$ gives
	$X\perp Y\mid(\Piqry,\PiAtwo)$. Hence, for every $j\in[m]$,
	\begin{equation}
		\label{eq:Xj-independent-of-Y}
		H(X_j\mid Y=y,\Piqry=\pi,\PiAtwo=\alpha)
		=H(X_j\mid\Piqry=\pi,\PiAtwo=\alpha).
	\end{equation}
	
	We now investigate the left-hand side of \eqref{eq:target-entropy-expansion}.
	Conditioned on $(Y=y,\Piqry=\pi)$, the index $J=Y_Z$ is determined, since $Z$ is part of $\Piqry$.  Expanding the conditional entropy over $(Y,\Piqry,\PiAtwo)$ therefore gives
	\begin{equation*}
		H(X_J\mid Y,\Piqry,\PiAtwo) = \sum_{y,\pi,\alpha}
		\Pr[Y=y,\Piqry=\pi,\PiAtwo=\alpha] H(X_{y_z}\mid Y=y,\Piqry=\pi,\PiAtwo=\alpha),
	\end{equation*}
	where $z$ denotes the value of $Z$ contained in $\pi$.
	By \eqref{eq:Xj-independent-of-Y}, conditioning on $Y=y$ does not affect the
	conditional distribution of $X_j$ once $(\Piqry,\PiAtwo)$ is fixed.  Hence,
	$$
	H(X_{y_z}\mid Y=y,\Piqry=\pi,\PiAtwo=\alpha) = H(X_{y_z}\mid\Piqry=\pi,\PiAtwo=\alpha).
	$$
	Grouping the values of $y$ according to $j=y_z$ now yields
	\begin{eqnarray*}
		&&H(X_J\mid Y,\Piqry,\PiAtwo)\\
		&=&
		\sum_{\pi,\alpha} \Pr[\Piqry=\pi,\PiAtwo=\alpha] \sum_{j\in[m]}	\Pr[J=j\mid\Piqry=\pi,\PiAtwo=\alpha]
		H(X_j\mid\Piqry=\pi,\PiAtwo=\alpha)\\
		&\stackrel{\text{by}~\eqref{eq:J-given-query-transcript}}{=}&
		\sum_{\pi,\alpha} \Pr[\Piqry=\pi,\PiAtwo=\alpha] \sum_{j\in[m]}p_\pi(j) H(X_j\mid\Piqry=\pi,\PiAtwo=\alpha)\\
		&=&
		\sum_\pi \Pr[\Piqry=\pi] \sum_{j\in[m]}p_\pi(j) \sum_\alpha	\Pr[\PiAtwo=\alpha\mid\Piqry=\pi]\,
		H(X_j\mid\Piqry=\pi,\PiAtwo=\alpha)\\
		&=&
		\E_{\Piqry}\Bigg[\sum_{j\in[m]}p_{\Piqry}(j) H(X_j\mid\Piqry,\PiAtwo)\Bigg] = \E_{\Piqry}\Bigg[\sum_{j\in[m]}p_{\Piqry}(j) \left(1 -  \lambda_j(\Piqry) \right)\Bigg].
	\end{eqnarray*}
	Since $\sum_{j\in[m]}p_\pi(j)=1$, this is equivalent to
	\eqref{eq:target-entropy-expansion}.
	
	For every fixed $\pi$, the definition of total variation distance gives
	\begin{equation}
		\sum_{j\in[m]}p_\pi(j)\lambda_j(\pi) \leq \frac1m\sum_{j\in[m]}\lambda_j(\pi) + \TV(p_\pi,U_m).
		\label{eq:tv-test-function}
	\end{equation}
	Combining \eqref{eq:target-entropy-expansion} and
	\eqref{eq:tv-test-function} (after taking expectation over $\Piqry$) gives
	\begin{equation}
		\label{eq:alice-info-reveal}
		1-H(X_J\mid Y,\Piqry,\PiAtwo) \leq
		\frac1m \E_{\Piqry} \Bigg[\sum_{j\in[m]}\lambda_j(\Piqry)\Bigg] + \E_{\Piqry} \left[\TV(p_{\Piqry},U_m)\right].
	\end{equation}
	
	We bound the right-hand side of \eqref{eq:alice-info-reveal} in two parts.  For the second part, Lemma~\ref{lem:query-nearly-uniform} gives
	\begin{equation}
		\label{eq:TV}
		\E_{\Piqry}\left[\TV(p_{\Piqry},U_m)\right] \le \sqrt{\frac{\ln2}{2}\,\frac{|\PiBone|}{m}}.
	\end{equation}
	
	For the first part, by subadditivity of conditional entropy,
	\begin{eqnarray}
		\E_{\Piqry}
		\Bigg[\sum_{j\in[m]}\lambda_j(\Piqry)\Bigg]
		&=& m-\sum_{j\in[m]}H(X_j\mid\Piqry,\PiAtwo)\nonumber\\
		&\leq& m-H(X\mid\Piqry,\PiAtwo)\nonumber\\
		&=& I(X;\Piqry,\PiAtwo),
		\label{eq:uniform-coordinate-info}
	\end{eqnarray}
	where the last equality uses $H(X)=m$.
	
	To bound this mutual information, recall that
	$\Piqry=(\PiAone,\PiBone,Z)$.  The chain rule gives
	\begin{equation*}
		I(X;\Piqry,\PiAtwo) = I(X;\PiAone) + I(X;\PiBone,Z\mid\PiAone)+I(X;\PiAtwo\mid\Piqry).
		\label{eq:alice-info-chain}
	\end{equation*}
	The middle term is zero.  Indeed, conditioning on $\PiAone$ preserves the
	independence of $X$ and $(Y,Z)$, because $\PiAone$ is a function only of
	$X$.  Once $\PiAone$ is fixed, $(\PiBone,Z)$ is a function only of
	$(Y,Z)$, so
	$
	I(X;\PiBone,Z\mid\PiAone)=0.
	$
	The remaining two terms are bounded by Alice's message lengths, and hence
	\begin{equation}\label{eq:alice-total-info}
		I(X;\Piqry,\PiAtwo)
		\leq H(\PiAone)+H(\PiAtwo)
		\leq |\PiAone|+|\PiAtwo|.
	\end{equation}

	By \eqref{eq:uniform-coordinate-info} and
	\eqref{eq:alice-total-info}, the first term in the right-hand side of \eqref{eq:alice-info-reveal} is at most
	$
	\frac{1}{m}(|\PiAone|+|\PiAtwo|),
	$
	which, combined with \eqref{eq:TV}, gives the lemma.
\end{proof}

\begin{proof}[Proof of Theorem~\ref{thm:tsi-lb}]
	The output of the augmented protocol predicts $X_J$ from
	$(Y,\Piqry,\PiAtwo)$ with error at most $\delta$.  Fano's
	inequality gives
	$$
	H(X_J\mid Y,\Piqry,\PiAtwo)\leq h_2(\delta).
	$$
	Combining this with Lemma~\ref{lem:alice-target-information} gives
	\begin{equation*}\label{eq:communication-tradeoff}
		1-h_2(\delta)
		\leq
		\frac{|\PiAone|+|\PiAtwo|}{m}
		+\sqrt{\frac{\ln2}{2}\,\frac{|\PiBone|}{m}}.
	\end{equation*}
	Since constant $\delta<1/2$, the left-hand side is a positive constant.  If the total communication were $o(m)$, then $|\PiAone|+|\PiAtwo|=o(m)$ and $|\PiBone|=o(m)$, so the right-hand side is $o(1)$, a contradiction.  Thus, every deterministic protocol having error at most $\delta$ under $\cD_{\mathrm{TSI}}$ uses
	$\Omega(m)$ bits.
\end{proof}

\subsection{Reduction from Two-Step Indexing to Distributed Diversity Index}
\label{sec:reduction-TSI-DDI}

We prove Theorem~\ref{thm:ddi-lb} by reducing $\TSI_m$ to $\DDI_m$.  The following lemma gives the degree of each node in the similarity graph under distribution $\cD_{\mathrm{DDI}}$.

\begin{lemma}
	\label{lem:degree-contributions}
	For each $i\in[m]$, let
	$
	n_i\coloneqq|\{q\in[m]:Y_q=i\}|.
	$
	The four node classes satisfy the following bounds:
	\begin{itemize}[leftmargin=2em]
		\item Every Alice's node $a_i$ has degree
		$
		d(a_i)=m+\gamma n_iX_i,
		$
		and the total contribution of the Alice's nodes is
		$
		\sum_{i\in[m]}\frac1{d(a_i)}\leq1.
		$
		\smallskip
		
		\item Every Bob's node $b_{Z,r}$ (i.e., $q = Z$) has degree
		$
		d(b_{Z,r})=1+X_{Y_Z},
		$
		and these nodes contribute
		$
		\sum_{r\in[\gamma]}\frac1{d(b_{Z,r})}
		=\frac{\gamma}{1+X_{Y_Z}}.
		$
		\smallskip
		
		\item For every $q\neq Z$, every Bob's node $b_{q,r}$ has degree
		$
		d(b_{q,r})=\gamma m+1+X_{Y_q},
		$
		and all these nodes together contribute
		$
		\sum_{q\in[m]\setminus\{Z\}}\sum_{r\in[\gamma]}
		\frac1{d(b_{q,r})}<1.
		$
		\smallskip
		
		\item Every Charlie's node $c_\ell$ has degree
		$
		d(c_\ell)=\gamma(2m-1),
		$
		and the total contribution of Charlie's nodes is
		$
		\sum_{\ell\in[\gamma m]}\frac1{d(c_\ell)}\leq1.
		$
	\end{itemize}
\end{lemma}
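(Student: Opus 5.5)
The plan is a direct count of neighbors from the definition of $f$ and the node sets of Definition~\ref{def:Dddi}. Each node's self-loop contributes $1$, and the remaining degree is read off class by class. No probabilistic argument is needed: the bounds hold pointwise for every $(X,Y,Z)$.

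\emph{Alice's node $a_i=(\mathsf A,i,X_i)$.} It is similar to all $m$ $\mathsf A$-type nodes, counting itself. It is similar to a Bob node $b_{q,r}=(\mathsf B,Y_q,q,r)$ exactly when $Y_q=i$ and $X_i=1$. There are $\gamma n_i$ such Bob nodes, and $a_i$ has no $\mathsf C$-neighbors. Hence $d(a_i)=m+\gamma n_iX_i\ge m$, and summing $1/d(a_i)\le 1/m$ over the $m$ indices gives a total of at most $1$.

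\emph{Bob's nodes.} Distinct $\mathsf B$-type items are never similar, so $b_{q,r}$ has only its self-loop among Bob nodes. Among Alice nodes, only $a_{Y_q}$ can be adjacent to it, and it is adjacent exactly when $X_{Y_q}=1$. Among Charlie nodes $c_\ell=(\mathsf C,Z,\ell)$, the node $b_{q,r}$ is adjacent to all $\gamma m$ of them iff $q\ne Z$, and to none otherwise.
\begin{itemize}
\item For $q=Z$ this gives $d=1+X_{Y_Z}$, so the $\gamma$ nodes contribute $\gamma/(1+X_{Y_Z})$.
\item For $q\neq Z$ this gives $d=\gamma m+1+X_{Y_q}>\gamma m$. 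The $(m-1)\gamma$ such nodes then contribute less than $(m-1)\gamma/(\gamma m)<1$.
\end{itemize}

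\emph{Charlie's node $c_\ell$.} It is similar to all $\gamma m$ Charlie nodes, counting itself. It has no $\mathsf A$-neighbors. Its Bob neighbors are exactly the $b_{q,r}$ with $q\ne Z$, of which there are $\gamma(m-1)$. Hence $d(c_\ell)=\gamma m+\gamma(m-1)=\gamma(2m-1)$. The total contribution is $\gamma m/(\gamma(2m-1))=m/(2m-1)\le 1$ for $m\ge1$.

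The only step needing care is the case split for Bob nodes. One must use that distinct $\mathsf B$ items with the same $(j,q)$ but different $r$ are \emph{not} similar, since they fall under rule 5. One must also use that rule 4 compares $q$ with Charlie's common label $Z$, not $Y_q$ with $Z$. Once this is settled, everything else is arithmetic.
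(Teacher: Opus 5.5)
Your proposal is correct and follows the paper's proof essentially step for step: you count neighbors class by class directly from the definition of $f$, and you get the same degree formulas and the same summation bounds. One cosmetic point: for the $q\neq Z$ Bob nodes the paper uses the bound $\gamma(m-1)/(\gamma m+1)<1$, while your strict ``less than $(m-1)\gamma/(\gamma m)$'' fails in the degenerate case $m=1$, where the sum is empty. Writing $\le$ there fixes it, and the conclusion $<1$ is unaffected.
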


\begin{proof}
	The $m$ Alice's nodes form a clique, including their self-loops.  If
	$X_i=1$, then $a_i$ is also adjacent to all $\gamma n_i$ Bob's nodes whose
	$Y_q$ value is $i$.  Therefore
	$d(a_i)=m+\gamma n_iX_i\geq m$, and summing $1/d(a_i)$ over $i\in[m]$
	gives the first item.
	
	For every $r\in[\gamma]$, the node $b_{Z,r}$ has no Charlie-node neighbor.  It has its self-loop and is adjacent to $a_{Y_Z}$ exactly when
	$X_{Y_Z}=1$.  This gives the second item.
	
	If $q\neq Z$, then $b_{q,r}$ is adjacent to all $\gamma m$ Charlie's nodes, to itself, and possibly to $a_{Y_q}$.  Therefore $d(b_{q,r})=\gamma m+1+X_{Y_q}$, and
	$$
	\sum_{q\in[m]\setminus\{Z\}}\sum_{r\in[\gamma]}
	\frac1{d(b_{q,r})} \leq \frac{\gamma(m-1)}{\gamma m+1}<1.
	$$
	This gives the third item.
	
	Finally, the $\gamma m$ Charlie's nodes form a clique and are adjacent to all
	$\gamma(m-1)$ Bob's nodes with $q \neq Z$.  Hence
	$d(c_\ell)=\gamma m+\gamma(m-1)=\gamma(2m-1)$ and
	$$
	\sum_{\ell\in[\gamma m]}\frac1{d(c_\ell)}
	= \frac{\gamma m}{\gamma(2m-1)} = \frac{m}{2m-1}\leq1.
	$$
	This gives the fourth item.
\end{proof}

\begin{claim}
	\label{claim:di-gap}
	For the similarity graph induced by $(X,Y,Z)$, if $X_{Y_Z}=0$, then
	$
	\DI\geq\gamma.
	$
	If $X_{Y_Z}=1$, then
	$
	\DI\leq\frac{\gamma}{2}+3.
	$
	In particular, for $\gamma=20$, we have $\DI\geq20$ when
	$X_{Y_Z}=0$ and $\DI\leq13$ when $X_{Y_Z}=1$.
\end{claim}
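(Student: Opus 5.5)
The plan is to decompose $\DI$ according to the four node classes of Lemma~\ref{lem:degree-contributions} and bound each class's contribution separately. Every node of the graph induced by $(X,Y,Z)$ lies in exactly one of these classes: Alice's nodes $a_i$, Bob's nodes $b_{Z,r}$ with $q=Z$, Bob's nodes $b_{q,r}$ with $q\neq Z$, and Charlie's nodes $c_\ell$. Hence
\[
\DI=\sum_{i\in[m]}\frac1{d(a_i)}+\frac{\gamma}{1+X_{Y_Z}}+\sum_{q\neq Z}\sum_{r\in[\gamma]}\frac1{d(b_{q,r})}+\sum_{\ell\in[\gamma m]}\frac1{d(c_\ell)},
\]
where the second term is taken directly from the second item of Lemma~\ref{lem:degree-contributions}.

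For the lower bound when $X_{Y_Z}=0$, I keep only the second term. All other terms are nonnegative, and the second term equals $\gamma/(1+0)=\gamma$, so $\DI\ge\gamma$.

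For the upper bound when $X_{Y_Z}=1$, the second term equals $\gamma/2$. I then bound the other three terms using the remaining items of Lemma~\ref{lem:degree-contributions}: Alice's nodes contribute at most $1$, the Bob nodes with $q\neq Z$ contribute less than $1$, and Charlie's nodes contribute at most $1$. Summing these gives $\DI\le \gamma/2+3$.

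Substituting $\gamma=20$ gives $\DI\ge 20$ in the first case and $\DI\le 13$ in the second. There is no real obstacle here. The only point to check is that the four classes partition the node set exactly, so no node is double counted or omitted; this holds because $|V_B|=\gamma m$ splits into the $\gamma$ nodes with $q=Z$ and the $\gamma(m-1)$ nodes with $q\neq Z$.
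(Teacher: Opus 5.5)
Your proposal is correct and follows the paper's proof: for $X_{Y_Z}=0$ you keep only the contribution $\gamma$ of the nodes $b_{Z,r}$, and for $X_{Y_Z}=1$ you sum the four class bounds from Lemma~\ref{lem:degree-contributions} to get $1+\gamma/2+1+1$. Your extra check that the four classes partition the node set is a harmless detail that the paper leaves implicit.
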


\begin{proof}
	If $X_{Y_Z}=0$, the $\gamma$ nodes $b_{Z, r}$ already
	contribute $\gamma$.  If $X_{Y_Z}=1$, summing the contributions of the four groups of nodes in Lemma~\ref{lem:degree-contributions} gives
	$
	\DI\leq1+\frac{\gamma}{2}+1+1
	=\frac{\gamma}{2}+3.
	$
\end{proof}

\begin{proof}[Proof of Theorem~\ref{thm:ddi-lb}]
	Given an input $(X,Y,Z)$ to $\TSI_m$, Alice, Bob, and Charlie form the node sets $V_A(X)$, $V_B(Y)$, and $V_C(Z)$ from
	Definition~\ref{def:Dddi}.  They run the protocol for $\DDI$ on
	these sets and use its estimate to recover $X_{Y_Z}$.
	
	By Claim~\ref{claim:di-gap}, whenever the $\DDI$ protocol
	succeeds, if $X_{Y_Z}=0$, then
	$
	\widetilde D\geq0.9\cdot20=18.
	$
	If $X_{Y_Z}=1$, then
	$
	\widetilde D\leq1.1\cdot13=14.3.
	$
	Thus the threshold test $\widetilde D>16$ outputs $0$ in the first case
	and $1$ in the second.  A $0.49$-error $1.1$-approximation protocol for
	$\DDI$ would therefore give a $0.49$-error protocol for
	$\TSI_m$ with the same communication cost.  Theorem~\ref{thm:tsi-lb}
	gives an $\Omega(m)$ communication lower bound.
\end{proof}

\section{$L_{\DI}$-Sampling}
\label{sec:L-DI-sampling}

In this section, we give a three-pass $L_{\DI}$-sampler and then complement it with lower bounds.

\subsection{A Three-Pass Perfect $L_{\DI}$-Sampler}
\label{sec:L-DI-upper-bound}

For an array $A[1..m]$, we also use $A$ to denote the corresponding multiset.
For a node $u$ and a multiset $W$, let
$
d_W(u)=\sum_{v\in W}sim(v,u).
$
Let $c>0$ be any fixed constant specifying the additive error of the perfect sampler.

\vspace{2mm}
\noindent{\bf The algorithm and intuition.\ }
Due to its length, we present our $L_{\DI}$-sampling algorithm in two parts, Algorithms~\ref{alg:L-DI-preprocess} and~\ref{alg:L-DI-output}, with inline comments explaining the main steps.

\begin{algorithm}[!ht]
	\caption{First two passes of $L_{\DI}$-Sampling}
	\label{alg:L-DI-preprocess}
	\DontPrintSemicolon
	\SetAlgoNoEnd
	
	\KwIn{a node-arrival stream $\sigma=(\sigma_1,\ldots,\sigma_n)$}
	\KwOut{the state used by Algorithm~\ref{alg:L-DI-output}, or $\Null$}
	
	$w\gets3(c+1)\sqrt n\ln n$, $q\gets16(c+3)\sqrt n\ln n$, $\theta\gets6(c+1)\ln n$
	
	$W[1..w]$, $Q[1..q]\gets\Null$
	\gray{\tcc*{independent uniform samples}}
	
	$S\gets\emptyset$, $n_L\gets0$
	\gray{\tcc*{$S$ stores tuples (sampler index, node, degree)}}
	
	\nonl\gray{\tcp{Pass 1: obtain the witness set and estimate the number of low-degree nodes}}
	\ForEach{incoming $\sigma_j$}{
		\For{$a\gets1,\ldots,w$}{
			with probability $1/j$, set $W[a]\gets\sigma_j$
		}
		\For{$a\gets1,\ldots,q$}{
			with probability $1/j$, set $Q[a]\gets\sigma_j$
		}
	}
	
	$\widetilde n_L\gets\frac{n}{q}\abs{\{a\in[q]:d_W(Q[a])<\theta\}}$
	\gray{\tcc*{estimate the number of low-degree nodes}}
	
	\eIf{$\widetilde n_L\le\sqrt n$}{
		$\mathsf{case}\gets0$, $t\gets\lceil16cn\ln n\rceil$
	}{
		$\mathsf{case}\gets1$,
		$t\gets\left\lceil16cn\sqrt n\ln n/\widetilde n_L\right\rceil$
	}
	
	
	\nonl\gray{\tcp{Pass 2: apply the first rejection step and store surviving nodes}}
	Choose a random map $h:[t]\to[n]$
	
	\ForEach{incoming $\sigma_j$}{
		$\mathsf{low}_j\gets\one[d_W(\sigma_j)<\theta]$
		
		\If{$\mathsf{low}_j=1$}{
			$n_L\gets n_L+1$ 
		}
		
		\ForEach{$i\in[t]$ such that $h(i)=j$}{
			\leIf{$\mathsf{low}_j=1$}{
				$p_1\gets1$
			}{
				$p_1\gets1/\sqrt n$
			}
			\WithProb{$p_1$}{
				append $(i,\sigma_j,0)$ to $S$
				
				\If{$\abs S>80c\sqrt n\ln n$}{
					\Return $\Null$  \label{ln:space-cap-1}
					\gray{\tcc*{the space cap is exceeded}}
				}
			}
		}
	}
	
	\If{$\mathsf{case}=1$ and $n_L\le\sqrt n/2$}{
		\Return $\Null$ \label{ln:consistency-check-1}
		\gray{\tcc*{the first-pass estimate was too large}}
	}
	Sort the tuples in $S$ by their sampler indices
	
	\Return $(W,S,\theta)$
\end{algorithm}

\begin{algorithm}[!ht]
	\caption{Third pass and output for $L_{\DI}$-Sampling}
	\label{alg:L-DI-output}
	\DontPrintSemicolon
	\SetAlgoNoEnd
	
	\KwIn{the state $(W,S,\theta)$ returned by Algorithm~\ref{alg:L-DI-preprocess}}
	\KwOut{an $L_{\DI}$ sample, or $\Null$}
	
	\nonl\gray{\tcp{Pass 3: compute exact degrees}}
	\ForEach{incoming $\sigma_j$}{
		\ForEach{$(i,v,D)\in S$}{
			$D\gets D+sim(\sigma_j,v)$
		}
	}
	
	\nonl\gray{\tcp{Output: apply the final rejection step in sampler order}}
	\ForEach{$(i,v,D)\in S$ in increasing order of $i$}{
		\eIf{$d_W(v)\ge\theta$}{
			\If{$D<\sqrt n$}{
				\Return $\Null$  \label{ln:consistency-check-2}
				\gray{\tcc*{the high-low classification is inconsistent}}
			}
			$p_2\gets\sqrt n/D$
		}{
			$p_2\gets1/D$
		}
		with probability $p_2$, \Return $v$
	}
	\Return $\Null$
\end{algorithm}

We give a high-level description of the algorithm. The parameter choices in this overview are for illustrative purposes only; throughout, we omit logarithmic and constant factors.

The first pass of the algorithm samples a witness set $W$ of roughly $\sqrt{n}$ nodes. This set is used to classify nodes according to their degrees: a node is high-degree if its degree is at least $\sqrt{n}$, and low-degree otherwise.  

We note that this high-low partition of the nodes was also used in the $L_{\DI}$-sampling algorithm of prior work~\cite{LZ26b}. However, the algorithm of~\cite{LZ26b} performs $L_{\DI}$-sampling separately on the high-degree and low-degree nodes, and then merges the two resulting samples. The difficulty is that the merge step must choose between the two samples with probability proportional to the high- and low-degree contributions to $\DI(\sigma)$, which in turn requires separate $(1+\eps)$-approximations to these two quantities. As a result, the previous $L_{\DI}$-sampler incurs a $(1+\eps)$ multiplicative error and is therefore not perfect.

In its second pass, our $L_{\DI}$-sampling algorithm maintains a collection of \emph{conceptual samplers}. Each conceptual sampler samples a node uniformly at random from $\{\sigma_1,\ldots,\sigma_n\}$. Before actually storing the sampled node $\sigma_j$, the algorithm applies an immediate rejection step with probability $1/\sqrt{n}$ if $\sigma_j$ is high-degree. This rejection step is safe because the final desired sampling probability for $\sigma_j$ is $1/d_j \le 1/\sqrt{n}$. We do not apply this immediate rejection step to low-degree nodes, since for them $1/d_j$ may exceed $1/\sqrt{n}$. This immediate rejection sampling step is the key to our algorithm, as the sublinear-space constraint prevents us from explicitly materializing all conceptual samplers in memory. 

In the third pass, the algorithm performs an additional rejection step on each stored node, ensuring that the overall sampling probability of every stored node $\sigma_j$ is proportional to $1/d_j$. At the end, we select the first successful conceptual sampler and output the node it sampled.

It remains to choose the number of conceptual samplers to maintain. Let $t$ denote this number. If $t$ is too large, the space usage becomes too high; if $t$ is too small, the algorithm may end up with no sampled node. We therefore need to choose $t$ small enough to keep the space usage low, while still ensuring that at least one node survives until the end of the third pass.

High-degree nodes are not an issue, because the immediate rejection step ensures that the number of stored high-degree nodes is small. The main challenge is controlling the number of stored low-degree nodes. One could use an additional pass to determine the number of low-degree nodes and then choose $t$ accordingly, but this would increase the total number of passes to four. 

We instead use the witness set sampled in the first pass to estimate the number of low-degree nodes. Let this estimate be denoted by $\tilde{n}_L$. We then set the number of conceptual samplers based on this estimate. Roughly speaking, it is sufficient to set
$t = \min\left\{n, {n\sqrt{n}}/{\tilde{n}_L}\right\}$ to ensure that at least one node survives the third-pass rejection step.

\vspace{2mm}
\noindent{\bf The analysis.\ }
Let
$H=\{j\in[n]:d_W(\sigma_j)\ge\theta\}$ and
$L=[n]\setminus H$,
and write $n_H=\abs H$ and $n_L=\abs L$.

Define $\cE_1$ to be the event that
$\min_{j\in H}d_j\ge\sqrt n$ and
$\max_{j\in L}d_j\le4\sqrt n$.

\begin{lemma}
	\label{lem:L-DI-partition}
	$
	\Pr[\cE_1]\ge1-{2}/{n^c}.
	$
\end{lemma}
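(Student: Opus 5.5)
Fix a node $j$ and bound the probability that it is misclassified; then take a union bound over all $n$ nodes. The witness array $W[1..w]$ consists of $w$ independent uniform samples from the stream, since reservoir sampling with replacement probability $1/j$ gives a uniform position for each $a$ independently. So for fixed $j$, $d_W(\sigma_j)=\sum_{a\in[w]} sim(W[a],\sigma_j)$ is a sum of $w$ i.i.d.\ $[0,1]$-valued variables with mean $\mu_j:=\E[d_W(\sigma_j)]=w\,d_j/n$. With $w=3(c+1)\sqrt n\ln n$ and $\theta=6(c+1)\ln n$, we have $\mu_j=\theta\cdot d_j/(2\sqrt n)$. A threshold of $\sqrt n$ on $d_j$ therefore corresponds to mean $\theta/2$, and a threshold of $4\sqrt n$ corresponds to mean $2\theta$.

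\emph{Bad event 1: a node $j$ with $d_j<\sqrt n$ lands in $H$.} This requires $d_W(\sigma_j)\ge\theta$ while $\mu_j<\theta/2$. By monotonicity we may increase $\mu$ to exactly $\theta/2$ (dominate by a sum with larger means), and then apply the multiplicative Chernoff bound with deviation $\delta=1$:
\[
\Pr[X\ge 2\mu]\le e^{-\mu/3}=e^{-\theta/6}=n^{-(c+1)}.
\]

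\emph{Bad event 2: a node $j$ with $d_j>4\sqrt n$ lands in $L$.} This requires $d_W(\sigma_j)<\theta$ while $\mu_j>2\theta$. The lower-tail Chernoff bound with $\delta=1/2$ gives
\[
\Pr[X<\mu/2]\le e^{-\mu/8}\le e^{-\theta/4}\le n^{-(c+1)}.
\]

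A union bound over the $n$ nodes, for each of the two events, gives failure probability at most $2n\cdot n^{-(c+1)}=2/n^c$. Since $\cE_1$ is exactly the event that neither bad event occurs for any $j$, this proves the lemma.

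The main point needing care is the monotone domination in the Chernoff bounds. The cleanest route is to use the standard forms that hold for any upper bound on the mean (for the upper tail) or lower bound on the mean (for the lower tail). We also need the independence of the $W[a]$ across $a$, which holds because each slot runs its own reservoir process.
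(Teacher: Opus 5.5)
Your proposal is correct and follows essentially the same argument as the paper. You bound each node separately, using independence of the witness samples: the upper-tail Chernoff bound at twice the (dominating) mean handles $d_j<\sqrt n$, and the lower-tail bound at half the mean handles $d_j>4\sqrt n$, giving $n^{-(c+1)}$ per node in each case, and a union bound over the $n$ nodes finishes.
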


\begin{proof}
	Fix $j\in[n]$ and let $X_j=d_W(\sigma_j)$.  Since the entries of $W$ are independent uniform stream samples, the random variables $sim(W[a],\sigma_j)$, $a\in[w]$, are independent and lie in
	$[0,1]$.  Moreover,
	$$
	X_j=\sum_{a\in[w]}sim(W[a],\sigma_j), \quad \text{and} \quad \E[X_j]=\frac{3(c+1)d_j\ln n}{\sqrt n}.
	$$
	If $d_j\ge4\sqrt n$, then $\E[X_j]\ge12(c+1)\ln n$ and
	$$
	\Pr[X_j<6(c+1)\ln n] \le\Pr[X_j\le\E[X_j]/2] \le e^{-\E[X_j]/8} \le n^{-(c+1)}.
	$$
	If $d_j\le\sqrt n$, then $\E[X_j]\le3(c+1)\ln n$.  A Chernoff bound gives
	$$
	\Pr[X_j\ge6(c+1)\ln n]\le e^{-(c+1)\ln n}=n^{-(c+1)}.
	$$
	A union bound over the $n$ nodes proves the lemma.
\end{proof}

For the correctness analysis, consider an idealized version of Algorithms~\ref{alg:L-DI-preprocess} and~\ref{alg:L-DI-output} that is identical to the actual algorithm except that it does not impose the space cap (Line~\ref{ln:space-cap-1} of Algorithm~\ref{alg:L-DI-preprocess}) and does not return $\Null$ at the two consistency checks (Line~\ref{ln:consistency-check-1} of Algorithm~\ref{alg:L-DI-preprocess} and Line~\ref{ln:consistency-check-2} of Algorithm~\ref{alg:L-DI-output}). We first establish the exact output distribution of this idealized algorithm, and then show that it differs from the actual algorithm only with small probability.

\begin{lemma}
	\label{lem:L-DI-exact-law}
	Fix a realization of $W$ satisfying $\cE_1$ and a realization of the independent sample $Q$; together, they determine the number $t$ of conceptual samplers.  For every $j\in[n]$ and every conceptual
	sampler $i$,
	$
	\Pr[\text{sampler $i$ succeeds with }\sigma_j\mid W,Q] = {1}/{(nd_j)}.
	$
	Consequently, conditioned on at least one conceptual sampler succeeding, the first successful sampler outputs $\sigma_j$ with probability	${d_j^{-1}}/{\DI}.$
\end{lemma}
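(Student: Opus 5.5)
Conditioned on $(W,Q)$, the number $t$ is fixed and so is the high/low classification of every node, since $\mathsf{low}_j$ depends only on $W$. In the idealized algorithm the random map $h$ and all rejection coins are drawn independently of $(W,Q)$. So I would fix $(W,Q)$ satisfying $\cE_1$ and a sampler index $i\in[t]$, and trace through the three random choices that decide whether sampler $i$ ends up with $\sigma_j$. These are: the value $h(i)$; the pass-two coin with probability $p_1$; and the output-stage coin with probability $p_2$. For the pass-three count, $D$ starts at $0$ and adds $sim(\sigma_{j'},v)$ over the whole stream, including the self-loop term. So $D=d_j$ exactly when $v=\sigma_j$ is stored through sampler $i$.

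The probability splits into two cases. The event $h(i)=j$ has probability $1/n$, since $h$ is a uniformly random map.
\begin{itemize}
\item If $j\in L$: then $p_1=1$ and $p_2=1/d_j$. The product is $\frac1n\cdot1\cdot\frac1{d_j}$.
\item If $j\in H$: then $p_1=1/\sqrt n$ and $p_2=\sqrt n/d_j$. The product is again $1/(nd_j)$.
\end{itemize}
The only thing to check is that $p_2\le1$, so that it is a genuine probability. This is where $\cE_1$ is used. For $j\in L$ it is trivial, because $d_j\ge1$. For $j\in H$, $\cE_1$ gives $d_j\ge\sqrt n$, hence $\sqrt n/d_j\le1$. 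In the idealized algorithm the consistency check at Line~\ref{ln:consistency-check-2} is removed. Under $\cE_1$ that check would never fire anyway.

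For the second part, I would use the structure of the samplers. Different samplers use independent coordinates $h(i)$ and independent coins, so conditioned on $(W,Q)$ the outcomes are i.i.d. across $i\in[t]$. Each sampler has the outcome law $\Pr[\text{succeed with }\sigma_j]=1/(nd_j)$, which gives total success probability $\DI/n$. The output stage scans the samplers in index order and returns at the first success. This is equivalent to taking the first success among i.i.d. trials.

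A standard calculation then finishes the argument. Write $s=\DI/n$ for the per-sampler success probability. Then
\[
\Pr[\text{first success is sampler }i\text{ with }\sigma_j]=(1-s)^{i-1}\cdot\frac{1}{nd_j}.
\]
Summing over $i$ and dividing by $\Pr[\text{some success}]=\sum_i(1-s)^{i-1}s$ gives $\frac{1/(nd_j)}{s}=d_j^{-1}/\DI$.

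The step I expect to need the most care is making precise that the samplers are i.i.d. given $(W,Q)$. Several samplers may map to the same $j$, and the algorithm loops over all $i$ with $h(i)=j$. Each such $i$ still gets its own fresh $p_1$ coin, and later its own $p_2$ coin. So the joint law factorizes over $i$, and no dependence is introduced through shared nodes. The pass-three value $D$ is deterministic given the stored node, so it adds no randomness. The lemma itself needs none of the facts that $t$ depends on $Q$, that there is a space cap, or that there is a check at Line~\ref{ln:consistency-check-1}: we condition on $(W,Q)$, and the idealized algorithm has dropped both the cap and the check.
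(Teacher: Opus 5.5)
Your proof is correct and follows the same route as the paper. In both the high- and low-degree cases, the per-sampler success probability factors as $\frac1n\cdot p_1\cdot p_2=1/(nd_j)$, and the first-success calculation then cancels the common factor $\sum_{i\in[t]}(1-\DI/n)^{i-1}$. You also spell out three points the paper leaves implicit: $\cE_1$ is exactly what makes $p_2=\sqrt n/d_j\le1$ a valid probability, Pass~3 computes $D=d_j$ exactly, and the samplers are i.i.d.\ given $(W,Q)$.
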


\begin{proof}
	The $i$-th conceptual sampler selects $\sigma_j$ with probability $1/n$.  If
	$j\in H$, the two rejection probabilities multiply to
	$
	\frac1{\sqrt n}\cdot\frac{\sqrt n}{d_j}=\frac1{d_j}.
	$
	If $j\in L$, they multiply to $1\cdot \frac1{d_j}=\frac1{d_j}$.  This proves the first
	claim.
	
	Each conceptual sampler therefore succeeds with probability $\DI/n$.  By
	independence, the probability that the first successful sampler is sampler
	$i$ and outputs $\sigma_j$ is
	$
	\left(1-\frac{\DI}{n}\right)^{i-1}\frac1{nd_j}.
	$
	Summing over $i\in[t]$ gives
	$$
	\Pr[\text{the first successful sampler outputs }\sigma_j] = \frac{1}{nd_j} \sum_{i \in [t]} \left(1-\frac{\DI}{n}\right)^{i-1}.
	$$
	Similarly, since each sampler succeeds with probability $\DI/n$,
	$$
	\Pr[\text{at least one sampler succeeds}] = \frac{\DI}{n}\sum_{i \in [t]} \left(1-\frac{\DI}{n}\right)^{i-1}.
	$$
	Therefore,
	\begin{equation*}
		\Pr[\text{output }\sigma_j
		\mid \text{at least one sampler succeeds}]=
		\frac{
			\frac{1}{nd_j}\sum_{i \in [t]}\left(1-\frac{\DI}{n}\right)^{i-1}
		}{
			\frac{\DI}{n}\sum_{i \in [t]}\left(1-\frac{\DI}{n}\right)^{i-1}
		}
		= \frac{d_j^{-1}}{\DI}.
	\end{equation*}
	Thus, conditioned on at least one conceptual sampler succeeding, the output
	follows the exact $L_{\DI}$ distribution.
\end{proof}

The next lemma bounds the storage and failure events.  

Let $\cE_Q$ be the following event: if $n_L\le\sqrt n/2$, then $\widetilde n_L\le\sqrt n$; if $n_L>\sqrt n/2$, then $n_L/2\le\widetilde n_L\le2n_L$.

\begin{lemma}
	\label{lem:L-DI-resources}
	Conditioned on any fixed $W$, we have
	$
	\Pr[\cE_Q\mid W]\ge1-{2}/{n^{c+3}}.
	$
	Conditioned on $\cE_1\cap\cE_Q$, the probability that the algorithm exceeds its space cap is $e^{-\Omega(\sqrt n\ln n)}$, and the probability that none of
	the conceptual samplers succeeds is at most
	$
	\left(1/{n^c}+e^{-\Omega(\sqrt n\ln n)}\right).
	$
\end{lemma}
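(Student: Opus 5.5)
The proof splits into three parts: the concentration of $\widetilde n_L$, the space bound, and the success probability.

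\emph{Concentration of $\widetilde n_L$.} Condition on $W$; this fixes the set $L$ and the count $n_L$. The entries of $Q$ are independent uniform stream samples, so the indicators $\one[d_W(Q[a])<\theta]$ are i.i.d.\ Bernoulli$(n_L/n)$. Write $N=\sum_a \one[\cdot]$; then $\widetilde n_L=(n/q)N$ and
\[
\E[N]=qn_L/n=16(c+3)\,n_L\ln n/\sqrt n .
\]
If $n_L>\sqrt n/2$, then $\E[N]\ge 8(c+3)\ln n$. The two-sided multiplicative Chernoff bound with deviation factor $1/2$ gives failure probability at most
\[
e^{-\E[N]/8}+e^{-\E[N]/12},
\]
and the constant $16(c+3)$ should make each term at most $n^{-(c+3)}$; I will check the constants here. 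If $n_L\le\sqrt n/2$, the event $\widetilde n_L>\sqrt n$ means $N>q/\sqrt n=16(c+3)\ln n\ge 2\E[N]$. I will use the Chernoff form that dominates by the case $n_L=\sqrt n/2$ (monotonicity), where $\mu=8(c+3)\ln n$ and the bound is $e^{-\mu/3}\le n^{-(c+3)}$. This gives $\Pr[\cE_Q\mid W]\ge 1-2n^{-(c+3)}$.

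\emph{Space cap.} Condition on $\cE_1\cap\cE_Q$ and bound $\E|S|$. Each of the $t$ samplers stores a low node with probability $n_L/n$ and a high node with probability at most $1/\sqrt n$. So
\[
\E|S|\le t\bigl(n_L/n+1/\sqrt n\bigr).
\]
In case $0$ we have $\widetilde n_L\le\sqrt n$. Under $\cE_Q$ this forces $n_L\le 2\sqrt n$: if $n_L>\sqrt n/2$ then $n_L\le 2\widetilde n_L$. With $t\approx16cn\ln n$ this gives $\E|S|\le 48c\sqrt n\ln n$. In case $1$ (with the algorithm not already returning at the consistency check, otherwise there is nothing to bound) we have $\widetilde n_L>\sqrt n$, hence $n_L>\sqrt n/2$ and $n_L\le 2\widetilde n_L$. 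Then
\[
t\,n_L/n\le 32c\sqrt n\ln n,\qquad t/\sqrt n\le 16cn\ln n/\widetilde n_L\le 16c\sqrt n\ln n,
\]
so $\E|S|\le 48c\sqrt n\ln n$ again, up to rounding of $t$. The samplers act independently: $h$ is a uniformly random map and the per-sampler coins are independent. Hence $|S|$ is a sum of independent indicators with mean at most $48c\sqrt n\ln n<80c\sqrt n\ln n$. Chernoff then gives $\Pr[|S|>80c\sqrt n\ln n]=e^{-\Omega(\sqrt n\ln n)}$.

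\emph{No sampler succeeds.} By Lemma~\ref{lem:L-DI-exact-law}, each sampler succeeds independently with probability $\DI/n$, so
\[
\Pr[\text{none succeeds}]\le e^{-t\DI/n}.
\]
It remains to show $t\DI/n\ge c\ln n$, which needs a lower bound on $\DI$. In case $0$, $\DI\ge 1$ always (the maximum degree is at most $n$, so $\DI\ge n\cdot(1/n)$), and $t\ge16cn\ln n$ gives $e^{-16c\ln n}\le n^{-c}$. In case $1$, under $\cE_1$ every low node has $d_j\le4\sqrt n$, so $\DI\ge n_L/(4\sqrt n)$. Under $\cE_Q$, $n_L\ge\widetilde n_L/2$. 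Therefore
\[
t\DI/n\ge \frac{16c\sqrt n\ln n}{\widetilde n_L}\cdot\frac{\widetilde n_L}{8\sqrt n}=2c\ln n,
\]
which gives failure probability at most $n^{-2c}$.

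The statement adds the $e^{-\Omega(\sqrt n\ln n)}$ term, which I expect accounts for the interaction with the space cap, or equivalently for the consistency-check and cap failures that make the actual output $\Null$; I will include it by a union bound with the cap event. The main obstacle is getting the Chernoff constants in the $\cE_Q$ step to match $n^{-(c+3)}$ exactly, especially the low-$n_L$ case, which needs the monotone-domination form of Chernoff.
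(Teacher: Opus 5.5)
Your proposal is correct. The $\cE_Q$ step and the space-cap step follow the paper: the same case split, the same Chernoff estimates, and the same bound $\E|S|\le 48c\sqrt n\ln n$.

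The last part takes a different route. For $n_L>\sqrt n/2$, the paper argues in two stages:
\begin{itemize}
\item It lower-bounds the expected number $X_L$ of stored low-degree samples by $8c\sqrt n\ln n$, and uses Chernoff to get $X_L\ge 4c\sqrt n\ln n$ except with probability $e^{-c\sqrt n\ln n}$.
\item Under $\cE_1$, each stored low-degree sample survives the final rejection with probability at least $1/(4\sqrt n)$.
\end{itemize}
The $e^{-\Omega(\sqrt n\ln n)}$ term in the statement comes from that Chernoff step, not from the space cap.

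You instead use the per-sampler success probability $\DI/n$ from Lemma~\ref{lem:L-DI-exact-law} in both cases. You lower-bound $\DI\ge n_L/(4\sqrt n)\ge \widetilde n_L/(8\sqrt n)$ via $\cE_1$ and $\cE_Q$, which gives $t\,\DI/n\ge 2c\ln n$ directly. This is shorter and gives a cleaner bound: $n^{-2c}$ in case~1 and $n^{-16c}$ in case~0, with no additive term. Your closing union bound with the cap event is therefore unnecessary, since $n^{-c}$ already implies the stated bound. The paper's version tracks stored samples explicitly, but it uses nothing your argument lacks.

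There is one slip to fix in the $\cE_Q$ step for $n_L>\sqrt n/2$. The upper event is $\widetilde n_L>2n_L$, i.e.\ $N>2\E[N]$, so the upper tail should use deviation $\delta=1$. That gives $e^{-\E[N]/3}\le n^{-8(c+3)/3}$. With $\delta=1/2$ as you wrote it, $e^{-\E[N]/12}$ only yields $n^{-2(c+3)/3}$, which misses the $n^{-(c+3)}$ target.

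Your parenthetical about the consistency check in case~1 is also not needed, and that check runs only after the space cap is tested during Pass~2. Under $\cE_Q$, $\widetilde n_L>\sqrt n$ already forces $n_L>\sqrt n/2$.
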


\begin{proof}
	Let $Y=\abs{\{a\in[q]:Q[a]\in L\}}.$  Note that $q$ is set to be $16(c+3)\sqrt{n}\ln n$ in the algorithm. Conditioned on $W$, the random variable $Y$ is binomial with mean
	$\mu_Q=q\frac{n_L}{n}=16(c+3)\frac{n_L}{\sqrt n}\ln n$, and $\widetilde n_L=\frac nqY.$ If $n_L\le\sqrt n/2$, then $\mu_Q\le8(c+3)\ln n$.  By stochastic domination and a Chernoff bound,
	$$
	\Pr[\widetilde n_L>\sqrt n\mid W] = \Pr[Y>16(c+3)\ln n\mid W] \le e^{-8(c+3)\ln n/3} \le n^{-(c+3)}.
	$$
	If $n_L>\sqrt n/2$, then $\mu_Q>8(c+3)\ln n$, and
	$$
	\Pr\left[Y<\frac{\mu_Q}{2}\ \text{or}\ Y>2\mu_Q\mid W\right] \le e^{-\mu_Q/8}+e^{-\mu_Q/3} \le 2n^{-(c+3)}.
	$$
	This proves the first item.
	
	Let $X$ be the number of nodes stored after the first rejection step.  Since the
	conceptual samplers are independent, $X$ is binomial.  If
	$n_L\le\sqrt n/2$ and $\cE_Q$ holds, then $t=16cn\ln n$ and
	$$
	\E[X] = t\left(\frac{n_L}{n}+\frac{n_H}{n\sqrt n}\right) \le 24c\sqrt n\ln n.
	$$
	
	Suppose now that $n_L>\sqrt n/2$, $\cE_Q$ holds, and $\widetilde n_L>\sqrt n$. Then $t=16cn\sqrt n\ln n/\widetilde n_L$, and hence $t n_L/n\le32c\sqrt n\ln n$ and $t n_H/(n\sqrt n)<16c\sqrt n\ln n$.
	If $\widetilde n_L\le\sqrt n$, then $t=16cn\ln n$ and
	$n_L\le2\widetilde n_L\le2\sqrt n$, which gives the same two upper bounds.
	In both cases, the expected numbers of stored low- and high-degree nodes are at most $32c\sqrt n\ln n$ and $16c\sqrt n\ln n$, respectively. Therefore, $\E[X] \le 48c\sqrt n\ln n.$
	
	Since $X$ is a sum of independent Bernoulli random variables, a Chernoff bound gives
	$$
	\Pr\left[X>80c\sqrt n\ln n\right] \le \exp \left(-\Omega(\sqrt n\ln n)\right) \le n^{-c}.
	$$
	Hence, with probability at least $1-1/n^c$, the space cap is not exceeded.
	
	It remains to show that some conceptual sampler succeeds.  If $n_L\le\sqrt n/2$, then $t=16cn\ln n$.  By
	Lemma~\ref{lem:L-DI-exact-law}, each conceptual sampler succeeds with probability $\DI/n$ conditioned on $W$ and $Q$.  Since $\DI\ge1$,
	$$
	\Pr[\text{no success}\mid\cE_1,\cE_Q] = \left(1-\frac{\DI}{n}\right)^t \le e^{-t/n} \le n^{-16c}.
	$$
	
	Now assume $n_L>\sqrt n/2$.  Let $X_L$ be the number of low-degree conceptual samples stored
	after Pass~2.  If $\widetilde n_L>\sqrt n$, then
	$$
	\E[X_L] = \frac{16cn\sqrt n\ln n}{\widetilde n_L} \cdot \frac{n_L}{n} \ge 8c\sqrt n\ln n.
	$$
	If $\widetilde n_L\le\sqrt n$, then
	$$
	\E[X_L]=16cn_L\ln n>8c\sqrt n\ln n.
	$$
	Hence, a Chernoff bound gives
	$$
	\Pr[X_L<4c\sqrt n\ln n]\le e^{-c\sqrt n\ln n}.
	$$
	
	Given $\cE_1$, every stored low-degree sample has degree at most $4\sqrt n$, so its final rejection step succeeds with probability at least
	$1/(4\sqrt n)$.  Conditional on $X_L\ge4c\sqrt n\ln n$, these rejection steps are independent, and therefore
	$$
	\Pr[\text{no low-degree node succeeds}] \le \left(1-\frac1{4\sqrt n}\right)^{4c\sqrt n\ln n} \le n^{-c}.
	$$
	Combining the two cases proves the lemma.
\end{proof}

\begin{theorem}
	\label{thm:L-DI-upper}
	For every symmetric similarity function
	$sim:\cU\times\cU\to[0,1]$ satisfying
	$sim(x,x)=1$ for every $x\in\cU$, and every fixed constant $c>0$, Algorithms~\ref{alg:L-DI-preprocess} and~\ref{alg:L-DI-output} form an
	$O(n^{-c})$-perfect $L_{\DI}$-sampler.  They
	use three passes and $O_c(\sqrt n\ln n)$ words of space.
\end{theorem}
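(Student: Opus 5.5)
We prove Theorem~\ref{thm:L-DI-upper} by coupling the actual algorithm with the idealized one from Lemma~\ref{lem:L-DI-exact-law}. Let $J$ be the actual output and $J^\ast$ the idealized output. The two runs share all randomness ($W$, $Q$, $h$, and the coins of both rejection steps). They can differ only on the event $\cB$ that the actual run takes one of its three early exits: the space cap, the consistency check on Line~\ref{ln:consistency-check-1}, or the check on Line~\ref{ln:consistency-check-2}. Define the good event $\cG=\cE_1\cap\cE_Q$. We show that on $\cG$, only the space cap can fire. Under $\cE_1$, every $j\in H$ has $d_j\ge\sqrt n$, so the check on Line~\ref{ln:consistency-check-2} never fires. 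Under $\cE_Q$, if $\mathsf{case}=1$ then $\widetilde n_L>\sqrt n$. This forces $n_L>\sqrt n/2$, since otherwise $\cE_Q$ would give $\widetilde n_L\le\sqrt n$. Hence the check on Line~\ref{ln:consistency-check-1} never fires either. By Lemmas~\ref{lem:L-DI-partition} and~\ref{lem:L-DI-resources}, $\Pr[\overline{\cG}]\le 2n^{-c}+2n^{-(c+3)}$, and $\Pr[\cB\cap\cG]\le e^{-\Omega(\sqrt n\ln n)}$. Together these give $\Pr[\cB]=O(n^{-c})$.

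\textbf{Failure probability.} The actual output satisfies $J=\FAIL$ only if $\cB$ occurs or no conceptual sampler succeeds. Lemma~\ref{lem:L-DI-resources} bounds the latter by $n^{-c}+e^{-\Omega(\sqrt n\ln n)}$ on $\cG$. Adding $\Pr[\overline\cG]$ and $\Pr[\cB\cap\cG]$ gives $\Pr[J=\FAIL]=O(n^{-c})$.

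\textbf{Total variation.} Lemma~\ref{lem:L-DI-exact-law} applies to every fixed $(W,Q)$ with $W\in\cE_1$. For such $(W,Q)$, conditioned on $J^\ast\neq\FAIL$, the idealized output has exactly the law $\pi_{\DI}$. Averaging over $(W,Q)$ with $W\in\cE_1$ preserves this, so $\mathcal L(J^\ast\mid J^\ast\ne\FAIL,\cE_1)=\pi_{\DI}$. Outside $\cB$ the coupling gives $J=J^\ast$. Hence for any set $A$ of nodes,
\[
\bigl|\Pr[J\in A, J\ne\FAIL]-\Pr[J^\ast\in A,J^\ast\ne\FAIL]\bigr|\le\Pr[\cB].
\]
The event $\overline{\cE_1}$ is absorbed into the same $O(n^{-c})$ error, since its probability is at most $2n^{-c}$. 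Both $\Pr[J\ne\FAIL]$ and $\Pr[J^\ast\ne\FAIL]$ are $1-O(n^{-c})$. Normalizing, a standard calculation (if $\Pr[E]\ge 1/2$ then $\TV(\mathcal L(\cdot\mid E),\cdot)$ changes by at most $O(\Pr[\bar E]+\Pr[\cB])$) shows that $\TV(\mathcal L(J\mid J\ne\FAIL),\pi_{\DI})=O(n^{-c})$. The main subtlety is exactly this conditioning step. Lemma~\ref{lem:L-DI-exact-law} conditions on $W,Q$, whereas the failure events depend on $h$ and the rejection coins. I would handle this by working with unnormalized sub-probabilities, as in the displayed inequality above, and normalizing only at the end.

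\textbf{Resources.} Three passes are evident from the algorithm. For space, $W$ and $Q$ use $O_c(\sqrt n\ln n)$ words, and $S$ is capped at $80c\sqrt n\ln n$ tuples. The counters $n_L$, $t$, and $\theta$ take $O(1)$ words each. The map $h:[t]\to[n]$ is not stored explicitly. The algorithm only needs, for each $j$, the set $\{i:h(i)=j\}$; equivalently, it needs the independent Bernoulli$(1/n)$ selections for each sampler. In the real-RAM, free-randomness model we simulate these on the fly. The derandomization remark covers replacing this with Nisan's generator, which contributes only an additional $n^{-c}$ error. Therefore the total space is $O_c(\sqrt n\ln n)$ words.
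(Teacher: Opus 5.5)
Your analysis of the sampling guarantee and the failure probability follows the paper's argument. It uses the same coupling with the idealized algorithm and the same bad events; keeping $\overline{\cE_1}$ separate from the early-exit event, rather than folding it into $\cB$ with $\overline{\cE_Q}$, changes nothing. It also uses the same unnormalized-then-normalize computation for the total variation bound. Your check that neither consistency test can fire on $\cE_1\cap\cE_Q$ is correct, and more explicit than the paper, which only asserts it. The problem is in the space paragraph, in how you avoid storing $h$.

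For a fixed sampler $i$, the events $\{h(i)=j\}$, $j\in[n]$, are mutually exclusive and exhaustive. They are therefore not independent Bernoulli$(1/n)$ selections, and simulating them on the fly as independent coins changes the algorithm. A sampler could then store zero or several nodes, which breaks the structure used in Lemma~\ref{lem:L-DI-exact-law}: i.i.d.\ samplers, each yielding at most one node and each succeeding with probability $\DI/n$. For example, if ties within a sampler are broken by arrival order, node $j$ is output with probability proportional to $d_j^{-1}\prod_{j'<j}\bigl(1-\frac{1}{nd_{j'}}\bigr)$. This is off by a constant factor when every $d_j=1$.

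A correct implementation has two options. It can read $h(i)$ on demand from re-readable randomness, scanning all $i\in[t]$ at each arrival. Alternatively, it can generate the counts $|\{i:h(i)=j\}|$ sequentially as binomials and order the surviving samples by independent random keys.

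Citing the derandomization remark does not remove the re-readable-randomness assumption. The streaming algorithm consults every $h(i)$ at every arrival, in stream order, so it is not itself a read-once program over the random blocks. The paper supplies the missing step. For a fixed stream, $W$, and $Q$, the output and the space-cap event are a function of $(h(i),\zeta_i,\xi_i)_{i\in[t]}$. This function is computable by an $O(\log n)$-space program that reads these blocks once, in the order $i=1,\ldots,t$: it decides survival of sampler $i$, updates $|S|$, and records the first success. Only because of this reordering does Nisan's generator with an $O_c(\log^2 n)$-bit seed fool the computation within $n^{-(c+1)}$, which is then absorbed into the $O(n^{-c})$ bounds.
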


\begin{proof}
	Let $\cB$ be the event that $\cE_1$ or $\cE_Q$ fails, or that the number of stored nodes exceeds the space cap.  By Lemmas~\ref{lem:L-DI-partition} and~\ref{lem:L-DI-resources}, $\Pr[\cB]=O(n^{-c}).$ When $\cB$ does not occur, neither consistency check is triggered, and Algorithms~\ref{alg:L-DI-preprocess} and~\ref{alg:L-DI-output} behave exactly as the idealized algorithm. Moreover, Lemma~\ref{lem:L-DI-resources} shows that the probability that no conceptual sampler succeeds is $O(n^{-c})$.  Hence, the algorithm returns $\Null$ with probability $O(n^{-c})$.
	
	It remains to bound the error in the conditional output distribution.  Let $\pi=(d_j^{-1}/\DI)_{j\in[n]}$, and let $X$ denote the output of the algorithm.  By Lemma~\ref{lem:L-DI-exact-law}, for every fixed $W$
	satisfying $\cE_1$ and every fixed $Q$, conditioned on success, the idealized algorithm has distribution $\pi$.  Since $\pi$ is independent of $W$ and $Q$, the same remains true after averaging over these random choices.
	
	Let $r$ be the probability that $\cE_1$ holds and the idealized algorithm returns a node.  For every set $A$ of stream nodes,
	$$
	\Pr[\text{idealized output}\in A,\cE_1]=r\cdot\pi(A).
	$$
	Since the two executions agree whenever $\cB$ does not occur, $\left|\Pr[X\in A]-r \cdot \pi(A) \right|\le\Pr[\cB]$ and $\left|\Pr[X\ne\Null]-r\right|\le\Pr[\cB]$.
	Therefore,
	$$
	\left|\Pr[X\in A\mid X\ne\Null]-\pi(A)\right| \le \frac{2\Pr[\cB]}{\Pr[X\ne\Null]} = O(n^{-c}).
	$$
	Taking the supremum over all sets $A$ of stream nodes and using the definition of total variation distance gives that, conditioned on success, the output distribution is within $O(n^{-c})$ total variation distance of the exact $L_{\DI}$ distribution.
	
	Finally, the arrays $W$ and $Q$ use $O_c(\sqrt n\ln n)$ words, and the explicit cap on $S$ bounds the space used by the conceptual samplers by the same quantity.  All other data structures use lower-order space. 	The algorithm makes exactly three passes.  
	
	We note that $h$ is not stored explicitly: we set $h(i)\coloneqq G(\rho,i)$, where $G$ is Nisan's generator~\cite{Nisan90} with an $O_c(\log^2 n)$-bit
	seed $\rho$, and recompute $h(i)$ from $\rho$ whenever needed.  This does not affect the analysis.  Fix the stream, $W$, and $Q$, and let $\zeta_i$ and $\xi_i$ denote the random values used by sampler $i$ in the two rejection steps.  The output of the algorithm is a function of $(h(i),\zeta_i,\xi_i)_{i\in[t]}$ computable by a program that reads this sequence once, in the order $i=1,\ldots,t$, with $O(\log n)$ bits of state:
	upon reading the $i$-th block it decides whether sampler $i$ survives both rejection steps, updates a counter for $|S|$, and records the node if this
	is the first success.  Nisan's generator fools ever such program to within $n^{-(c+1)}$ in total variation distance~\cite{Nisan90,Ind06}, so every probability in
	Lemmas~\ref{lem:L-DI-exact-law}--\ref{lem:L-DI-resources} changes by at most $n^{-(c+1)}$, which is absorbed into the $O(n^{-c})$ bounds above.

\end{proof}

\subsection{A Lower Bound for Perfect $L_{\DI}$-Sampling}
\label{sec:L-DI-lower-bound}

We reduce diversity-index estimation to perfect $L_{\DI}$-sampling.  The algorithm is presented in Algorithm~\ref{alg:DI-from-L-DI}.


\begin{algorithm}[!t]
	\caption{$\DI$-Estimation from Perfect $L_{\DI}$-Sampling}
	\label{alg:DI-from-L-DI}
	\DontPrintSemicolon
	\SetAlgoNoEnd
	
	\KwIn{a stream $\sigma$ of $n$ nodes, a perfect $L_{\DI}$-sampler $\mathcal A$, and $\eps\in(0,1/5)$}
	\KwOut{a $(1+\eps)$-approximation of $\DI(\sigma)$, or $\Null$}
	
	$b\gets2^{16}\eps^{-2}\ln n$
	
	\For{$\ell\gets0,1,\ldots,\lfloor\log_2n\rfloor$}{
		let $\sigma^{(\ell)}$ be $\sigma$ followed by $2^\ell$ pairwise isolated dummy nodes
		
		run $b$ independent copies of $\mathcal A$ on $\sigma^{(\ell)}$
		
		$Y^{(\ell)}\gets$ number of non-null outputs
		
		$X^{(\ell)}\gets$ number of dummy-node outputs
		
		\If{$Y^{(\ell)}\ge b/3$ and
			$X^{(\ell)}/Y^{(\ell)}\in[1/4,5/8]$}{
			\Return $\left(Y^{(\ell)}/X^{(\ell)}-1\right)2^\ell$
		}
	}
	\Return $\Null$
\end{algorithm}

All copies in Algorithm~\ref{alg:DI-from-L-DI}, for all values of $\ell$, are run in parallel.  Each original stream node is forwarded to every copy, and the dummy nodes are appended to each virtual stream at the end
of every pass.  The dummy nodes are similar only to themselves.  Therefore,
each dummy node has degree one and
$\DI(\sigma^{(\ell)})=\DI(\sigma)+2^\ell.$

\begin{lemma}
	\label{lem:L-DI-reduction}
	Suppose $\mathcal A$ is a $\kappa$-pass, $s$-space perfect $L_{\DI}$-sampler with failure probability at most $0.49$.  Then, for every fixed constant $\eps\in(0,1/5)$, Algorithm~\ref{alg:DI-from-L-DI} is a $\kappa$-pass $(1+\eps)$-approximation algorithm for $\DI$ with failure probability $o(1)$ and space $O\bigl(s\eps^{-2}\log^2n\bigr).$
\end{lemma}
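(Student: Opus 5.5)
The plan has three parts: analyze a single level $\ell$, find a level that is guaranteed to return, and show that every level that returns is accurate.

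\emph{Single level.} Write $D=\DI(\sigma)$ and $\beta_\ell=2^\ell/(D+2^\ell)$. Under the exact $L_{\DI}$ distribution on $\sigma^{(\ell)}$, a non-null output is a dummy node with probability exactly $\beta_\ell$. By Definition~\ref{def:perfect-sampling}, for a perfect sampler with $\delta=O(n^{-c})$, each copy is non-null with probability at least $0.51$. Conditioned on being non-null, it is a dummy with probability $\beta_\ell\pm\delta$, since the dummies form a fixed event. The copies are independent, so $Y^{(\ell)}\sim\mathrm{Bin}(b,\ge0.51)$, and given $Y^{(\ell)}$, $X^{(\ell)}\sim\mathrm{Bin}(Y^{(\ell)},\beta_\ell\pm\delta)$. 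By Chernoff, $Y^{(\ell)}\ge b/3$ with probability $1-n^{-\Omega(\eps^{-2})}$. On that event, $X^{(\ell)}/Y^{(\ell)}=\beta_\ell\pm\eps/64$, say, with failure probability $\exp(-\Omega(\eps^2 b))\le n^{-100}$ by the choice $b=2^{16}\eps^{-2}\ln n$; the term $\delta$ is negligible for large $n$. A union bound over the $O(\log n)$ levels gives a good event $\mathcal G$ of probability $1-o(1)$ on which these bounds hold at every level.

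\emph{A level that returns.} Since $1\le D\le n$, there is an $\ell\le\lfloor\log_2 n\rfloor$ with $2^\ell\in[D/2,D]$ for $D\ge2$ (take $\ell=0$ if $D<2$). At this level $\beta_\ell\in[1/3,1/2]$, or $\beta_\ell\in(1/3,1/2]$ in the edge case. Hence on $\mathcal G$ the ratio lies in $[1/3-\eps/64,\,1/2+\eps/64]\subset[1/4,5/8]$, so the algorithm returns at or before this level and outputs $\Null$ with probability $o(1)$.

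\emph{Accuracy of any returning level.} If the algorithm returns at level $\ell$, the test forces $\hat\beta:=X^{(\ell)}/Y^{(\ell)}\in[1/4,5/8]$. On $\mathcal G$, $|\hat\beta-\beta_\ell|\le\eps/64$, so $\beta_\ell\in[1/4-\eps/64,\,5/8+\eps/64]$, which is bounded away from $0$ and $1$. The estimate is $g(\hat\beta)\,2^\ell$ with $g(x)=1/x-1$, and $D=g(\beta_\ell)\,2^\ell$. On $[0.24,0.64]$ we have $|g'(x)|=1/x^2\le 18$ and $g(x)\ge 1/0.64-1\ge0.56$, so the relative error is
\[
\frac{|g(\hat\beta)-g(\beta_\ell)|}{g(\beta_\ell)}\le\frac{18\cdot\eps/64}{0.56}<\eps .
\]
This step, checking that the acceptance window keeps $\beta_\ell$ away from $1$ so that the relative error of $1/\beta-1$ stays bounded, is where the constants in $b$ and the window must be verified. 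It is the main obstacle. Note that earlier levels with tiny $\beta_\ell$ are harmless, because they fail the window test on $\mathcal G$.

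\emph{Resources.} All $b(\lfloor\log_2 n\rfloor+1)$ copies run in parallel. Appending dummies at the end of each pass preserves the pass count $\kappa$, and the dummies need only an $O(\log n)$ counter per copy. The space is therefore $O(s\,b\log n)=O(s\eps^{-2}\log^2 n)$.
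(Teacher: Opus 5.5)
Your proposal is correct and follows essentially the same route as the paper: per-scale Hoeffding concentration of $Y^{(\ell)}$ and $X^{(\ell)}/Y^{(\ell)}$, a union bound over the $O(\log n)$ scales, the scale $\ell=\lfloor\log_2\DI\rfloor$ (dummy fraction in $(1/3,1/2]$) to guarantee a return, and a relative-error bound at any accepted scale; your mean-value bound on $1/x-1$ plays the role of the paper's exact identity $|q_\ell-\widehat q_\ell|/(\widehat q_\ell(1-q_\ell))$. The only slip is numerical: with $b=2^{16}\eps^{-2}\ln n$, $Y^{(\ell)}\ge b/3$ and deviation $\eps/64$, Hoeffding gives about $2e^{-b\eps^2/6144}\le 2n^{-10}$ rather than $n^{-100}$, which is still ample for the $o(1)$ union bound.
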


\begin{proof}
	Consider a fixed $\ell$, set $m_\ell:= 2^\ell$
	and $q_\ell :={m_\ell}/({m_\ell+\DI}).$ Since every dummy node has degree one, $q_\ell$ is the probability that an exact $L_{\DI}$ sample from $\sigma^{(\ell)}$ is a dummy node. Let $p_\ell$
	be the corresponding probability conditioned on $\mathcal A$ returning a node. Since $\mathcal A$ is $n^{-c}$-perfect and $\sigma^{(\ell)}$ has $O(n)$ nodes, we have $|p_\ell-q_\ell|\le\gamma_\ell$, where
	$\gamma_\ell=O(n^{-c})$. For a fixed constant $\eps$, we have $\gamma_\ell\le\eps/64$ simultaneously for all $\ell$.
	
	Let $s_\ell$ be the success probability of one copy of $\mathcal A$ on
	$\sigma^{(\ell)}$. We have $s_\ell\ge0.51$.  Moreover,
	$$
	Y^{(\ell)}\sim\operatorname{Binomial}(b,s_\ell) \quad \text{and} \quad \left(X^{(\ell)}\mid Y^{(\ell)}=y\right) \sim\operatorname{Binomial}(y,p_\ell).
	$$
	By Hoeffding's inequality and the choice of $b$, for each fixed $\ell$,
	\begin{equation}
		\Pr\Bigg[\left(Y^{(\ell)}<\frac b3\right) \vee \left(\left|
		\frac{X^{(\ell)}}{Y^{(\ell)}}-p_\ell \right|>\frac{\eps}{64} \right)\Bigg] \le n^{-10} + 2\exp\left(-\frac{b\eps^2}{6144}\right)
		\le 3n^{-10},
	\end{equation}
	where the ratio $X^{(\ell)}/Y^{(\ell)}$ is considered only when $Y^{(\ell)}\ge b/3$. A union bound
	over the $O(\log n)$ choices of $\ell$ gives that, with probability at least
	$1-n^{-9}$, simultaneously for every $\ell$,
	\begin{equation}
		Y^{(\ell)}\ge\frac b3 \qquad\text{and}\qquad
		\left|\frac{X^{(\ell)}}{Y^{(\ell)}}-q_\ell\right| \le\frac{\eps}{32}.
		\label{eq:empirical-vs-exact-dummy}
	\end{equation}
	We condition on this event.
	
	Suppose the algorithm returns at scale $\ell$, and write
	$
	\widehat q_\ell={X^{(\ell)}}/{Y^{(\ell)}}.
	$
	The returned estimate $\widehat{\DI}$ and the true value of $\DI$ satisfy
	$
	\widehat{\DI} = m_\ell\frac{1-\widehat q_\ell}{\widehat q_\ell}
	$ 
	and $\DI = m_\ell\frac{1-q_\ell}{q_\ell}$, respectively.
	Therefore,
	\begin{equation}
		\left| \frac{\widehat{\DI}}{\DI}-1 \right| = \left| \frac{q_\ell(1-\widehat q_\ell)}{\widehat q_\ell(1-q_\ell)}-1 \right| = \frac{|q_\ell-\widehat q_\ell|}{\widehat q_\ell(1-q_\ell)}.
		\label{eq:DI-relative-error}
	\end{equation}
	Because scale $\ell$ is accepted, $\widehat q_\ell\in[\frac{1}{4},\frac{5}{8}]$. Moreover,
	\eqref{eq:empirical-vs-exact-dummy} and $\eps\le1/5$ imply 
	$
	q_\ell \le \frac58+\frac{\eps}{32} \le \frac{101}{160}
	$
	and
	$
	\widehat q_\ell(1-q_\ell) \ge\frac14\cdot\frac{59}{160} = \frac{59}{640}.
	$
	Substituting these bounds into \eqref{eq:DI-relative-error} gives
	$
	\left|\frac{\widehat{\DI}}{\DI}-1\right| \le \frac{\eps/32}{59/640} < \eps.
	$
	Thus, every returned value is a $(1+\eps)$-approximation to $\DI$.
	
	It remains to show that the algorithm returns a value. Let $\ell^*=\lfloor\log \DI\rfloor$. Since $1\le\DI\le n$, this is one of the scales considered by the algorithm, and it holds that $m_{\ell^*}\le\DI<2m_{\ell^*}$ and $\frac13<q_{\ell^*}\le\frac12$. By \eqref{eq:empirical-vs-exact-dummy},
	$$
		\frac13-\frac{\eps}{32} < \frac{X^{(\ell^*)}}{Y^{(\ell^*)}} \le \frac12+\frac{\eps}{32}.
	$$
	For $\eps\le1/5$, this interval is contained in $[1/4,5/8]$. Hence, scale $\ell^*$ is accepted unless an earlier scale has already been accepted, and
	in either case the algorithm returns a $(1+\eps)$-approximation to $\DI$.
	
	Finally, the algorithm considers $O(\log n)$ scales and runs $b=O(\eps^{-2}\log n)$ copies of $\mathcal A$ at each scale. All copies are run in parallel, so the number of passes remains $\kappa$, and the total
	space usage is $O\left(s\eps^{-2}\log^2 n\right).$
\end{proof}

\begin{theorem}
	\label{thm:L-DI-lb}
	Any two-pass perfect $L_{\DI}$-sampler with failure probability at most $0.49$ requires
	$\Omega\left({n}/{\log^2n}\right)$ bits of space.
\end{theorem}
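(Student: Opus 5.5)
The plan is a direct composition of Lemma~\ref{lem:L-DI-reduction} with Theorem~\ref{thm:streaming-main}. Suppose $\mathcal A$ is a two-pass perfect $L_{\DI}$-sampler with failure probability at most $0.49$ that uses $s$ bits of space. We instantiate Lemma~\ref{lem:L-DI-reduction} with $\kappa=2$ and a fixed constant, say $\eps=1/20$, so that $1+\eps\le 1.1$. This yields a two-pass algorithm that returns a $1.1$-approximation to $\DI$ with failure probability $o(1)$, which is in particular at most $0.49$ for all sufficiently large $n$. Its space usage is $O(s\log^2 n)$.

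Theorem~\ref{thm:streaming-main} says that any such two-pass algorithm needs $\Omega(n)$ bits. The reduction runs on streams $\sigma^{(\ell)}$ of length at most $2n$, but since $\mathcal A$ is a sampler for all stream lengths, this length blowup only costs a constant factor. We therefore get $s\log^2 n=\Omega(n)$, that is, $s=\Omega(n/\log^2 n)$.

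Two bookkeeping points need care. First, the reduction must remain a genuine two-pass streaming algorithm. The dummy nodes are appended inside each pass: they have fixed labels, so they can be generated on the fly, and they are similar only to themselves. All $O(\log n)$ scales and all $b=O(\log n)$ copies run in parallel. The pass count is therefore preserved, and the space is $O(s\log^2 n)$ plus $O(\log n)$ bits of counters. Second, the hard instance of Theorem~\ref{thm:streaming-main} uses a Boolean similarity function with $O(\log n)$-bit labels. Adding the dummy labels keeps both properties, so the hypothesis of Lemma~\ref{lem:L-DI-reduction} (that $\mathcal A$ is $n^{-c}$-perfect on $\sigma^{(\ell)}$) is applied on the same class of inputs.

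The only delicate step is matching error conventions. Lemma~\ref{lem:L-DI-reduction} uses perfectness with $\delta=O(n^{-c})$ to bound $\abs{p_\ell-q_\ell}\le\eps/64$. The theorem assumes only failure probability $0.49$, but "perfect" still supplies the small total-variation guarantee, so the lemma applies as stated. Once that is checked, the theorem follows immediately.
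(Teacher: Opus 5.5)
Your proposal is correct and matches the paper's proof. It composes Lemma~\ref{lem:L-DI-reduction} with Theorem~\ref{thm:streaming-main}, and it handles the length-$2n$ virtual streams by charging the sampler's space at length $2n$ and rescaling; the paper uses $\eps=0.1$, and your $\eps=1/20$ works just as well.
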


\begin{proof}
	Suppose such a sampler uses $s(N)$ bits on streams of length at most $N$.  Apply Lemma~\ref{lem:L-DI-reduction} with
	$\eps=0.1$ to an $n$-node input.  Every virtual stream has length at most $2n$, so the resulting two-pass diversity-index estimator uses $O(s(2n)\log^2n)$ bits and succeeds with probability greater than $0.51$.  Theorem~\ref{thm:streaming-main} gives
	$s(2n)\log^2n=\Omega(n).$ Replacing $2n$ by $N$ proves the theorem.
\end{proof}

For a constant number of passes, we use the following lower bound from previous work.  For that, we first need to define approximate sampler.

\begin{definition}[Approximate sampler]
	\label{def:approximate-sampling}
	Let $\pi$ be a target distribution on the stream nodes.  A streaming algorithm is a $(1+\eps)$-approximate sampler for $\pi$ with failure probability $\lambda$ if it outputs a stream node $J$ or $\FAIL$, and $\Pr[J=\FAIL]\le\lambda$ and $\Pr[J=j\mid J\ne\FAIL]=(1\pm\eps)\,\pi(j)$ for every $j \in [n]$.
\end{definition}

\begin{theorem}[\cite{LZ26b}]
	\label{thm:prior-L-DI-constant-pass-lb}
	Let $\eps\in(0,0.9)$.  Any constant-pass $(1+\eps)$-approximate $L_{\DI}$-sampler with failure probability at most $0.49$ requires $\Omega(\sqrt n)$ bits of space.
\end{theorem}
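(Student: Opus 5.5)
The plan is a reduction from two-party Set Disjointness, which needs $\Omega(k)$ bits of total communication for randomized protocols with any number of rounds, to approximate $L_{\DI}$-sampling, using $k=\sqrt n$. The stream has two contiguous blocks: Alice's nodes first, then Bob's. A $\kappa$-pass, $s$-space sampler then gives a protocol with $2\kappa-1$ messages of $s$ bits each, so $\kappa$ constant yields $s=\Omega(\sqrt n)$.

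\textbf{Construction.} Alice holds $x\in\bits^k$ and Bob holds $y\in\bits^k$. The similarity function is Boolean.
\begin{itemize}
	\item For each $i$ with $x_i=1$, Alice inserts a node $a_i$. The $a_i$'s are pairwise dissimilar.
	\item For each $i$ with $y_i=0$, Bob inserts a block of $\sqrt n$ nodes $(\mathsf B,i,r)$, $r\in[\sqrt n]$. Each of these is similar to $a_i$.
	\item All of Bob's nodes are mutually similar. Bob pads this clique with extra clique nodes so that the stream length is exactly $n$ (up to constant factors, which is harmless).
\end{itemize}

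\textbf{Degree bounds.}
\begin{itemize}
	\item If $x_i=1$ and $y_i=0$, then $d(a_i)\ge\sqrt n$. These nodes together contribute at most $k/\sqrt n=1$ to $\DI$.
	\item If $x_i=y_i=1$, then $a_i$ has degree exactly $1$.
	\item Every Bob node lies in a clique of size $\Theta(n)$, so Bob's nodes contribute $O(1)$ in total.
\end{itemize}
Hence $\DI=O(1)$ on every instance. If the sets intersect, some $a_i$ with $d(a_i)=1$ has $L_{\DI}$-probability $\Omega(1)$. Since $\eps<0.9$, an approximate sampler outputs such a degree-one node with probability at least $(1-\eps)\cdot\Omega(1)\cdot 0.51=\Omega(1)$.

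\textbf{Decision rule.} Bob, as the last party, sees the sampler's output node. He declares ``intersect'' iff the output is some $a_i$ with $y_i=1$.
\begin{itemize}
	\item There are no false positives: $a_i$ exists only if $x_i=1$.
	\item If the sets intersect, this event occurs with constant probability. Running $O(1)$ independent copies in parallel amplifies this above $2/3$ while keeping the same number of passes.
\end{itemize}
Converting passes to messages as above, the resulting protocol solves $\mathrm{DISJ}_k$ with $O(\kappa s)$ communication.

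\textbf{Main obstacle.} The delicate step is keeping $\DI$ bounded by a constant while the stream has length $n$. The padding must go into Bob's large clique rather than into isolated nodes, and the blocks attached to $a_i$ must be large enough, $\sqrt n$, that the non-intersecting Alice nodes carry negligible mass. The choice $k=\sqrt n$ is exactly what balances these constraints against the length budget $k\sqrt n\le n$, and that balance is where the $\sqrt n$ bound comes from. One must also check that the approximation guarantee $(1\pm\eps)$ with $\eps<0.9$ still leaves constant mass on the intersecting node; it does, by the bound above.
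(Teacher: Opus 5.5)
The paper does not prove this statement. It imports it from \cite{LZ26b} and uses it only as a black box in the proof of Theorem~\ref{thm:L-DI-constant-pass-lb}, so there is no in-paper argument to match. Your direct reduction from Set Disjointness is a valid self-contained proof. Your explanation of where $\sqrt n$ comes from is also right: block size $b$ with $kb\le n$ and $k/b=O(1)$.

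Three points should be tightened.

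First, the claim that $\DI=O(1)$ on every instance is false. If $|x\cap y|=t$, the $t$ isolated Alice nodes add $t$ to $\DI$. Nothing breaks, because your decision rule only uses the aggregate event. All other nodes contribute at most $2$, so the intersecting Alice nodes carry total $L_{\DI}$-mass $t/\DI\ge t/(t+2)\ge 1/3$. Alternatively, restrict to the unique-intersection promise, under which Disjointness still needs $\Omega(k)$.

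Second, Bob cannot pad to length exactly $n$ without knowing $|x|$. You can fix this in either of two ways:
\begin{itemize}
\item Let Alice emit one node per coordinate. For instance, make the node for $x_i=0$ similar to every Bob node, so all such nodes carry $O(n^{-1/2})$ mass. This also makes stream positions identify coordinates if the sampler reports positions rather than labels.
\item Adopt the convention used in the proof of Theorem~\ref{thm:L-DI-lb} that the sampler works on all streams of length at most $n$.
\end{itemize}
Either way, scale $k$ or the block size by a constant so that $k\sqrt n$ Bob nodes plus Alice's nodes fit within the length budget.

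Third, the sampler's randomness must be consistent across the simulated passes, so invoke the public-coin version of the $\Omega(k)$ Disjointness bound.

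It is worth noting why this direct route is the natural one here, rather than the dummy-node machinery the paper uses for its own sampler lower bounds (Lemma~\ref{lem:L-DI-reduction}). Turning a sampler into a $\DI$ estimator requires the dummy-output probability to be accurate to $O(\eps)$ additively. A $(1+\eps)$-approximate sampler with $\eps$ close to $0.9$ would only give a constant-factor estimate, and would therefore need a constant-pass lower bound for constant-factor $\DI$ estimation. Your reduction is one-sided and only needs the constant lower bound $0.51\cdot(1-\eps)\cdot\frac13>0.017$ on hitting an intersecting node, uniformly over $\eps<0.9$. It yields the explicit bound $s=\Omega(\sqrt n/\kappa)$.
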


\begin{theorem}
	\label{thm:L-DI-constant-pass-lb}
	Fix a constant $c>2$ and an integer $\kappa\ge1$.  Any $\kappa$-pass $n^{-c}$-perfect $L_{\DI}$-sampler requires $\Omega(\sqrt n)$ bits of space.
\end{theorem}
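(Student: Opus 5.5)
The plan is to show that an $n^{-c}$-perfect sampler with $c>2$ is automatically a $(1+\eps)$-approximate sampler in the sense of Definition~\ref{def:approximate-sampling}, for some fixed $\eps\in(0,0.9)$, and with failure probability at most $0.49$. Theorem~\ref{thm:prior-L-DI-constant-pass-lb} then applies directly and gives the $\Omega(\sqrt n)$ bound for any constant number $\kappa$ of passes.

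The core step converts the total-variation guarantee into a pointwise multiplicative guarantee. By Definition~\ref{def:perfect-sampling}, $\TV(\mathcal L(J\mid J\ne\FAIL),\pi_{\DI})\le\delta$ with $\delta=O(n^{-c})$. In particular, for every single node $j$,
$$
\bigl|\Pr[J=j\mid J\ne\FAIL]-\pi_{\DI}(j)\bigr|\le\delta .
$$
Next we lower bound every target probability. Since $d_j\le n$ and $\DI\le n$ (each $d_j\ge1$), we have $\pi_{\DI}(j)=d_j^{-1}/\DI\ge 1/n^2$. Therefore the relative error at $j$ is at most
$$
\delta/\pi_{\DI}(j)\le O(n^{-c})\cdot n^2=O(n^{2-c}),
$$
which is $o(1)$ because $c>2$. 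So for all sufficiently large $n$ the sampler satisfies $\Pr[J=j\mid J\ne\FAIL]=(1\pm\eps)\pi_{\DI}(j)$ for every $j$, for any fixed $\eps$, say $\eps=1/2$. This is exactly why the hypothesis $c>2$ is needed.

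The failure probability is $\Pr[J=\FAIL]\le\delta=O(n^{-c})\le 0.49$ for large $n$. The sampler is thus a $\kappa$-pass $(1+\eps)$-approximate $L_{\DI}$-sampler with failure probability at most $0.49$ and the same space. Invoking Theorem~\ref{thm:prior-L-DI-constant-pass-lb} gives $\Omega(\sqrt n)$ bits. Small $n$ is absorbed into the constant in $\Omega(\cdot)$.

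The only delicate point is the interpretation of ``$\delta=O(n^{-c})$''. The hidden constant must be uniform, so that the bound $O(n^{2-c})\le\eps$ holds for all $n\ge n_0$. We also need that the prior theorem is an asymptotic statement, so that restricting attention to $n\ge n_0$ costs nothing. Beyond that, the argument is a direct reduction, and I do not expect a real obstacle.
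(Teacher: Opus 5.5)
Your proposal is correct and matches the paper's proof. Both arguments use $\pi_{\DI}(j)\ge 1/n^2$ to convert the $n^{-c}$ total-variation guarantee into a pointwise $(1\pm n^{2-c})$ multiplicative guarantee, observe that the failure probability is below $0.49$, and invoke Theorem~\ref{thm:prior-L-DI-constant-pass-lb}; the only difference is that the paper takes $\eps=0.1$ instead of your $1/2$, which is immaterial.
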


\begin{proof}
	For every node $\sigma_j$, we have $d_j\le n$ and $\DI\le n$, and hence $d_j^{-1}/\DI\ge1/n^2.$
	Therefore, an $n^{-c}$-perfect sampler returns $\sigma_j$, conditioned on success, with probability
	$
		\frac{d_j^{-1}}{\DI}\pm n^{-c} = \left(1\pm n^{2-c}\right)\frac{d_j^{-1}}{\DI}.
	$
	Since $c>2$, this is a $(1+0.1)$-approximate $L_{\DI}$-sampler, and its failure probability is at
	most $n^{-c}<0.49$.  Applying
	Theorem~\ref{thm:prior-L-DI-constant-pass-lb} with $\eps=0.1$ proves the theorem.
\end{proof}

\section{A Two-Pass Perfect $L_{M_p}$-Sampler}
\label{sec:LMp-sampling}

Fix constants $p>0$ and $c>0$.  In this section, we give a two-pass
$n^{-c}$-perfect $L_{M_p}$-sampler using
$O_{p,c}(n^{p/(p+1)}\log n)$ words of space.  

For a multiset $W$ and a node $v$, let
$d_W(v)\coloneqq\sum_{x\in W}sim(x,v)$.  When
$|W|=w$, write $\wt d(v)=\frac nw d_W(v)$.  The algorithm uses $W$ as a witness set to classify nodes and to estimate high degrees up to a constant factor.

Before presenting the $L_{M_p}$-sampling algorithm, we provide some background and a brief technical overview.

\subsection{Background and Technical Overview}
We first explain why two natural approaches do not give the desired two-pass perfect sampler.  A direct adaptation of our three-pass $L_{\DI}$-sampling approach would require an additional pass to recover the exact degree of a high-degree node selected in Pass~2, and hence would not give a two-pass algorithm. 

The previous two-pass $L_{M_p}$-sampling algorithm~\cite{LZ26b} uses the following exact exponential-scaling identity.  Let $R_1,\ldots,R_n$ be independent $\operatorname{Exp}(1)$ random variables.  The node with the smallest value of $R_j/d_j^p$ is distributed exactly according to
$d_j^p/M_p$.  For low-degree nodes, the previous algorithm stores every node that can be the minimizer with high probability and computes its exact degree in Pass~2.  The difficulty lies in the high-degree nodes.  A high-degree node is processed only when it appears in Pass~2, and by then the prefix of the stream needed to compute its exact degree has already passed.  The previous algorithm therefore uses the witness estimate $\wt d_j$ and compares $R_j/\wt d_j^p$.  Conditioned
on the witness set, this gives the high-degree node $\sigma_j$ the weight $\wt d_j^p$, rather than the required weight $d_j^p$.  Unless all degree estimates are exact, the conditional output distribution is not the exact $L_{M_p}$ distribution.  A larger or more accurate witness set makes this distortion smaller, but does not remove it.  This is why the previous analysis gives a
$(1+\eps)$-approximate sampling guarantee rather than perfect sampling in the sense of Definition~\ref{def:perfect-sampling}.

\vspace{2mm}
\noindent{\bf The new exact degree correction.\ }
The new ingredient is a deferred exact correction for a high-degree node selected in Pass~2.  The witness estimate is used only to construct a degree
upper bound $U$ satisfying
$
4d(v)\le U\le16d(v).
$
The algorithm first selects the node with weight $U^p$.  It then accepts the sample with probability $(d(v)/U)^p$, so the extra weight cancels exactly:
$$
U^p\left(\frac{d(v)}{U}\right)^p=d(v)^p.
$$
The remaining question is how to perform this rejection-sampling step {\em without} knowing $d(v)$ and {\em without} making a third pass.

In Pass~1, the algorithm stores $O(\log n)$ independent degree-witness sets. For each set $P_s$, every stream node $x$ is assigned an independent random value $R_s(x)\sim\operatorname{Unif}[0,1]$, and we store $x$ whenever $R_s(x)\le1-e^{-1/(4\Delta)}$, where $\Delta\coloneqq n^{1/(p+1)}$. After a high-degree node $v$ and its value $U$ have been selected in Pass~2, we use $P_s$ to generate a bit that equals $1$ if some stream node $x$ satisfies $R_s(x)\le1-e^{-sim(x,v)/U}$, and equals $0$ otherwise. Since each stream node $x$ fails this test independently with probability $e^{-sim(x,v)/U}$, this bit equals $1$ with probability
$$
q=1-e^{-d(v)/U}.
$$
Moreover, $U\ge4\Delta$ and $sim(x,v)\le1$, so $1-e^{-sim(x,v)/U}\le1-e^{-1/(4\Delta)}$; hence every random value needed for this test was stored in Pass~1.

By the exact coin-simulation theorem of Nacu and Peres~\cite{NP05}, independent $\operatorname{Bernoulli}(q)$ samples can then be used to generate a coin with success probability
$$
\bigl(-\ln(1-q)\bigr)^p = \left(\frac{d(v)}{U}\right)^p.
$$
Thus, the random information stored in Pass~1 allows the algorithm to apply an exact degree-dependent correction to a node that is selected only in Pass~2.
\smallskip

Now, fix a witness set for which the high-low classification is correct, and let
$\cH$ and $\cL$ denote the sets of indices corresponding to nodes classified as high and low degree,
respectively.  For $j\in\cH$, let $U_j=U(\sigma_j)$, and define
$$
B_L\coloneqq n\Delta^p, \qquad B_H\coloneqq\sum_{j\in\cH}U_j^p, \qquad \text{and} \quad B\coloneqq B_L+B_H.
$$
In one trial, the algorithm chooses the low-degree procedure with probability
$B_L/B$ and the high-degree procedure with probability $B_H/B$.

For a low-degree node $\sigma_j$, the algorithm samples a node uniformly and
uses its exact degree in the rejection step.  Hence
$$
\Pr[\text{one trial returns }\sigma_j] = \frac{B_L}{B}\cdot\frac1n\cdot \left(\frac{d_j}{\Delta}\right)^p = \frac{d_j^p}{B}.
$$
For a high-degree node $\sigma_j$, weighted sampling selects it with
probability $U_j^p/B_H$, and the exact rejection step accepts it with
probability $(d_j/U_j)^p$.  Therefore,
$$
\Pr[\text{one trial returns }\sigma_j] =\frac{B_H}{B}\cdot\frac{U_j^p}{B_H}\cdot \left(\frac{d_j}{U_j}\right)^p =\frac{d_j^p}{B}.
$$
Thus every node has the same normalizing denominator $B$.  Conditioned on a
trial returning a node, the output probability of $\sigma_j$ is exactly
$d_j^p/M_p$.  Repeating independent trials and returning the first successful
one preserves this distribution.

\subsection{The Two-Pass Algorithm}

\noindent{\bf Exact rejection sampling from degree-witness sets.\ }
We use the following consequence of the exact coin-simulation theorem of Nacu and Peres~\cite{NP05}. A $q$-bit is a Bernoulli random variable that equals $1$ with probability $q$ and $0$ otherwise.  Uniformity of $a_p,\rho_p$ over $q\in I$ follows from Proposition~21 of~\cite{NP05}, as $I$ is a closed interval and $\phi_p$ is real analytic on $I$ with values in $(0,1)$.

\begin{lemma}[\cite{NP05}]
	\label{lem:exact-rejection}
	Let $I\coloneqq\left[1-e^{-1/16},\,1-e^{-1/4}\right]$
	and $\phi_p(q)\coloneqq\bigl(-\ln(1-q)\bigr)^p.$ For every fixed $p>0$, there is an algorithm $\ExactReject$ that, given
	independent $q$-bits for an unknown $q\in I$, outputs a bit $A$ satisfying $\Pr[A=1]=\phi_p(q).$  Moreover, there are constants $a_p>0$ and $\rho_p\in(0,1)$, depending only on $p$ and \emph{not} on $q$, such that the number $N$ of input bits used satisfies $\Pr[N>m]\le a_p\rho_p^m$ for
	every $q\in I$ and every integer $m\ge0$.
\end{lemma}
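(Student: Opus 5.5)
The plan is to reduce the statement to the Bernoulli factory theorem of Nacu and Peres~\cite{NP05}, together with their uniform tail bound for the running time (Proposition~21). First we check that $\phi_p$ satisfies their hypotheses on $I$. Substitute $x=-\ln(1-q)$. As $q$ ranges over $I$, $x$ ranges over $[1/16,1/4]$, and $\phi_p(q)=x^p$. The map $q\mapsto-\ln(1-q)$ is real analytic on $(0,1)$. The map $x\mapsto x^p$ is real analytic on $(0,\infty)$ for every real $p>0$, since $x^p=e^{p\ln x}$. Hence $\phi_p$ is real analytic on an open neighbourhood of the compact interval $I\subset(0,1)$. Its values lie in $[16^{-p},4^{-p}]$, a compact subinterval of $(0,1)$. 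So $\phi_p$ maps $I$ into $[\varepsilon,1-\varepsilon]$ for some $\varepsilon=\varepsilon(p)>0$, and $I$ sits inside $[\varepsilon',1-\varepsilon']$.

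Next we invoke the Nacu--Peres characterization. A function $f$ on a closed interval $I\subset(0,1)$ has a Bernoulli factory (simulable from i.i.d.\ $q$-bits) if $f$ is continuous and $\min(f,1-f)$ is bounded below by a positive multiple of $\min(q,1-q)^n$ on $I$. Both conditions hold trivially here, because $f$ is bounded away from $0$ and $1$ and $q$ is bounded away from $0$ and $1$. This gives an algorithm $\ExactReject$ that stops almost surely and outputs $A$ with $\Pr[A=1]=\phi_p(q)$ exactly, for every $q\in I$. For real analytic $f$ on $I$ with values in a compact subset of $(0,1)$, their Proposition~21 strengthens this to fast simulation: the number of coins $N$ has exponentially decaying tails. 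We make the uniformity explicit. Their construction uses a pair of polynomial sequences $g_n\le f\le h_n$ in Bernstein form with $h_n-g_n\le C\rho^n$. The coefficients, and hence the constants $C$ and $\rho$, depend only on $f$ and $I$, not on the unknown $q$. The stopping rule is that the algorithm halts at stage $n$ unless a uniform variable falls between $g_n(q)$ and $h_n(q)$. So $\Pr[N>m]$ is bounded by $\sup_{q\in I}\bigl(h_{n(m)}(q)-g_{n(m)}(q)\bigr)\le a_p\rho_p^m$, where $n(m)$ is the last stage fully funded by $m$ bits and the stage sizes grow geometrically. Constants depending only on $p$ follow.

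The main obstacle is this uniformity over $q$. Proposition~21 is usually stated for a fixed $q$, so we must trace its proof and confirm that the approximation error bounds come from a Taylor or complex-analytic estimate of $f$ on a neighbourhood of $I$, which is uniform on the compact $I$. If that tracing is delicate, the fallback is to note that the tail bound $\Pr[N>m]\le\sup_{q\in I}(h_n-g_n)$ is a worst case over $q$, and that $\sup_{q\in I}(h_n-g_n)$ decays geometrically by uniform convergence of the analytic approximation on compact sets.
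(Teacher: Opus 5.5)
Your proposal is correct and follows the paper's route. The paper checks only that $\phi_p$ is real analytic on the closed interval $I$ with values in $(0,1)$; your substitution $x=-\ln(1-q)\in[1/16,1/4]$, giving range $[16^{-p},4^{-p}]$, makes this explicit. It then invokes the fast-simulation result of Nacu and Peres~\cite{NP05} (Proposition~21) for both exactness and the exponential tail, and it takes the uniformity of $a_p,\rho_p$ over $q\in I$ directly from that proposition. Your additional sketch of that uniformity, via $\Pr[N>n]=h_n(q)-g_n(q)$ for Bernstein envelopes whose coefficients do not depend on $q$, is a reasonable justification of the same point rather than a different argument.
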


Our algorithm uses $K=\Theta_{p,c}(\log n)$ independent degree-witness sets
$P_1,\ldots,P_K$. For every $s\in[K]$ and every stream node $x$, we independently draw $R_s(x)\sim\operatorname{Unif}[0,1]$.  The set $P_s$ stores $(x,R_s(x))$ only when $R_s(x)\le\tau_0$, where $\tau_0\coloneqq1-e^{-1/(4\Delta)}$.

For a node $v$ and a value $U\ge4\Delta$, define the bit
\begin{equation*}
	\label{eq:degree-coin}
	\DegreeCoin(v,U;s) \coloneqq \one\left[\exists j\in[n]:\ R_s(\sigma_j)\le1-e^{-sim(\sigma_j,v)/U}\right].
\end{equation*}
The bit can be computed from $P_s$ because, for every stream node $x$,
$1-e^{-sim(x,v)/U}\le1-e^{-1/U}\le\tau_0$.  It equals zero exactly when every stream node $x$ receives a random value larger than $1-e^{-sim(x,v)/U}$.  Therefore,
\begin{equation}
	\label{eq:degree-coin-probability}
	\Pr[\DegreeCoin(v,U;s)=1]
	=1-\prod_{j\in[n]}e^{-sim(\sigma_j,v)/U}
	=1-e^{-d(v)/U}.
\end{equation}
The bits obtained from different sets $P_s$ are independent.

\vspace{2mm}
\noindent{\bf The algorithm.\ }
Let $a_p$ and $\rho_p$ be the constants in
Lemma~\ref{lem:exact-rejection}, and set $\alpha_p\coloneqq16^{-p}$.
Fix a sufficiently small constant
$\eta_p\in(0,-\ln\rho_p)$ such that
\begin{equation}
	\label{eq:lambda-choice}
	\frac{a_p(e^{\eta_p}-1)}{1-e^{\eta_p}\rho_p} \le \frac{\alpha_p}{2}.
\end{equation}
Such a constant exists since the left-hand side tends to zero as $\eta_p$ tends to zero.

The perfect $L_{M_p}$-sampling algorithm is presented in Algorithms~\ref{alg:preprocess} and~\ref{alg:output} with inline comments.
Algorithm~\ref{alg:preprocess} describes the two streaming passes, and
Algorithm~\ref{alg:output} describes the output procedure.  

The array $L[1..t]$ consists of independent uniform stream samples selected in Pass~1, and $D_L[r]$ records the exact degree of $L[r]$ computed in Pass~2.  During Pass~2, each $H[r]$ is maintained as an independent weighted sample from the nodes classified as high degree, and $B_H$ records their total weight.  The algorithm classifies a node $v$ as high degree if $d_W(v)\ge\theta$ and low degree otherwise.  For a high-degree node, define
\begin{equation}
	\label{eq:degree-upper-bound}
	U(v)\coloneqq\max\{4\Delta,8\wt d(v)\}.
\end{equation}

\begin{algorithm}[!ht]
	\caption{Two-pass preprocessing for $L_{M_p}$-sampling}
	\label{alg:preprocess}
	\DontPrintSemicolon
	\SetAlgoNoEnd
	
	\KwIn{a stream $\sigma=(\sigma_1,\ldots,\sigma_n)$}
	\KwOut{the state used by Algorithm~\ref{alg:output}, or $\FAIL$}
	
	$C_W\gets64(c+10)$,
	$w\gets C_W\frac{n}{\Delta}\ln n$, $\theta\gets\frac{C_W}{2}\ln n$,
	$t\gets2(c+10)(\Delta^p+16^p)\ln n$
	
	$C_K\gets\max\left\{\frac{12(c+8)}7,\frac{c+9}{\eta_p}\right\}$,
	$K\gets C_K\ln n$,
	$S_{\max}\gets2K\frac n\Delta$
	
	Sample independent uniform maps $h_W:[w]\to[n]$ and $h_L:[t]\to[n]$
	\gray{\tcc*{choose uniform stream positions}}
	
	Initialize $W[1..w]$, $L[1..t]$, and $P_1,\ldots,P_K$
	\gray{\tcc*{witness set, low-degree samples, and degree-witness sets}}
	
	\nonl\gray{\tcp{Pass 1: store witness set, low-degree samples, and degree-witness sets}}
	\ForEach{incoming $\sigma_j$}{
		\lForEach{$a\in[w]$ such that $h_W(a)=j$}{
			$W[a]\gets\sigma_j$
		}
		\lForEach{$r\in[t]$ such that $h_L(r)=j$}{
			$L[r]\gets\sigma_j$
		}
		\For{$s\gets1,\ldots,K$}{
			draw $R_s(\sigma_j) \sim \Unif[0,1]$
			
			\If{$R_s(\sigma_j)\le1-e^{-1/(4\Delta)}$}{
				insert $(\sigma_j,R_s(\sigma_j))$ into $P_s$
				\gray{\tcc*{store only values that may be queried later}}
				
				\If{the total number of entries in $P_1,\ldots,P_K$ exceeds $S_{\max}$}{
					\Return $\FAIL$
					\gray{\tcc*{the space cap is exceeded}}
				}
			}
		}
	}
	
	\nonl\gray{\tcp{Pass 2: compute exact low degrees and obtain weighted high-degree samples}}
	$D_L[1..t]\gets0$, $H[1..t]\gets\FAIL$, $B_H\gets0$
	\gray{\tcc*{$H[r]$ is the $r$th weighted high-degree sample}}
	
	\ForEach{incoming $\sigma_j$}{
		\For{$r\gets1,\ldots,t$}{
			$D_L[r]\gets D_L[r]+sim(\sigma_j,L[r])$
			\gray{\tcc*{compute the exact degree of $L[r]$}}
		}
		\If{$d_W(\sigma_j)\ge\theta$}{
			$U_j\gets\max\{4\Delta,8(n/w)d_W(\sigma_j)\}$
			\gray{\tcc*{a constant-factor degree upper bound}}
			
			$a_j\gets U_j^p$, $B_H\gets B_H+a_j$
			\gray{\tcc*{update the total high-degree weight}}
			
			\For{$r\gets1,\ldots,t$}{
				with probability $a_j/B_H$, set $H[r]\gets(\sigma_j,U_j)$
				\gray{\tcc*{weighted sampling}}
			}
		}
	}
	\Return $(W,L,D_L,H,B_H,P_1,\ldots,P_K)$
\end{algorithm}

\begin{algorithm}[!ht]
	\caption{Output by repeated rejection sampling}
	\label{alg:output}
	\DontPrintSemicolon
	\SetAlgoNoEnd
	
	\KwIn{the state returned by Algorithm~\ref{alg:preprocess}}
	\KwOut{an $L_{M_p}$ sample, or $\FAIL$}
	
	$B_L\gets n\Delta^p$, $B\gets B_L+B_H$, $s\gets1$
	\gray{\tcc*{low- and high-degree normalizing weights}}
	
	\For{$r\gets1,\ldots,t$}{
		choose $Z_r\in\{\Low,\High\}$ with $\Pr[Z_r=\Low]=B_L/B$
		
		\eIf{$Z_r=\Low$}{
			\If{$d_W(L[r])<\theta$ and $D_L[r]\le\Delta$}{
				with probability $(D_L[r]/\Delta)^p$, \Return $L[r]$
				\gray{\tcc*{use the exact low degree}}
			}
		}{
			\If{$H[r]\ne\FAIL$}{
				let $H[r]=(v,U)$
				\gray{\tcc*{$v$ was sampled with weight $U^p$}}
				
				start $\ExactReject$ as in Lemma~\ref{lem:exact-rejection}
				
				\While{$\ExactReject$ requests another input bit}{
					\If{$s>K$}{
						\Return $\FAIL$
						\gray{\tcc*{the degree-witness sets are exhausted}}
					}
					give it $\DegreeCoin(v,U;s)$ and set $s\gets s+1$
					\gray{\tcc*{the bit has probability $1-e^{-d(v)/U}$}}
				}
				\lIf{$\ExactReject$ outputs $1$}{
					\Return $v$
				}
			}
		}
	}
	\Return $\FAIL$
\end{algorithm}

\vspace{2mm}
\noindent{\bf The analysis.\ }
For the analysis, define the \emph{idealized output procedure} to be
Algorithm~\ref{alg:output} with an unlimited supply of independent
degree-witness sets.

Let
$\cH\coloneqq\{j\in[n]:d_W(\sigma_j)\ge\theta\}$ and
$\cL\coloneqq[n]\setminus\cH.$

Define $\cE_W$ to be the event that all of the following properties hold: (1) $d_j<\Delta$ for every $j\in\cL$; (2) $d_j>\Delta/4$ for every $j\in\cH$; and (3) $\wt d_j\in[d_j/2,2d_j]$ for every $j\in\cH$.

\begin{lemma}\label{lem:witness}
	$\Pr[\cE_W]\ge1-n^{-(c+8)}.$
	Conditioned on $\cE_W$, every $j\in\cH$ satisfies
	$4d_j\le U_j\le16d_j,$
	$
	\frac1{16}\le\frac{d_j}{U_j}\le\frac14,
	$
	and $U_j\ge4\Delta.$
\end{lemma}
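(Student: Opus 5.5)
Fix a node $j$ and set $X_j=d_W(\sigma_j)=\sum_{a\in[w]}sim(W[a],\sigma_j)$. The entries of $W$ are independent uniform stream samples (via $h_W$), so $X_j$ is a sum of $w$ independent $[0,1]$-valued variables with
$$
\E[X_j]=\frac{w d_j}{n}=\frac{C_W d_j\ln n}{\Delta},\qquad \wt d_j=\frac nw X_j .
$$
The threshold is $\theta=\frac{C_W}{2}\ln n$, which equals the mean exactly when $d_j=\Delta/2$. Each of the three properties of $\cE_W$ will be reduced to a multiplicative Chernoff bound for a single $j$, followed by a union bound over the $n$ nodes. As in Lemma~\ref{lem:L-DI-partition}, we use stochastic domination to handle nodes whose mean is smaller (or larger) than the extreme case.

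For property (1), the contrapositive says that any node with $d_j\ge\Delta$ is classified high. Such a node has $\E[X_j]\ge C_W\ln n=2\theta$, so
$$
\Pr[X_j<\theta]\le\Pr\bigl[X_j\le\E[X_j]/2\bigr]\le e^{-\E[X_j]/8}\le n^{-C_W/8}=n^{-8(c+10)}.
$$
For property (2), any node with $d_j\le\Delta/4$ has $\E[X_j]\le\theta/2$. Dominating by the case of mean exactly $\theta/2$, the upper Chernoff bound with $\delta=1$ gives $\Pr[X_j\ge\theta]\le e^{-\theta/6}=n^{-C_W/12}$. For property (3), the case split is different: if $d_j\ge\Delta/4$, then $\E[X_j]\ge \frac{C_W}{4}\ln n=16(c+10)\ln n$. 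The two-sided bound with $\delta=1/2$ then gives
$$
\Pr\bigl[|X_j-\E X_j|>\E X_j/2\bigr]\le 2e^{-\E[X_j]/12}\le 2n^{-4(c+10)/3}.
$$
On this event $\wt d_j\in[d_j/2,\,3d_j/2]\subseteq[d_j/2,\,2d_j]$.

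We then combine the properties. On the intersection of the three good events for all $j$, any $j\in\cH$ satisfies $d_j>\Delta/4$ by property (2), so the property (3) bound applies to it. A union bound over $n$ nodes gives total failure probability at most $3n\cdot n^{-4(c+10)/3}\le n^{-(c+8)}$ for $n$ large. This step only needs the constants to be generous, and $C_W=64(c+10)$ clearly is.

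The consequences are arithmetic. For $j\in\cH$ we have $U_j=\max\{4\Delta,8\wt d_j\}\ge4\Delta$ by definition, and $U_j\ge 8\wt d_j\ge 4d_j$. For the upper bound, $8\wt d_j\le16d_j$, and $4\Delta<16d_j$ since $d_j>\Delta/4$, so $U_j\le16d_j$. Dividing gives $1/16\le d_j/U_j\le1/4$.

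The only point needing care is the ordering of the case analysis. Property (3) must be proved for all nodes with $d_j\ge\Delta/4$, not just those in $\cH$, because $\cH$ is itself random and depends on $W$. Stating it this way avoids conditioning on the classification, and the union bound handles the rest.
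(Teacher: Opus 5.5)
Your proof is correct and follows the paper's argument. It uses the same three per-node Chernoff bounds (for $d_j\ge\Delta$, for $d_j\le\Delta/4$, and concentration of $\wt d_j$ for all nodes above $\Delta/4$), then a union bound over the $n$ nodes, then the same arithmetic for $U_j$. The only difference is cosmetic: in property (3) you apply $\delta=1/2$ to both tails, which gives $\wt d_j\in[d_j/2,3d_j/2]$ with exponent $C_W/48$ instead of the paper's $[d_j/2,2d_j]$ with exponent $C_W/32$, and this is still more than enough for the $n^{-(c+8)}$ bound.
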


\begin{proof}
	Fix $j\in[n]$ and let $X_j=d_W(\sigma_j)$.  Since the entries of $W$ are
	independent uniform samples from the stream, the random variables
	$\{sim(W[a],\sigma_j) : a\in[w]\}$ are independent and lie in
	$[0,1]$.  We have
	$
	X_j=\sum_{a\in[w]}sim(W[a],\sigma_j)
	$ 
	and
	$
	\mu_j\coloneqq\E[X_j] = C_W\frac{d_j}{\Delta}\ln n.
	$
	
	If $d_j\ge\Delta$, then $\mu_j\ge C_W\ln n$ and
	$\theta\le\mu_j/2$.  A Chernoff bound gives
	\begin{equation}
		\label{eq:witness-high}
		\Pr[X_j<\theta]
		\le\Pr[X_j<\mu_j/2]
		\le e^{-\mu_j/8}
		\le n^{-C_W/8}.
	\end{equation}
	
	If $d_j\le\Delta/4$, then $\mu_j\le\mu_0=(C_W/4)\ln n$ and
	$\theta=2\mu_0$.  A Chernoff bound gives
	\begin{equation}
		\Pr[X_j\ge\theta]
		\le e^{-\mu_0/3}
		\le n^{-C_W/12}.
		\label{eq:witness-low}
	\end{equation}
	
	Finally, suppose that $d_j>\Delta/4$.  Then
	$\mu_j>(C_W/4)\ln n$, and a Chernoff bound gives
	\begin{equation}
		\Pr\left[X_j\notin[\mu_j/2,2\mu_j]\right]
		\le e^{-\mu_j/8}+e^{-\mu_j/3}
		\le2n^{-C_W/32}.
		\label{eq:witness-estimate}
	\end{equation}
	A union bound over all nodes in \eqref{eq:witness-high}--\eqref{eq:witness-estimate}
	shows that the probability that any of the three defining properties of
	$\cE_W$ fails is at most
	$$
	n^{1-C_W/8}+n^{1-C_W/12}+2n^{1-C_W/32}  \le n^{-(c+8)},
	$$
	where the last inequality uses $C_W=64(c+10)$.
	
	For $j\in\cH$, the event $\cE_W$ gives $8\wt d_j\in[4d_j,16d_j]$ and $d_j>\Delta/4$.  Hence, $4\Delta<16d_j$.  Taking the maximum in
	\eqref{eq:degree-upper-bound} preserves the upper bound $16d_j$ and gives the lower bounds $U_j\ge4d_j$ and $U_j\ge4\Delta$.  Dividing by $U_j$ gives the remaining inequalities.
\end{proof}

The following lemma shows that the degree-witness sets use bounded space while providing the exact degree-dependent rejection probability for each high-degree sample.

\begin{lemma}
	\label{lem:storage-correction}
	The probability that Pass~1 stores more than $S_{\max}=2Kn/\Delta$ entries in the degree-witness sets is at most $n^{-(c+8)}$.  Conditioned on $\cE_W$, if $(v,U)$ is any high-degree sample, then $\ExactReject$ (see Lemma~\ref{lem:exact-rejection}), supplied with
	independent bits $\DegreeCoin(v,U;s)$, accepts $v$ with probability $\left(\frac{d(v)}{U}\right)^p.$
\end{lemma}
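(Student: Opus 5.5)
The two claims of Lemma~\ref{lem:storage-correction} are independent, and I will prove them separately.

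\emph{Storage bound.} The total number of stored entries is $N_P=\sum_{s\in[K]}\sum_{j\in[n]}\one[R_s(\sigma_j)\le\tau_0]$. This is a sum of $Kn$ independent Bernoulli$(\tau_0)$ variables. Since $\tau_0=1-e^{-1/(4\Delta)}\le 1/(4\Delta)$, the mean satisfies $\E[N_P]\le Kn/(4\Delta)$, while the threshold is $S_{\max}=2Kn/\Delta$, at least eight times the mean. A multiplicative Chernoff bound (e.g.\ $\Pr[X\ge R]\le 2^{-R}$ for $R\ge 6\E[X]$) gives
\[
\Pr[N_P>S_{\max}]\le 2^{-2Kn/\Delta}.
\]
Now $n/\Delta=n^{p/(p+1)}\ge1$ and $K=C_K\ln n$ with $C_K\ge 12(c+8)/7$. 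Hence the bound is at most $e^{-2\ln 2\, C_K\ln n}\le n^{-(c+8)}$. I will check that the constant $12/7$ is enough, using $2\ln2\cdot12/7>1$, or fall back on the bound $e^{-\mu\delta^2/(2+\delta)}$ with $\delta=7$, which gives exactly $e^{-(49/9)\mu}$-type constants consistent with $12/7$.

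\emph{Exact correction.} I will condition on $\cE_W$ and on the realization of everything that determines the high-degree sample $(v,U)$, namely $W$, the Pass~2 randomness, and $h_L$. The key point is that the $R_s(\cdot)$ are drawn independently of all of these, so conditioning does not affect their law.

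First, for a single $s$, $\DegreeCoin(v,U;s)$ is computable from $P_s$. By Lemma~\ref{lem:witness}, $U\ge4\Delta$, so $1-e^{-sim(x,v)/U}\le\tau_0$. Any node that triggers the indicator is therefore stored in $P_s$, and nodes not stored have $R_s>\tau_0$, which cannot trigger it. By \eqref{eq:degree-coin-probability}, the bit is a $q$-bit with $q=1-e^{-d(v)/U}$.

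Second, the bits for different $s$ are independent. This holds because the families $\{R_s(x)\}_x$ are independent across $s$. Each index $s$ is used at most once, since $s$ is incremented globally, so the bits fed to $\ExactReject$ are i.i.d.\ $q$-bits. I expect this to be the main subtle point: across different trials $r$ the same $P_s$ is never reused, so even though $v$ may have been chosen adaptively, the bits used for the current sample come from fresh, unused families that are independent of the past.

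Finally, Lemma~\ref{lem:witness} gives $d(v)/U\in[1/16,1/4]$, so $q\in I$. Lemma~\ref{lem:exact-rejection} then yields acceptance probability
\[
\phi_p(q)=\bigl(-\ln(1-q)\bigr)^p=\bigl(d(v)/U\bigr)^p .
\]
This statement concerns the idealized procedure with an unlimited supply of witness sets; exhaustion is handled separately.
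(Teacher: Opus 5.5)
Your proposal is correct and follows essentially the same route as the paper. The paper also uses a Chernoff bound on the binomially dominated count of stored entries. You use the $\Pr[X\ge R]\le 2^{-R}$ form for $R\ge 6\,\E[X]$, where the paper uses $e^{-\delta\mu/3}$ with $\delta=7$; both close with $n/\Delta\ge1$ and $C_K\ge 12(c+8)/7$. The paper then likewise checks that $q=1-e^{-d(v)/U}\in I$ via Lemma~\ref{lem:witness} and invokes Lemma~\ref{lem:exact-rejection}. Your added justification that the $\DegreeCoin$ bits are computable from $P_s$ and are fresh i.i.d.\ $q$-bits spells out what the paper states in the text preceding the lemma.
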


\begin{proof}
	Each of the $nK$ (node, set) pairs is stored independently with probability
	$$
	\tau_0=1-e^{-1/(4\Delta)}\le \frac{1}{4\Delta}.
	$$
	Therefore, the total number of stored entries is stochastically dominated by
	a binomial random variable with mean
	$
	\mu=\frac{Kn}{4\Delta}.
	$
	Since $S_{\max}=8\mu$, a Chernoff bound gives
	$$
	\Pr\bigg[\sum_{s \in [K]}|P_s|>S_{\max}\bigg] \le\exp\left(-\frac{7Kn}{12\Delta}\right) \le\exp\left(-\frac{7K}{12}\right) \le n^{-(c+8)},
	$$
	where the last inequality follows from
	$K=C_K\ln n$ and $C_K\ge12(c+8)/7$.
	
	Now condition on $\cE_W$ and fix a high-degree sample $(v,U)$.  By
	Lemma~\ref{lem:witness}, $d(v)/U\in[1/16,1/4]$.  Equation
	\eqref{eq:degree-coin-probability} shows that every input bit has probability
	$$
	q=1-e^{-d(v)/U} \in\left[1-e^{-1/16},1-e^{-1/4}\right].
	$$
	Lemma~\ref{lem:exact-rejection} therefore gives
	$$
	\Pr[\ExactReject\text{ outputs }1] =\phi_p(q) = \bigl(-\ln(e^{-d(v)/U})\bigr)^p = \left(\frac{d(v)}U\right)^p.
	$$
\end{proof}

We next bound the number of degree-witness sets needed by the algorithm.
\begin{lemma}\label{lem:total-coins}
	Conditioned on $\cE_W$, the probability that the idealized output procedure
	uses more than $K$ degree-witness sets is at most $n^{-(c+8)}$.
\end{lemma}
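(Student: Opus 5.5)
The idealized output procedure makes at most $t$ trials, and only trials with $Z_r=\High$ and $H[r]\neq\FAIL$ consume degree-witness sets. Each such trial runs one fresh copy of $\ExactReject$ and consumes $N_r$ sets, where $N_r$ is the number of input bits that copy requests. Conditioned on $\cE_W$, Lemma~\ref{lem:storage-correction} shows that every bit fed to a copy has parameter $q\in I$. So Lemma~\ref{lem:exact-rejection} applies to every copy with the uniform constants $a_p,\rho_p$.

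The difficulty is that $t=\Theta(\Delta^p\log n)$ is polynomially large, so a union bound over all trials will not work. Instead I will use the fact that the procedure stops at the first success, and that each high trial succeeds with probability at least $\alpha_p=16^{-p}$ (since $(d/U)^p\ge 16^{-p}$). My plan is a supermartingale (exponential moment) argument on the total number of sets used.

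Let $T$ be the total number of sets consumed. Consider a single high trial. Its contribution is $N_r$ bits, and afterwards the procedure continues only if $\ExactReject$ output $0$. Define the per-trial quantity
\[
\E\bigl[e^{\eta_p N_r}\,\one[\text{output }0]\bigr].
\]
I will bound it by
\[
\E\bigl[e^{\eta_p N_r}\one[\text{output }0]\bigr]\le \E\bigl[e^{\eta_p N_r}\bigr]-\Pr[\text{output }1]\le \E\bigl[e^{\eta_p N_r}\bigr]-\alpha_p ,
\]
which uses $e^{\eta_p N_r}\ge 1$ on the event that the output is $1$. The tail bound of Lemma~\ref{lem:exact-rejection} then controls the moment:
\[
\E\bigl[e^{\eta_p N}\bigr]=1+\sum_{m\ge0}(e^{\eta_p}-1)e^{\eta_p m}\Pr[N>m]\le 1+\frac{a_p(e^{\eta_p}-1)}{1-e^{\eta_p}\rho_p}\le 1+\frac{\alpha_p}{2},
\]
where the last step is exactly \eqref{eq:lambda-choice}. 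Hence $\E[e^{\eta_p N_r}\one[\text{output }0]]\le 1-\alpha_p/2\le1$.

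Low trials and empty high trials consume nothing, so they contribute a factor of at most $1$. Chaining these bounds over the trials, conditioned on the history, gives $\E[e^{\eta_p T}]\le 1+\alpha_p/2\le 2$. More carefully, I will bound the contribution of the last, successful trial by $\E[e^{\eta_p N}]\le 2$ and that of every earlier, failed trial by a factor $\le1$, via induction backward over $r$. Here I must check that the fresh $\DegreeCoin$ bits are independent of the past given $\cE_W$: each trial uses new sets $P_s$, whose $R_s$ values are independent of $W$, $L$, $H$ and of earlier sets. This is the step needing the most care.

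Finally, Markov's inequality gives
\[
\Pr[T>K]\le 2e^{-\eta_p K}=2n^{-\eta_p C_K}\le 2n^{-(c+9)}\le n^{-(c+8)},
\]
using $C_K\ge (c+9)/\eta_p$ and $n\ge2$.

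The main obstacle I expect is justifying the conditional independence across trials. The choice of $(v,U)=H[r]$ depends on $W$ and Pass~2, but not on the sets $P_s$. So conditioned on the whole state except the $R_s$ values, the $\DegreeCoin$ bits are i.i.d.\ $q$-bits, and the sets are consumed sequentially.
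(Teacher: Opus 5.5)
\textbf{Gap in the aggregation step.} Your ingredients match the paper's: the single-call bound $\E[e^{\eta_p N}]\le 1+\alpha_p/2$ from the uniform tail bound and \eqref{eq:lambda-choice}; the bound $\E[e^{\eta_p N}\one[A=0]]\le 1-\alpha_p/2$ from $\Pr[A=1]\ge\alpha_p$; an exponential moment of the total number of sets; and Markov. The problem is how you combine the per-call bounds.

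First, the claim $\E[e^{\eta_p T}]\le 1+\alpha_p/2$ does not follow from them. Write $x=\E[e^{\eta_p N}\one[A=1]]$ and $y=\E[e^{\eta_p N}\one[A=0]]$. Two identical high trials give $\E[e^{\eta_p T}]=x+y(x+y)$. Taking $x+y=1+\alpha_p/2$ and $y=1-\alpha_p/2$, which your bounds allow, this equals $1+\alpha_p-\alpha_p^2/4>1+\alpha_p/2$.

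Second, the ``more careful'' version fails as described. Suppose each failed trial only contributes a factor $\le 1$. Summing over the random index of the last trial then gives a bound linear in the number of high trials, i.e.\ polynomial in $n$, which is exactly the obstacle you flagged at the start. A backward induction does not close either. Let $V_r$ be the conditional exponential moment of the sets used from trial $r$ on. With hypothesis $V_{r+1}\le 2$ you get $V_r\le x_r+y_rV_{r+1}\le x_r+2y_r$, and $y_r\le 1$ only yields $x_r+2y_r\le 2+\alpha_p/2$.

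\textbf{Repair.} What is missing is the strict contraction $y_r\le 1-\alpha_p/2$, which you derived and then weakened to $\le 1$. Using it,
\[
x_r+2y_r=\E[e^{\eta_p N_r}]+y_r\le\Bigl(1+\frac{\alpha_p}{2}\Bigr)+\Bigl(1-\frac{\alpha_p}{2}\Bigr)=2,
\]
so the induction closes with invariant $2$. Low and empty trials give $V_r\le\max\{1,V_{r+1}\}$, and the base case is $V_{t+1}=1$. Your Markov step $\Pr[T>K]\le 2e^{-\eta_p K}\le n^{-(c+8)}$ then goes through.

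\textbf{Comparison with the paper.} With this repair your route is a valid alternative. The paper fixes $W,L,H,Z$, ignores low-degree acceptances, and pads with hypothetical calls up to the first acceptance $G$. It then sums $\sum_k(1-\alpha_p/2)^{k-1}(1+\alpha_p/2)$ to get $\E[e^{\eta_p S}]\le(1+\alpha_p/2)/(\alpha_p/2)$. Your recursion gives the sharper constant $2$ and handles low trials and termination after $t$ trials directly. As a result, your final inequality needs only $n\ge 2$, whereas the paper's needs $n\ge 1+2\cdot 16^p$. Your independence argument is the same as the paper's and is fine: each call uses fresh sets $P_s$, independent of $W$, $L$, $H$ and the $Z_r$.
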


\begin{proof}
	We first derive two bounds for one call to $\ExactReject$.  Let $N$ be the
	number of input bits used and let $A$ be its output.  For every $q\in \left[1-e^{-1/16},1-e^{-1/4}\right]$, using $e^{\eta_p N}-1 = (e^{\eta_p}-1)\sum_{m=0}^{N-1}e^{\eta_p m}$ and taking expectations, we obtain
	\begin{equation}
		\E[e^{\eta_p N}]-1 = (e^{\eta_p}-1) \sum_{m\ge0}e^{\eta_p m}\Pr[N>m] \le \frac{a_p(e^{\eta_p}-1)}{1-e^{\eta_p}\rho_p}
		\le\frac{\alpha_p}{2},
		\label{eq:one-call-moment}
	\end{equation}
	where the last inequality is due to \eqref{eq:lambda-choice}.
	Also, Lemma~\ref{lem:storage-correction} gives
	$\Pr[A=1]\ge\alpha_p$.  It follows that
	\begin{equation}
		\E[e^{\eta_p N}\one[A=0]] \le\Pr[A=0]+\E[e^{\eta_p N}-1] \le1-\frac{\alpha_p}{2},
		\label{eq:reject-moment}
	\end{equation}
	and
	\begin{equation}
		\E[e^{\eta_p N}\one[A=1]] \le \E[e^{\eta_p N}] \stackrel{\text{by}\ \eqref{eq:one-call-moment}}{\le} 1+\frac{\alpha_p}{2}.
		\label{eq:accept-moment}
	\end{equation}
	
	Fix $W$ satisfying $\cE_W$, and condition on the random choices defining $L[1..t]$, $H[1..t]$, and $Z_1,\ldots,Z_t$.  Ignoring possible acceptance by the low-degree procedure can only increase the number of degree bits used, so the remaining high-degree calls form a fixed sequence.  For the analysis, assign each such call an independent infinite supply of degree-witness sets and random bits; this is distributionally equivalent to using the next unused degree-witness set whenever a bit is requested.
	
	Let $(N_i,A_i)$ denote the number of input bits and the output of the $i$-th high-degree call.  For the analysis, consider the sequence of high-degree calls made by the output procedure. If none accepts, append independent hypothetical high-degree calls, each satisfying the same bounds as an actual call, until the first acceptance occurs. Let $G$ denote the index of this first accepting call.  Since every call accepts with probability at least $\alpha_p$,
	$
	\Pr[G>k]\le(1-\alpha_p)^k,
	$
	so $G$ is finite with probability one.
	
	Define
	$S=\sum_{i \in [G]} N_i$.  Conditional on the fixed sequence, the pairs
	$(N_i,A_i)$ are independent.  Equations \eqref{eq:reject-moment} and
	\eqref{eq:accept-moment} therefore imply
	\begin{eqnarray}
		\E[e^{\eta_p S}]
		&=&\sum_{k\ge1} \E\left[e^{\eta_p\sum_{i=1}^kN_i}\one[A_1=0,\ldots,A_{k-1}=0,A_k=1]\right]\nonumber\\
		&\le& \sum_{k\ge1}\left(1-\frac{\alpha_p}{2}\right)^{k-1}
		\left(1+\frac{\alpha_p}{2}\right) = \frac{1+\alpha_p/2}{\alpha_p/2}.
		\label{eq:total-moment}
	\end{eqnarray}
	Since \eqref{eq:total-moment} holds for every $W$ satisfying $\cE_W$
	and every fixing of the conditioned random choices, averaging over these
	choices gives
	\begin{equation}
		\E[e^{\eta_p S}\mid\cE_W]
		\le
		\frac{1+\alpha_p/2}{\alpha_p/2}.
		\label{eq:total-moment-condition}
	\end{equation}
	
	Let $T$ be the total number of
	$\DegreeCoin$ bits used before the procedure terminates. We always have $T \le S$, since the output procedure may stop at a low-degree sample or after its $t$ trials.  Markov's inequality and \eqref{eq:total-moment-condition} give
	\begin{eqnarray*}
		\Pr[T>m\mid\cE_W] \le \Pr[S>m\mid\cE_W]
		&=& \Pr[e^{\eta_p S}>e^{\eta_p m}\mid\cE_W]\\
		&\le& e^{-\eta_p m}\E[e^{\eta_p S}\mid\cE_W]\\
		&\le& \frac{1+\alpha_p/2}{\alpha_p/2}e^{-\eta_p m}.
	\end{eqnarray*}
	
	Finally, since $K=C_K\ln n$ and $C_K\ge(c+9)/\eta_p$, we have
	$
	\frac{1+\alpha_p/2}{\alpha_p/2} e^{-\eta_p K} \le n^{-(c+8)}
	$.  
\end{proof}

Fix $W$ satisfying $\cE_W$, and define
$B_L\coloneqq n\Delta^p$, $B_H\coloneqq\sum_{j\in\cH}U_j^p$,
and $B\coloneqq B_L+B_H.$

The following lemma shows that each trial produces exactly the desired degree-proportional weights and that the repeated trials succeed with high probability.

\begin{lemma}\label{lem:sampling}
	Conditioned on $W$ and $\cE_W$, each trial of the idealized output procedure returns $\sigma_j$ with probability $d_j^p/B$ for every $j\in[n]$. Hence, conditioned on at least one trial succeeding, the first returned node is distributed as $(d_j^p/M_p)_{j\in[n]}$. Moreover,
	$
	\Pr[\text{all $t$ trials fail}\mid W,\cE_W]\le n^{-2(c+10)}.
	$
\end{lemma}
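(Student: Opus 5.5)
Fix $W$ satisfying $\cE_W$ and analyze one trial $r$ of the idealized output procedure; the trials are independent given $W$, because $L[r]$, $H[r]$, $Z_r$ and the degree-witness sets consumed in trial $r$ are fresh for each $r$. The argument splits into the two branches of $Z_r$, as in the technical overview.

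\textbf{Low branch.} The trial takes this branch with probability $B_L/B$. Then $L[r]$ is uniform over stream positions, and $D_L[r]=d(L[r])$ exactly. For $j\in\cL$, property (1) of $\cE_W$ gives $d_j<\Delta$, so the test $d_W<\theta$, $D_L\le\Delta$ passes. The trial returns $\sigma_j$ with probability
$$\frac{B_L}{B}\cdot\frac1n\cdot\Bigl(\frac{d_j}{\Delta}\Bigr)^p=\frac{d_j^p}{B}.$$
For $j\in\cH$, the low branch never returns $\sigma_j$, because of the $d_W<\theta$ test.

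\textbf{High branch.} The trial takes this branch with probability $B_H/B$. I will argue that the reservoir-style update in Pass~2 leaves $H[r]=(\sigma_j,U_j)$ with probability exactly $U_j^p/B_H$ for $j\in\cH$. This is the standard weighted-reservoir telescoping product $\frac{a_j}{B^{(j)}}\prod_{k>j,k\in\cH}\bigl(1-\frac{a_k}{B^{(k)}}\bigr)=a_j/B_H$. The first high node is taken with probability $1$, so $H[r]\neq\FAIL$ whenever $\cH\ne\emptyset$; if $\cH=\emptyset$, then $B_H=0$ and the branch is never chosen. By Lemma~\ref{lem:storage-correction}, with an unlimited supply of witness sets, $\ExactReject$ accepts with probability $(d_j/U_j)^p$. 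The $\DegreeCoin$ bits of the fresh sets are independent of $H[r]$ and $W$, since the $R_s$ are drawn independently. The product is again $d_j^p/B$. Summing the two branches shows that each trial returns $\sigma_j$ with probability exactly $d_j^p/B$. The total success probability per trial is $M_p/B$, and the geometric-sum computation of Lemma~\ref{lem:L-DI-exact-law} then gives the exact conditional law $d_j^p/M_p$ for the first success.

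\textbf{Failure bound.} I need $M_p/B$ bounded below. Since $d_j\ge1$, we have $M_p\ge n$, and $B_L=n\Delta^p$. For $j\in\cH$, Lemma~\ref{lem:witness} gives $U_j\le16d_j$, so $B_H\le16^p\sum_{j\in\cH}d_j^p\le16^pM_p$. Hence
$$\frac{M_p}{B}\ge\frac{M_p}{n\Delta^p+16^pM_p}\ge\frac{1}{\Delta^p+16^p},$$
using $n\Delta^p\le\Delta^pM_p$. Then $(1-M_p/B)^t\le\exp\bigl(-t/(\Delta^p+16^p)\bigr)=n^{-2(c+10)}$ by the choice $t=2(c+10)(\Delta^p+16^p)\ln n$.

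The main point requiring care is the independence and exactness in the high branch. I must verify that the weighted reservoir's marginal is exactly $U_j^p/B_H$ conditioned on $W$. I must also check that the degree coins used for $H[r]$ are unaffected by earlier trials, which holds in the idealized procedure because each set $P_s$ is consumed once.
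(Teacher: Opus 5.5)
Your proposal is correct and follows the paper's proof essentially step for step. The shared steps are the per-branch computations $\frac{B_L}{B}\cdot\frac1n\cdot(d_j/\Delta)^p$ and $\frac{B_H}{B}\cdot\frac{U_j^p}{B_H}\cdot(d_j/U_j)^p$, the geometric-sum cancellation, and the bound $M_p/B\ge 1/(\Delta^p+16^p)$ obtained from $M_p\ge n$ and $U_j\le 16d_j$. Your extra remarks fill in details the paper leaves implicit or handles by reference to Lemma~\ref{lem:total-coins}: the case $\cH=\emptyset$, the fact that the low branch never returns a node of $\cH$, and the use of a fresh degree-witness set for each requested bit.
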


\begin{proof}
	The update rule for $H[r]$ is weighted reservoir sampling with weights $U_j^p$.  Therefore, when $\cH\ne\emptyset$,
	\begin{equation}
		\Pr[H[r]=(\sigma_j,U_j)\mid W] = \frac{U_j^p}{B_H}
		\qquad \forall j\in\cH.
		\label{eq:weighted-sample}
	\end{equation}
	
	Now fix $j\in\cL$.  The trial chooses the low-degree procedure with
	probability $B_L/B$, and $L[r]$ equals $\sigma_j$ with probability $1/n$.
	Conditioned on $\cE_W$, the two checks in Algorithm~\ref{alg:output} pass and
	$D_L[r]=d_j<\Delta$.  Therefore,
	\begin{equation}
		\Pr[\text{one trial returns }\sigma_j]
		= \frac{B_L}{B}\cdot\frac1n\cdot \left(\frac{d_j}{\Delta}\right)^p = \frac{d_j^p}{B}.
		\label{eq:low-one-trial}
	\end{equation}
	
	For a fixed $j\in\cH$, by \eqref{eq:weighted-sample} and Lemma~\ref{lem:storage-correction},
	\begin{equation}
		\Pr[\text{one trial returns }\sigma_j]
		= \frac{B_H}{B}\cdot\frac{U_j^p}{B_H}\cdot
		\left(\frac{d_j}{U_j}\right)^p = \frac{d_j^p}{B}.
		\label{eq:high-one-trial}
	\end{equation}
	Equations \eqref{eq:low-one-trial} and \eqref{eq:high-one-trial} prove the first statement of the lemma.  
	Summing over all nodes shows that one trial succeeds
	with probability $q\coloneqq\frac{M_p}{B}.$
	The trials are independent; as in the proof of Lemma~\ref{lem:total-coins}, assigning a separate independent sequence of degree-witness sets to each trial does not change their distribution. Hence,
	$$
	\Pr[\text{the first returned node is }\sigma_j] = \frac{d_j^p}{B}\sum_{r=0}^{t-1}(1-q)^r.
	$$
	Conditioning on the event that at least one trial succeeds cancels the common
	geometric factor and gives $d_j^p/M_p$.
	
	It remains to analyze the probability that all $t$ trials fail.  Lemma~\ref{lem:witness}
	gives
	$$
	B_H=\sum_{j\in\cH}U_j^p \le 16^p\sum_{j\in\cH}d_j^p \le16^pM_p.
	$$
	Since $M_p\ge n$,
	$
	B_L=n\Delta^p\le\Delta^pM_p.
	$
	Therefore,
	$$
	q=\frac{M_p}{B} \ge \frac1{\Delta^p+16^p}.
	$$
	Using the definition of $t$,
	$$
	\Pr[\text{all $t$ trials fail}\mid W,\cE_W] = (1-q)^t \le \exp\left(-\frac{t}{\Delta^p+16^p}\right) = n^{-2(c+10)}.
	$$
\end{proof}

\begin{theorem}\label{thm:main}
	For every symmetric similarity function
	$sim:\cU\times\cU\to[0,1]$ satisfying
	$sim(x,x)=1$ for every $x\in\cU$, and every fixed $p>0$ and $c>0$, Algorithms~\ref{alg:preprocess}
	and~\ref{alg:output} form an $n^{-c}$-perfect $L_{M_p}$-sampler.  The algorithm uses two passes and $O_{p,c}\left(n^{1-1/(p+1)}\log n\right)$
	words of space.
\end{theorem}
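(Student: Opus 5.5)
The proof will mirror that of Theorem~\ref{thm:L-DI-upper}: compare the actual algorithm with the idealized output procedure, bound the probability of the bad event on which they can differ, and then bound space and passes.

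\textbf{Bad event and failure probability.} Let $\cB$ be the event that one of the following occurs:
\begin{itemize}
\item $\cE_W$ fails;
\item Pass~1 exceeds the cap $S_{\max}$;
\item the idealized output procedure would use more than $K$ degree-witness sets.
\end{itemize}
Lemmas~\ref{lem:witness}, \ref{lem:storage-correction} and~\ref{lem:total-coins} give $\Pr[\cB]\le 3n^{-(c+8)}$. For the third event, I would apply Lemma~\ref{lem:total-coins} conditioned on $\cE_W$ and average over $W$. Off $\cB$, the actual Algorithms~\ref{alg:preprocess}--\ref{alg:output} and the idealized procedure produce identical outputs. The reason is that the only other $\FAIL$ exits are the cap exit and the $s>K$ exit, and both lie inside $\cB$. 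The same coupling argument handles the degree-witness sets: reading the next unused set $P_s$ is distributionally the same as drawing a fresh independent bit, since different $P_s$ are independent and each bit has the probability given in~\eqref{eq:degree-coin-probability}. The algorithm therefore returns $\FAIL$ only on $\cB$ or when all $t$ trials fail. By Lemma~\ref{lem:sampling} the latter has probability at most $n^{-2(c+10)}$, so $\Pr[\FAIL]=O(n^{-(c+8)})\le n^{-c}$ for large $n$.

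\textbf{Total variation.} Lemma~\ref{lem:sampling} shows that for every $W$ satisfying $\cE_W$, the idealized output conditioned on success is exactly $\pi_{M_p}$. Since $\pi_{M_p}$ does not depend on $W$, the same holds after averaging over $W$. Let $r$ be the probability that $\cE_W$ holds and the idealized procedure returns a node. For every set $A$ of nodes,
\[
\bigl|\Pr[J\in A]-r\,\pi_{M_p}(A)\bigr|\le\Pr[\cB]
\quad\text{and}\quad
\bigl|\Pr[J\ne\FAIL]-r\bigr|\le\Pr[\cB].
\]
Hence $\TV\le 2\Pr[\cB]/\Pr[J\ne\FAIL]=O(n^{-(c+8)})\le n^{-c}$.

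\textbf{Space and passes.} With $\Delta=n^{1/(p+1)}$, the components use the following space:
\begin{itemize}
\item $W$ uses $w=O_c(n/\Delta\cdot\log n)=O_c(n^{p/(p+1)}\log n)$ words;
\item $L$, $D_L$ and $H$ use $t=O_{p,c}(\Delta^p\log n)=O_{p,c}(n^{p/(p+1)}\log n)$ words;
\item the degree-witness sets are capped at $S_{\max}=2Kn/\Delta=O_{p,c}(n^{p/(p+1)}\log n)$ words;
\item $B_H$, $s$, and the state of $\ExactReject$ need only $O(1)$ words.
\end{itemize}
The maps $h_W$ and $h_L$ cost $O(w+t)$ words to store, which is within budget. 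The random values $R_s$ are drawn on the fly and kept only when stored. The algorithm makes exactly two passes.

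\textbf{Main obstacle.} I expect the delicate point to be justifying that the actual procedure, which draws $\DegreeCoin$ bits from a shared pool of $K$ sets across trials, has the same joint distribution as the idealized independent-supply version. The key fact is that each $P_s$ is used for exactly one bit, after which $s$ increments, so every bit queries fresh randomness independent of all earlier choices. Under $\cE_W$, each queried threshold $1-e^{-sim(x,v)/U}$ is at most $\tau_0$, so every value needed was actually stored. The discretization of real values to $O(\log n)$ bits is absorbed into the $n^{-c}$ error as stated in the introduction.
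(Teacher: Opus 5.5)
Your proposal is correct and follows essentially the same route as the paper's proof. It uses the same bad event $\cB$ (failure of $\cE_W$, exceeding $S_{\max}$, or the idealized procedure needing more than $K$ degree-witness sets) with $\Pr[\cB]\le 3n^{-(c+8)}$, the same coupling with the idealized output procedure, the same $r\cdot\pi(A)$ total-variation bound, and the same space accounting. Your extra remarks only spell out what the paper leaves brief: that reading the shared pool $P_1,\ldots,P_K$ one set per bit is distributionally equivalent to fresh independent bits, and that storing $h_W,h_L$ costs $O(w+t)$ words.
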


\begin{proof}
	Let $\cB$ be the event that at least one of the following occurs: $\cE_W$ fails, the storage cap $S_{\max}$ is exceeded, or the idealized output procedure uses more than $K$ degree-witness sets.  By Lemmas~\ref{lem:witness}--\ref{lem:total-coins},
	$\Pr[\cB]\le 3n^{-(c+8)}.$
	If $\cB$ does not happen, the algorithm behaves exactly as the idealized output procedure.  By Lemma~\ref{lem:sampling}, the latter returns a node according to the target distribution $(d_j^p/M_p)_{j\in[n]}$, conditioned on returning
	a node, and all $t$ trials fail with probability at most
	$n^{-2(c+10)}$.  Hence, the algorithm returns $\FAIL$ with probability at most
	$
	\Pr[\cB]+n^{-2(c+10)} \le 4n^{-(c+8)} \le n^{-c}.
	$
	
	Let $X$ and $\wt X$ denote the outputs of the algorithm and of the idealized procedure, and let $\pi=(d_j^p/M_p)_{j\in[n]}$.  By Lemma~\ref{lem:sampling}, $\Pr[\wt X\in A\mid W]$ is a $W$-dependent scalar times $\pi(A)$; averaging over $W$ satisfying $\cE_W$ gives $\Pr[\wt X\in A,\cE_W]=r\cdot\pi(A)$ for $r\coloneqq\Pr[\wt X\ne\FAIL,\,\cE_W]$.  The two executions agree off $\cB$, so $|\Pr[X\in A]-r\pi(A)|\le\Pr[\cB]$ and
	$|\Pr[X\ne\FAIL]-r|\le\Pr[\cB]$. Therefore,
	$$
	\left|\Pr[X\in A\mid X\ne\FAIL]-\pi(A)\right| \le\frac{2\Pr[\cB]}{\Pr[X\ne\FAIL]}\le n^{-c}.
	$$
	Taking the supremum over $A$ proves the sampling guarantee.
	
	It remains to bound the space usage.  The witness set uses $w=O_{p,c}\left(\frac n\Delta\log n\right)$
	words.  The arrays $L$, $D_L$, and $H$ use
	$O_{p,c}(t)=O_{p,c}(\Delta^p\log n)$ words.  The degree-witness sets use at most
	$S_{\max}=O_{p,c}\left(\frac n\Delta\log n\right)$
	words.  Since $\Delta^p=n/\Delta=n^{p/(p+1)}$, the total space is
	$O_{p,c}\left(n^{p/(p+1)}\log n\right) = O_{p,c}\left(n^{1-1/(p+1)}\log n\right).$
	The pass complexity follows directly from the algorithm description.
\end{proof}

\subsection{Lower Bounds for Perfect $L_{M_p}$-Sampling}
\label{sec:LMp-lower-bounds}

We reduce degree-moment estimation to perfect $L_{M_p}$-sampling. 

Fix $\eps\in(0,1/20]$, and let
$\ell_{\min}\coloneqq\lfloor\log_2 n\rfloor$,
$\ell_{\max}\coloneqq\lceil(p+1)\log_2 n\rceil$, and
$\mathcal I\coloneqq\{\ell_{\min},\ldots,\ell_{\max}\}$.
For every $\ell\in\mathcal I$, set
$x_\ell\coloneqq\lceil2^{\ell/(p+1)}\rceil$ and
$W_\ell\coloneqq x_\ell^{p+1}$.  Clearly, for $n$ sufficiently large,
\begin{equation}
	2^\ell\le W_\ell<2^{\ell+1},
	\qquad \forall \ell\in\mathcal I.
	\label{eq:dummy-scale}
\end{equation}

Let $\sigma^{(\ell)}$ be obtained from $\sigma$ by appending $x_\ell$ new nodes that form a clique and are dissimilar to every original node.  The original degrees are unchanged, while every dummy node has degree $x_\ell$. Consequently, $M_p(\sigma^{(\ell)})=M_p(\sigma)+W_\ell$.
Moreover, $x_\ell\le2n+1$ for every $\ell\in\mathcal I$, so every augmented stream has at most $4n$ nodes.

\begin{algorithm}[t]
	\caption{$M_p$-Estimation from Perfect $L_{M_p}$-Sampling}
	\label{alg:Mp-from-LMp}
	\DontPrintSemicolon
	\SetAlgoNoEnd
	
	\KwIn{a stream $\sigma$ of $n$ nodes, an $n^{-c}$-perfect
		$L_{M_p}$-sampler $\mathcal A$, and $\eps\in(0,1/20]$}
	\KwOut{a $(1+\eps)$-approximation to $M_p(\sigma)$, or $\FAIL$}
	
	$b\gets2^{16}\eps^{-2}\ln n$, $x_\ell\coloneqq\lceil2^{\ell/(p+1)}\rceil$,
	$W_\ell\coloneqq x_\ell^{p+1}$
	
	$\mathcal I \gets \{\lfloor\log_2 n\rfloor,\ldots,\lceil(p+1)\log_2 n\rceil\}$.
	
	\ForEach{$\ell\in\mathcal I$}{
		form $\sigma^{(\ell)}$ by appending a dummy clique of size $x_\ell$
		
		run $b$ independent copies of $\mathcal A$ on $\sigma^{(\ell)}$
		
		$Y^{(\ell)}\gets$ number of non-failure outputs
		
		$X^{(\ell)}\gets$ number of dummy-node outputs
		
		\If{$Y^{(\ell)}\ge b/3$ and
			$X^{(\ell)}/Y^{(\ell)}\in[1/4,3/4]$}{
			\Return
			$\displaystyle
			\left({Y^{(\ell)}}/{X^{(\ell)}}-1\right)W_\ell$
		}
	}
	\Return $\FAIL$
\end{algorithm}

All copies in Algorithm~\ref{alg:Mp-from-LMp}, for all values of $\ell$, are run in parallel.  In each pass, every copy first receives the original stream $\sigma$ and then receives its own dummy clique.  Thus, if $\mathcal A$ uses $\kappa$ passes, the reduction also uses $\kappa$ passes.

\begin{lemma}
	\label{lem:LMp-to-Mp}
	Fix constants $p>0$, $c>0$, and $\eps\in(0,1/20]$.  Suppose $\mathcal A$ is a $\kappa$-pass, $s$-space, $n^{-c}$-perfect $L_{M_p}$-sampler on streams of length $O(n)$.  Then Algorithm~\ref{alg:Mp-from-LMp} is a $\kappa$-pass
	$(1+\eps)$-approximation algorithm for $M_p$ with failure probability $o(1)$ and space $O_p(s\eps^{-2}\log^2 n)$.
\end{lemma}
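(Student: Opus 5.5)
The plan mirrors the proof of Lemma~\ref{lem:L-DI-reduction}, with the dummy clique of size $x_\ell$ playing the role of the isolated dummy nodes. Fix $\ell\in\mathcal I$. Let $q_\ell\coloneqq W_\ell/(W_\ell+M_p)$ be the exact probability that an $L_{M_p}$ sample from $\sigma^{(\ell)}$ is a dummy node; this holds because each of the $x_\ell$ dummy nodes has degree $x_\ell$, so their total weight is $W_\ell$. Let $p_\ell$ be the corresponding probability conditioned on $\mathcal A$ succeeding. Since $\mathcal A$ is $n^{-c}$-perfect on streams of length at most $4n$, we have $|p_\ell-q_\ell|\le O(n^{-c})\le\eps/64$ for large $n$, uniformly in $\ell$. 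The success probability of each copy is at least $1-O(n^{-c})\ge 0.51$.

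As before, $Y^{(\ell)}\sim\operatorname{Binomial}(b,s_\ell)$ and $X^{(\ell)}\mid Y^{(\ell)}=y\sim\operatorname{Binomial}(y,p_\ell)$. Hoeffding's inequality with $b=2^{16}\eps^{-2}\ln n$ gives, for each $\ell$, that with probability $1-O(n^{-10})$ both $Y^{(\ell)}\ge b/3$ and $|X^{(\ell)}/Y^{(\ell)}-p_\ell|\le\eps/64$ hold. We then take a union bound over the $|\mathcal I|=O_p(\log n)$ scales and condition on the event that $|\widehat q_\ell-q_\ell|\le\eps/32$ for all $\ell$.

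\emph{Accuracy.} If scale $\ell$ is accepted, then $\widehat q_\ell\in[1/4,3/4]$, and the identity
\[
\frac{|\widehat M_p/M_p-1|}{1}=\frac{|q_\ell-\widehat q_\ell|}{\widehat q_\ell(1-q_\ell)}
\]
holds exactly as in \eqref{eq:DI-relative-error}. Using $q_\ell\le 3/4+\eps/32$ and $\eps\le1/20$, the denominator is at least $\tfrac14\cdot(\tfrac14-\tfrac1{640})>1/17$. The relative error is therefore at most $17\eps/32<\eps$.

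\emph{Termination.} We need some scale to be accepted. Since $n\le M_p\le n^{p+1}$, pick $\ell^*\in\mathcal I$ with $2^{\ell^*}\le M_p<2^{\ell^*+1}$; the largest admissible $\ell^*$ lies in the range $[\lfloor\log_2 n\rfloor,\lceil(p+1)\log_2 n\rceil]$. By \eqref{eq:dummy-scale}, $W_{\ell^*}\in[2^{\ell^*},2^{\ell^*+1})$, so $W_{\ell^*}/M_p\in(1/2,2)$ and $q_{\ell^*}\in(1/3,2/3)$. Hence $\widehat q_{\ell^*}\in(1/3-\eps/32,\,2/3+\eps/32)\subset[1/4,3/4]$. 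Thus either scale $\ell^*$ or an earlier scale is accepted, and the output is a $(1+\eps)$-approximation. The failure probability is $O(n^{-9})=o(1)$.

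\emph{Resources.} All $b\cdot|\mathcal I|=O_p(\eps^{-2}\log^2 n)$ copies run in parallel. In each pass every copy receives $\sigma$ followed by its own dummy clique, so the algorithm uses $\kappa$ passes and $O_p(s\eps^{-2}\log^2 n)$ space.

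The main obstacle I expect is the bookkeeping at the boundaries: verifying that the acceptance window $[1/4,3/4]$ is wide enough to contain $q_{\ell^*}\pm\eps/32$ despite the rounding in $x_\ell$, which is handled by \eqref{eq:dummy-scale}, while remaining narrow enough that the denominator $\widehat q_\ell(1-q_\ell)$ stays bounded below. The remaining steps are routine concentration arguments.
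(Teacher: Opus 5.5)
Your proposal is correct and follows the paper's proof essentially step for step: the dummy-clique mass $W_\ell$, Hoeffding plus a union bound to get $|X^{(\ell)}/Y^{(\ell)}-q_\ell|\le\eps/32$ at all scales, the relative-error identity with acceptance window $[1/4,3/4]$, and termination at $\ell^*=\lfloor\log_2 M_p\rfloor$ via \eqref{eq:dummy-scale}. The only cosmetic difference is that you lower-bound the denominator by $1/17$ rather than $159/2560$, which gives relative error $17\eps/32$ instead of $(80/159)\eps$; both are below $\eps$.
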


\begin{proof}
	Fix $\ell\in\mathcal I$ and write
	$q_\ell=W_\ell/(W_\ell+M_p)$.  Since the total $p$-th degree mass of the dummy clique is $W_\ell$, the quantity $q_\ell$ is the probability that an exact $L_{M_p}$ sample from $\sigma^{(\ell)}$ belongs to the dummy clique. Let $p_\ell$ be the corresponding probability conditioned on $\mathcal A$
	returning a node.  Since $\mathcal A$ is $n^{-c}$-perfect and $\sigma^{(\ell)}$ has $O(n)$ nodes, we have $|p_\ell-q_\ell|\le\gamma_\ell$, where
	$\gamma_\ell=O(n^{-c})$.  Thus,
	$\gamma_\ell\le\eps/64$ simultaneously for every $\ell$.
	
	Let $s_\ell$ be the success probability of one copy of $\mathcal A$ on $\sigma^{(\ell)}$.  We have $s_\ell\ge0.51$. Moreover,
	$$
	Y^{(\ell)}\sim\operatorname{Binomial}(b,s_\ell)
	\quad\text{and}\quad \left(X^{(\ell)}\mid Y^{(\ell)}=y\right)\sim\operatorname{Binomial}(y,p_\ell).
	$$
	By Hoeffding's inequality and the choice of $b$, for each fixed $\ell$,
	\begin{equation}
		\Pr\Bigg[\left(Y^{(\ell)}<\frac b3\right)
		\vee \left(\left| \frac{X^{(\ell)}}{Y^{(\ell)}}-p_\ell
		\right|>\frac{\eps}{64}\right)\Bigg] \le n^{-10}
		+ 2\exp\left(-\frac{b\eps^2}{6144}\right)
		\le 3n^{-10},
		\label{eq:LMp-one-scale-concentration}
	\end{equation}
	where the ratio ${X^{(\ell)}}/{Y^{(\ell)}}$ is considered only when $Y^{(\ell)}\ge b/3$.  A union bound
	over the $O_p(\log n)$ choices of $\ell$ shows that, with probability at least $1-n^{-9}$, simultaneously for every $\ell\in\mathcal I$,
	\begin{equation}
		Y^{(\ell)}\ge\frac b3
		\quad\text{and}\quad
		\left|\frac{X^{(\ell)}}{Y^{(\ell)}}-q_\ell\right|\le\frac{\eps}{32}.
		\label{eq:all-scales-accurate}
	\end{equation}
	We condition on this event.
	
	Suppose the algorithm returns at scale $\ell$, and let
	$
	\widehat q_\ell\coloneqq X^{(\ell)}/Y^{(\ell)}.
	$
	The returned estimate $\widehat M_\ell$ and the true moment $M_p$ satisfy
	$
	\widehat M_\ell
	=
	W_\ell\frac{1-\widehat q_\ell}{\widehat q_\ell}
	$
	and
	$
	M_p
	=
	W_\ell\frac{1-q_\ell}{q_\ell}.
	$
	Therefore,
	\begin{equation}
		\left|\frac{\widehat M_\ell}{M_p}-1\right|
		=
		\left|
		\frac{q_\ell(1-\widehat q_\ell)}
		{\widehat q_\ell(1-q_\ell)}
		-1
		\right| 
		=
		\frac{|q_\ell-\widehat q_\ell|}
		{\widehat q_\ell(1-q_\ell)}.
		\label{eq:Mp-relative-error}
	\end{equation}
	Because scale $\ell$ is accepted, $\widehat q_\ell\in[\frac14, \frac34]$.  By
	\eqref{eq:all-scales-accurate} and $\eps\le1/20$, we have
	$q_\ell\le\frac34+\frac{\eps}{32}\le\frac{481}{640}$ and 
	$
	\widehat q_\ell(1-q_\ell)
	\ge\frac14\cdot\frac{159}{640}
	=\frac{159}{2560}.
	$
	Substituting these bounds into \eqref{eq:Mp-relative-error} gives
	$|\widehat M_\ell/M_p-1|\le(80/159)\eps<\eps$.  Hence, every returned value
	is a $(1+\eps)$-approximation to $M_p$.
	
	It remains to show that the algorithm returns a value.  Let
	$\ell^*=\lfloor\log M_p\rfloor$.  Since $n\le M_p\le n^{p+1}$, we have
	$\ell^*\in\mathcal I$.  Equation~\eqref{eq:dummy-scale} implies
	$$
	2^{\ell^*}\le M_p,W_{\ell^*}<2^{\ell^*+1}, \quad \text{and} \quad \frac13<q_{\ell^*}<\frac23.
	$$
	By \eqref{eq:all-scales-accurate},
	$X^{(\ell^*)}/Y^{(\ell^*)}\in[1/4,3/4]$.  Thus scale $\ell^*$ is accepted
	unless an earlier scale has already been accepted, and in either case the
	algorithm returns a $(1+\eps)$-approximation to $M_p$.
	
	Finally, the algorithm considers $O_p(\log n)$ scales and runs
	$b=O(\eps^{-2}\log n)$ copies of $\mathcal A$ at each scale.  All copies are
	run in parallel, so the number of passes remains $\kappa$ and the total space
	usage is $O_p(s\eps^{-2}\log^2 n)$.
\end{proof}

We use the following one-pass lower bound for degree-moment estimation in the previous
work.

\begin{theorem}[\cite{LZ26b}]
	\label{thm:Mp-estimation-lower-bound}
	Fix a constant $p>0$.  For every constant $C\ge1$, any one-pass algorithm that returns a $C$-approximation to $M_p$ with probability at least $0.51$ requires $\Omega(n)$ bits of space.
\end{theorem}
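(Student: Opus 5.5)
The plan is to reduce from the one-way two-party \emph{Indexing} problem, as in the proof of Theorem~\ref{thm:streaming-main}, but with a single round. In Indexing, Alice holds $x\in\bits^m$, Bob holds $i\in[m]$, and Bob must output $x_i$ after one message from Alice. This requires $\Omega(m)$ bits for error $0.49$. A one-pass $s$-bit streaming algorithm gives an $s$-bit one-way protocol: Alice runs it on her block of nodes, sends the memory state, and Bob finishes on his block. So it suffices to build node sets $V_A(x)$ and $V_B(i)$ with $|V_A|+|V_B|=O(m)$ for which $M_p$ differs by a factor larger than $C^2$ between $x_i=0$ and $x_i=1$. Then a $C$-approximation decides $x_i$ by a threshold test, exactly as in the proof of Theorem~\ref{thm:ddi-lb}, and we get $s=\Omega(m)=\Omega(n)$.

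The first candidate is the natural analogue of Definition~\ref{def:Dddi}. Alice creates, for each $j\in[m]$, a block of $r$ nodes $(\mathsf A,j,x_j,\cdot)$. Bob creates $k$ nodes $(\mathsf B,i,\cdot)$. A Boolean similarity makes a Bob node similar to an Alice node $(\mathsf A,j,x,\cdot)$ iff $j=i$ and $x=1$. The degrees are then explicit, as in Lemma~\ref{lem:degree-contributions}. Alice's blocks for $j\ne i$ form the background, Bob's nodes carry the signal, and the block $j=i$ either merges with Bob's nodes or stays isolated from them.

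The main obstacle is making the gap an \emph{arbitrary constant} $C$ while keeping $k$ and $r$ linear in $n$ and $m=\Theta(n)$. A short calculation shows why this is hard. If all nodes within a group are pairwise dissimilar, then $x_i=1$ gives a contribution of about $kr^p+rk^p$, versus $k+rm$ when $x_i=0$. If Alice's blocks are instead cliques, the background is $mr^{p+1}$, and it dominates unless $k\gg m^{1/(p+1)}r$. With $m=\Theta(n)$ and only $O(n)$ nodes, these choices give either only a $2^{p}$-type constant gap (for example $r=1$, $k=n/2$) or force $m$ to be polynomially smaller than $n$.

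I would try the following ways around this, in order. First, use non-Boolean (fractional) similarities. Then Alice's background nodes can be made to have degree essentially $1$, while Bob's $k=\Theta(n)$ nodes are mutually similar with weight tuned so that the single designated Alice node, when $x_i=1$, couples strongly enough to shift the total moment by a factor $C^2$. Second, if that fails, amplify a constant gap by independent repetition across disjoint index blocks (a direct-sum or median argument). Third, reduce instead from a problem where Bob's side of the input is large, so that the background contributed by Alice's side is negligible by construction. The bookkeeping of degrees and the final threshold test are routine once the gadget achieves the required $C^2$ separation. This gadget design is the step I expect to be hardest.
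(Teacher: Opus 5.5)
The paper does not prove this statement; it imports it from \cite{LZ26b}, so your proposal has to stand on its own. Your framework is the right one: a one-pass $s$-bit algorithm gives an $s$-bit one-way protocol for Indexing on $m$ bits, and it suffices to have $|V_A|+|V_B|=\Theta(m)$ with an $M_p$ gap larger than $C^2$ between $x_i=0$ and $x_i=1$. But the proposal stops at the one step that has content. You never exhibit a gadget with a $C^2$ gap, and the reason you give for abandoning the natural one is a miscalculation. In your pairwise-dissimilar construction, every degree is $1$ when $x_i=0$, so $M_p=k+rm$ exactly. When $x_i=1$, every Bob node has degree $1+r$. The ratio is therefore at least $k(1+r)^p/(k+rm)$, and you evaluated it only at $r=1$. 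Since $C$ and $p$ are constants, nothing stops $r$ from being a large constant.

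Concretely, take $r=\lceil(2C^2)^{1/p}\rceil$ and $k=rm$, so $n=2rm$ and $m=\Theta_{C,p}(n)$. Alice streams the distinct labels $(\mathsf A,j,x_j,\ell)$ for $j\in[m]$, $\ell\in[r]$, and Bob streams $(\mathsf B,i,\ell')$ for $\ell'\in[k]$. A Bob node and an Alice node are similar iff $j=i$ and $x_j=1$; all other distinct pairs have similarity $0$.
\begin{itemize}
\item If $x_i=0$, then $M_p=n$.
\item If $x_i=1$, then $M_p\ge k(1+r)^p=\tfrac n2(1+r)^p>C^2n$.
\end{itemize}
The threshold test $\widetilde M>Cn$ therefore decides $x_i$. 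Indexing then gives $s=\Omega(m)=\Omega(n)$, with the hidden constant depending on $C$ and $p$, as the statement allows. The construction is even Boolean, with $O(\log n)$-bit labels.

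Your fallback ideas would not have rescued the argument. In the design you describe, with a single Alice node per index and similarities in $[0,1]$, each Bob degree grows by at most $1$. So Bob's mass grows by a factor of at most $2^p$, and only $a_i$'s own term grows substantially, by at most $n^p\le n$; for $p\le1$ this keeps the gap below an absolute constant. Repeating instances on disjoint node sets only adds their $M_p$ values, and a ratio of sums never exceeds the largest individual ratio, so nothing is amplified. The missing idea is simply to replicate each of Alice's bits into $r=\Theta_p(C^{2/p})$ pairwise dissimilar nodes.
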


\begin{theorem}
	\label{thm:LMp-one-pass-lb}
	Fix constants $p>0$ and $c>0$.  Any one-pass $n^{-c}$-perfect
	$L_{M_p}$-sampler requires
	$\Omega_p(n/\log^2 n)$ bits of space.
\end{theorem}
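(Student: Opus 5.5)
The plan mirrors the proof of Theorem~\ref{thm:L-DI-lb}: combine the sampling-to-estimation reduction of Lemma~\ref{lem:LMp-to-Mp} with the one-pass estimation lower bound of Theorem~\ref{thm:Mp-estimation-lower-bound}. Suppose $\mathcal A$ is a one-pass $n^{-c}$-perfect $L_{M_p}$-sampler that uses $s(N)$ bits on streams of length at most $N$. Its failure probability is at most $n^{-c}<0.49$ for large $n$, so each copy succeeds with probability at least $0.51$, which is the hypothesis the reduction uses.

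Fix $\eps=1/20$ and run Algorithm~\ref{alg:Mp-from-LMp} on an $n$-node input $\sigma$. Every augmented stream $\sigma^{(\ell)}$ has at most $4n$ nodes. So Lemma~\ref{lem:LMp-to-Mp} yields a one-pass algorithm that returns a $(1+1/20)$-approximation to $M_p(\sigma)$ with failure probability $o(1)$, hence with success probability at least $0.51$ for large $n$. Its space is $O_p(s(4n)\log^2 n)$.

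I will check two points against the reduction's conventions. First, the dummy cliques are appended after the original stream, so the one-pass simulation is legitimate: all $O_p(\log^2 n)$ copies read $\sigma$ in parallel, and each then receives its own dummy clique. Second, the $n^{-c}$ guarantee of $\mathcal A$ is stated relative to the augmented stream length $N\le 4n$. This gives an error of $O(n^{-c})$, which is still at most $\eps/64$. Beyond these checks there is no real obstacle.

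A $(1+1/20)$-approximation is a $C$-approximation for the constant $C=2$. Theorem~\ref{thm:Mp-estimation-lower-bound} therefore gives $s(4n)\log^2 n=\Omega_p(n)$, that is, $s(4n)=\Omega_p(n/\log^2 n)$. Substituting $N=4n$ gives $s(N)=\Omega_p(N/\log^2 N)$, which proves the theorem.
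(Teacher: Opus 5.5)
Your proposal is correct and is essentially the paper's proof: it feeds Lemma~\ref{lem:LMp-to-Mp} into the one-pass estimation lower bound of Theorem~\ref{thm:Mp-estimation-lower-bound}. The differences are only cosmetic. The paper takes $\eps=0.005$ and an $(n/4)$-node input so that every augmented stream has length at most $n$. You take $\eps=1/20$ on an $n$-node input, relax the guarantee to a $2$-approximation, and rescale $N=4n$ at the end; this is equally valid.
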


\begin{proof}
	Suppose the sampler uses $s$ bits on streams of length at most $n$, and set $m= n/4$.  Apply Lemma~\ref{lem:LMp-to-Mp} with $\eps=0.005$ to an $m$-node input.  Every augmented stream has length at most $4m = n$, so the resulting one-pass estimator uses $O_p(s\log^2 n)$ bits and succeeds with probability greater than $0.51$.  Its output is  a $1.005$-approximation to $M_p$.
	Theorem~\ref{thm:Mp-estimation-lower-bound} therefore gives	$s\log^2 n=\Omega(m)=\Omega(n)$, proving the theorem.
\end{proof}

For a constant number of passes, we use the following lower bound from previous work.

\begin{theorem}[\cite{LZ26b}]
	\label{thm:Mp-constant-pass-lb}
	Fix a constant $p>0$ and let $\eps\in(0,0.9)$.  Any constant-pass $(1+\eps)$-approximation algorithm for $M_p$ with probability at least $0.51$ requires
	$
	\Omega_p\left(\min\left\{n,\eps^{-\frac{1}{p+1}}n^{1-\frac{1}{p+1}}\right\}\right)
	$
	bits of space.
\end{theorem}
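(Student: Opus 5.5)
We plan to prove the statement by a reduction from two-party Set Disjointness, whose randomized communication complexity is $\Omega(N)$ on universe size $N$ for any number of rounds, even under the promise that the sets are either disjoint or intersect in exactly one element. A $\kappa$-pass streaming algorithm with $s$ bits of space yields a $(2\kappa-1)$-message protocol with $O(\kappa s)$ bits of communication. Alice simulates the algorithm on her block and Bob on his, passing the memory state back and forth. Since $\kappa$ is constant, it therefore suffices to embed Disjointness into an $M_p$-gap instance on $n$ nodes with $N=\Theta\bigl(\min\{n,\eps^{-1/(p+1)}n^{p/(p+1)}\}\bigr)$.

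\textbf{The instance.} Set $k\coloneqq\max\{1,\lceil(C\eps n)^{1/(p+1)}\rceil\}$ for a suitable constant $C$, and $N\coloneqq\lfloor n/(2k)\rfloor$. Alice holds $X\subseteq[N]$ and Bob holds $Y\subseteq[N]$. For each $i\in X$, Alice creates $k$ nodes $(\mathsf A,i,r)$, $r\in[k]$. For each $i\in Y$, Bob creates $k$ nodes $(\mathsf B,i,r)$. The Boolean similarity sets $(\mathsf A,i,r)\sim(\mathsf B,i',r')$ iff $i=i'$, and every other pair of distinct items has similarity $0$. We pad with isolated dummy nodes so that the stream has exactly $n$ nodes.
\begin{itemize}
\item If $X\cap Y=\emptyset$, every node has degree $1$, so $M_p=n$.
\item If $X\cap Y=\{i^*\}$, the $2k$ nodes indexed by $i^*$ form a complete bipartite graph and have degree $k+1$. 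Every other node still has degree $1$, so $M_p=n-2k+2k(k+1)^p\ge n+k^{p+1}$ for $k\ge1$.
\end{itemize}
Choosing $C$ large enough (say $C=8$, using $\eps<0.9$) gives $k^{p+1}\ge C\eps n$. Hence $(1+\eps)$-approximations of the two cases are separated whenever $(1+\eps)^2<1+C\eps$, i.e.\ $\eps<C-2$, which holds for all $\eps\in(0,0.9)$. In the regime where $k=1$ the gap is $2^{p+1}-2$ per intersection, which is a constant factor. Applying the reduction with the sets scaled appropriately gives the $\Omega(n)$ branch of the minimum.

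\textbf{Putting it together.} A $(1+\eps)$-approximation algorithm with success probability $0.51$ decides Disjointness with error $0.49$. The Disjointness lower bound then gives $\kappa s=\Omega(N)=\Omega(n/k)$. Substituting $k=\Theta((\eps n)^{1/(p+1)})$ yields $s=\Omega_p(\eps^{-1/(p+1)}n^{1-1/(p+1)})$, and the case $k=1$ gives $\Omega(n)$, hence the minimum.

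The main point to check carefully is that the gap survives both the worst-case padding and the bounded-error promise version of Disjointness, specifically that dummy nodes and non-intersecting groups contribute exactly degree $1$. Because cross-similarities are Boolean and only between matching indices, this holds, and labels have $O(\log n)$ bits.
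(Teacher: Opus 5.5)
The paper does not prove this statement. It is imported from \cite{LZ26b} and used only as a black box, with $\eps=0.005$, in Theorem~\ref{thm:LMp-constant-pass-lb}, so there is no in-paper argument to compare against. Judged on its own, your reduction from promise Set Disjointness is sound and gives exactly the claimed minimum. The $\kappa$-pass simulation yields a $(2\kappa-1)$-message protocol, and the Disjointness bound holds for any number of rounds. Also, $N=\lfloor n/(2k)\rfloor$ is $\Theta_p(\eps^{-1/(p+1)}n^{p/(p+1)})$ when $\eps n\ge1$ and $\Theta_p(n)$ otherwise. Making same-party nodes mutually dissimilar is what lets two parties suffice here, unlike the multiparty reductions used for classical $F_{p+1}$. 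Several details need repair, though none changes the approach.

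\emph{Gap constant.} The inequality $n-2k+2k(k+1)^p\ge n+k^{p+1}$ is false for small $p$. At $k=1$ it reads $2^{p+1}-2\ge1$, which fails for $p<\log_2(3/2)$. Since $k+1\ge2$, you have $(k+1)^p-1\ge(1-2^{-p})(k+1)^p$. This gives $M_p-n\ge c_pk^{p+1}$ with $c_p=2(1-2^{-p})$, so $C$ must depend on $p$, which $\Omega_p$ permits.

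\emph{Separation.} The paper's convention is $\widetilde X=(1\pm\eps)X$, not your two-sided factor convention $(1+\eps)^2<1+C\eps$. With a gap of $C'\eps n$, the two cases are separated iff $(1-\eps)(1+C'\eps)>1+\eps$, i.e.\ $C'(1-\eps)>2$. With your stated bound $C'=8$, this covers only $\eps<3/4$. Since $\eps$ may approach $0.9$, take $C\ge 20/c_p$.

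\emph{The case $k=1$.} Your separate paragraph is unnecessary, and it misdescribes the gap: it is an additive constant on top of $M_p=n$, not a constant factor. Because $k=\lceil (C\eps n)^{1/(p+1)}\rceil$, the bound $k^{p+1}\ge C\eps n$ also holds when $k=1$. The same computation therefore applies verbatim, with no rescaling.

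\emph{Padding.} The number of dummies, $n-k|X|-k|Y|$, is known to neither party. Let Alice append $k(N-|X|)$ isolated nodes to her block, and let Bob append $k(N-|Y|)+n-2kN$ isolated nodes to his. The stream then has exactly $n$ nodes, and the degrees are as you computed.

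\emph{Error amplification.} Disjointness is usually stated for error $1/3$. To use success probability $0.51$, run $O(1)$ independent copies in parallel and take the majority, at a constant-factor cost in space.

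With these changes your argument proves the statement.
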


\begin{theorem}
	\label{thm:LMp-constant-pass-lb}
	Fix constants $p>0$, $c>0$, and an integer $\kappa\ge1$.  Any $\kappa$-pass
	$n^{-c}$-perfect $L_{M_p}$-sampler requires
	$
	\Omega_p\left({n^{1-\frac{1}{p+1}}}/{\log^2 n}\right)
	$
	bits of space.
\end{theorem}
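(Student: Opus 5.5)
The plan mirrors the proof of Theorem~\ref{thm:LMp-one-pass-lb}, with Theorem~\ref{thm:Mp-constant-pass-lb} used in place of the one-pass bound. Suppose $\mathcal A$ is a $\kappa$-pass $n^{-c}$-perfect $L_{M_p}$-sampler using $s(N)$ bits on streams of length at most $N$. Set $m=n/4$ and apply Lemma~\ref{lem:LMp-to-Mp} to an arbitrary $m$-node input with a fixed constant accuracy, say $\eps=0.005$, which lies in $(0,1/20]$. Every augmented stream $\sigma^{(\ell)}$ has at most $4m=n$ nodes, so each copy of $\mathcal A$ uses at most $s(n)$ bits. The lemma then yields a $\kappa$-pass algorithm that $(1+\eps)$-approximates $M_p$ on $m$-node streams with failure probability $o(1)$, in particular with success probability at least $0.51$, using $O_p(s(n)\log^2 n)$ bits.

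Next, I would invoke Theorem~\ref{thm:Mp-constant-pass-lb} with stream length $m$ and this constant $\eps\in(0,0.9)$. Since $\kappa$ is a constant, the theorem gives a space lower bound of
\[
\Omega_p\Bigl(\min\bigl\{m,\ \eps^{-\frac1{p+1}}m^{1-\frac1{p+1}}\bigr\}\Bigr)=\Omega_p\bigl(m^{1-\frac1{p+1}}\bigr)=\Omega_p\bigl(n^{1-\frac1{p+1}}\bigr).
\]
The equalities hold because $\eps$ is a fixed constant, $m^{1-1/(p+1)}\le m$, and $m=\Theta(n)$. Comparing with the upper bound gives $s(n)\log^2 n=\Omega_p(n^{1-1/(p+1)})$, which is the claimed bound.

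The only delicate points are bookkeeping. First, Lemma~\ref{lem:LMp-to-Mp} needs $n^{-c}$-perfection on streams of length $O(n)$. Running $\mathcal A$ with parameter $n$ on inputs of length $n$ suffices, since the TV error $n^{-c}$ is at most $\eps/64$ for large $n$ whatever $c>0$ is. Second, the failure probability $n^{-c}<0.49$ gives $s_\ell\ge0.51$, as the lemma requires. Third, the reduction preserves the pass count, because the dummy cliques are appended to each copy's virtual stream in every pass. I expect no real obstacle here: the substance lies in Lemma~\ref{lem:LMp-to-Mp} and the cited lower bound.
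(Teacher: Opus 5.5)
Your proposal is correct and follows the paper's proof essentially verbatim. Both set $m=n/4$, apply Lemma~\ref{lem:LMp-to-Mp} with $\eps=0.005$ to obtain a $\kappa$-pass $1.005$-approximation for $M_p$ using $O_p(s\log^2 n)$ bits, and invoke Theorem~\ref{thm:Mp-constant-pass-lb} to conclude $s\log^2 n=\Omega_p(n^{p/(p+1)})$; your extra bookkeeping (TV error at most $\eps/64$, success probability at least $0.51$, preservation of the pass count) is handled in the paper inside the proof of Lemma~\ref{lem:LMp-to-Mp}.
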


\begin{proof}
	Suppose the sampler uses $s$ bits on streams of length at most $n$, and set
	$m= n/4$.  Apply Lemma~\ref{lem:LMp-to-Mp} with
	$\eps=0.005$ to an $m$-node input.  The resulting $\kappa$-pass estimator uses $O_p(s\log^2 n)$ bits, succeeds with probability greater than $0.51$,
	and computes a $(1.005)$-approximation to $M_p$.  Since $\kappa$ and $\eps$ are fixed, Theorem~\ref{thm:Mp-constant-pass-lb} gives
	$
	s\log^2 n = \Omega_p\left(m^{p/(p+1)}\right) = \Omega_p\left(n^{p/(p+1)}\right).
	$
	Rearranging proves the theorem.
\end{proof}

\section*{Acknowledgements}
The author thanks Kaiwen Liu for helpful discussions regarding the lower bound proof and the $L_{\DI}$-sampling algorithm in the early stage of this work.
	
\section*{Use of AI assistance}
The research questions, algorithms, and mathematical arguments in this paper were developed by the author. ChatGPT (GPT-5.6) was used for language editing, proofreading, and literature search (in particular, for pointing us to Nacu and Peres~\cite{NP05}, which enabled a modular treatment of one component of the $L_{M_p}$-sampler). The author takes full responsibility for the paper.
	

\begin{thebibliography}{BJK{\etalchar{+}}02}
	
	\bibitem[AKO11]{AKO11}
	Alexandr Andoni, Robert Krauthgamer, and Krzysztof Onak.
	\newblock Streaming algorithms via precision sampling.
	\newblock In Rafail Ostrovsky, editor, {\em FOCS}, pages 363--372. {IEEE}
	Computer Society, 2011.
	
	\bibitem[AMS99]{AMS99}
	Noga Alon, Yossi Matias, and Mario Szegedy.
	\newblock The space complexity of approximating the frequency moments.
	\newblock {\em J. Comput. Syst. Sci.}, 58(1):137--147, 1999.
	
	\bibitem[BJK{\etalchar{+}}02]{BJKST02}
	Ziv Bar{-}Yossef, T.~S. Jayram, Ravi Kumar, D.~Sivakumar, and Luca Trevisan.
	\newblock Counting distinct elements in a data stream.
	\newblock In {\em RANDOM}, pages 1--10, 2002.
	
	\bibitem[CDK18]{CDK18}
	Graham Cormode, Jacques Dark, and Christian Konrad.
	\newblock Approximating the caro-wei bound for independent sets in graph
	streams.
	\newblock In {\em ISCO}, pages 101--114, 2018.
	
	\bibitem[CMR05]{CMR05}
	Graham Cormode, S.~Muthukrishnan, and Irina Rozenbaum.
	\newblock Summarizing and mining inverse distributions on data streams via
	dynamic inverse sampling.
	\newblock In {\em VLDB}, pages 25--36, 2005.
	
	\bibitem[cNP05]{NP05}
	\c{S}erban Nacu and Yuval Peres.
	\newblock Fast simulation of new coins from old.
	\newblock {\em The Annals of Applied Probability}, 15(1A):93--115, 2005.
	
	\bibitem[CP17]{CP17}
	Sergio Cabello and Pablo P{\'{e}}rez{-}Lantero.
	\newblock Interval selection in the streaming model.
	\newblock {\em Theor. Comput. Sci.}, 702:77--96, 2017.
	
	\bibitem[CZ16]{CZ16}
	Di~Chen and Qin Zhang.
	\newblock Streaming algorithms for robust distinct elements.
	\newblock In {\em SIGMOD}, pages 1433--1447. {ACM}, 2016.
	
	\bibitem[CZ18]{CZ18}
	Jiecao Chen and Qin Zhang.
	\newblock Distinct sampling on streaming data with near-duplicates.
	\newblock In {\em PODS}, pages 369--382, 2018.
	
	\bibitem[EHR16]{EHR16}
	Yuval Emek, Magn{\'{u}}s~M. Halld{\'{o}}rsson, and Adi Ros{\'{e}}n.
	\newblock Space-constrained interval selection.
	\newblock {\em {ACM} Trans. Algorithms}, 12(4):51:1--51:32, 2016.
	
	\bibitem[ERS17]{ERS17}
	Talya Eden, Dana Ron, and C.~Seshadhri.
	\newblock Sublinear time estimation of degree distribution moments: The
	degeneracy connection.
	\newblock In {\em ICALP}, pages 7:1--7:13, 2017.
	
	\bibitem[Fei04]{Feige04}
	Uriel Feige.
	\newblock On sums of independent random variables with unbounded variance, and
	estimating the average degree in a graph.
	\newblock In {\em STOC}, pages 594--603, 2004.
	
	\bibitem[FIS05]{FIS05}
	Gereon Frahling, Piotr Indyk, and Christian Sohler.
	\newblock Sampling in dynamic data streams and applications.
	\newblock In {\em SoCG}, pages 142--149, 2005.
	
	\bibitem[Gib01]{Gibbons01}
	Phillip~B. Gibbons.
	\newblock Distinct sampling for highly-accurate answers to distinct values
	queries and event reports.
	\newblock In {\em VLDB}, pages 541--550, 2001.
	
	\bibitem[GR08]{GR08}
	Oded Goldreich and Dana Ron.
	\newblock Approximating average parameters of graphs.
	\newblock {\em Random Struct. Algorithms}, 32(4):473--493, 2008.
	
	\bibitem[GRS11]{GRS11}
	Mira Gonen, Dana Ron, and Yuval Shavitt.
	\newblock Counting stars and other small subgraphs in sublinear-time.
	\newblock {\em {SIAM} J. Discret. Math.}, 25(3):1365--1411, 2011.
	
	\bibitem[Ind06]{Ind06}
	Piotr Indyk.
	\newblock Stable distributions, pseudorandom generators, embeddings, and data
	stream computation.
	\newblock {\em J. {ACM}}, 53(3):307--323, 2006.
	
	\bibitem[IW05]{IW05}
	Piotr Indyk and David~P. Woodruff.
	\newblock Optimal approximations of the frequency moments of data streams.
	\newblock In {\em STOC}, pages 202--208, 2005.
	
	\bibitem[JST11]{JST11}
	Hossein Jowhari, Mert Saglam, and G{\'{a}}bor Tardos.
	\newblock Tight bounds for lp samplers, finding duplicates in streams, and
	related problems.
	\newblock In {\em PODS}, pages 49--58. {ACM}, 2011.
	
	\bibitem[JW18]{JW18}
	Rajesh Jayaram and David~P. Woodruff.
	\newblock Perfect lp sampling in a data stream.
	\newblock In {\em FOCS}, pages 544--555, 2018.
	
	\bibitem[JWZ22]{JWZ22}
	Rajesh Jayaram, David~P. Woodruff, and Samson Zhou.
	\newblock Truly perfect samplers for data streams and sliding windows.
	\newblock In {\em PODS}, pages 29--40. {ACM}, 2022.
	
	\bibitem[KNW10]{KNW10}
	Daniel~M. Kane, Jelani Nelson, and David~P. Woodruff.
	\newblock An optimal algorithm for the distinct elements problem.
	\newblock In {\em PODS}, pages 41--52, 2010.
	
	\bibitem[LVZ26]{LVZ26}
	Kaiwen Liu, Seba~Daniela Villalobos, and Qin Zhang.
	\newblock Estimating correlation clustering cost in node-arrival stream.
	\newblock {\em CoRR}, abs/2605.07091, 2026.
	
	\bibitem[LZ26a]{LZ26Mismatch}
	Kaiwen Liu and Qin Zhang.
	\newblock Frequency moments in noisy streaming and distributed data under
	mismatch ambiguity.
	\newblock {\em Proc. {ACM} Manag. Data}, 4(2):105:1--105:24, 2026.
	
	\bibitem[LZ26b]{LZ26b}
	Kaiwen Liu and Qin Zhang.
	\newblock Statistics of similarity graphs in node-arrival streams, 2026.
	\newblock arXiv:2609.04505.
	
	\bibitem[McG14]{McGregor14}
	Andrew McGregor.
	\newblock Graph stream algorithms: a survey.
	\newblock {\em {SIGMOD} Rec.}, 43(1):9--20, 2014.
	
	\bibitem[MW10]{MW10}
	Morteza Monemizadeh and David~P. Woodruff.
	\newblock 1-pass relative-error l\({}_{\mbox{p}}\)-sampling with applications.
	\newblock In {\em SODA}, pages 1143--1160, 2010.
	
	\bibitem[Nis90]{Nisan90}
	Noam Nisan.
	\newblock Pseudorandom generators for space-bounded computation.
	\newblock In {\em STOC}, pages 204--212, 1990.
	
	\bibitem[PRV01]{PRV01}
	Stephen Ponzio, Jaikumar Radhakrishnan, and Srinivasan Venkatesh.
	\newblock The communication complexity of pointer chasing.
	\newblock {\em J. Comput. Syst. Sci.}, 62(2):323--355, 2001.
	
	\bibitem[SWZ25]{SWZ25}
	William Swartworth, David~P. Woodruff, and Samson Zhou.
	\newblock Perfect lp sampling with polylogarithmic update time.
	\newblock In {\em {FOCS}}, pages 1936--1960. {IEEE}, 2025.
	
	\bibitem[WXZ25]{WXZ25}
	David~P. Woodruff, Shenghao Xie, and Samson Zhou.
	\newblock Perfect sampling in turnstile streams beyond small moments.
	\newblock {\em Proc. {ACM} Manag. Data}, 3(2):106:1--106:27, 2025.
	
	\bibitem[Zha25]{Zhang25}
	Qin Zhang.
	\newblock Robust statistical analysis on streaming data with near-duplicates in
	general metric spaces.
	\newblock {\em Proc. {ACM} Manag. Data}, 3(2):111:1--111:25, 2025.
	
	\bibitem[Zha26a]{Zhang26a}
	Qin Zhang.
	\newblock Frequency moments beyond equality: Streaming cosine density moments,
	2026.
	\newblock arXiv:2609.06925.
	
	\bibitem[Zha26b]{Zhang26b}
	Qin Zhang.
	\newblock Streaming algorithms for gaussian kernel density statistics, 2026.
	\newblock arXiv:2609.09622.
	
\end{thebibliography}

\newcommand{\etalchar}[1]{$^{#1}$}

\appendix

\section{Preliminaries}
\label{sec:preliminaries}

\vspace{2mm}
\noindent{\bf  Information theory basics.\ }
Let $H(X)$ denote the Shannon entropy of a random variable $X$. The conditional entropy of $X$ given $Y$ is denoted by $H(X \mid Y)$. The mutual information between $X$ and $Y$ is written as $I(X; Y)$, and the conditional mutual information given $Z$ is denoted by $I(X; Y \mid Z)$. We summarize below the properties of entropy and mutual information that will be used. Let $X, Y, Z, W$ be random variables.  
\begin{enumerate}
	\item If $X$ takes values in $\{1,2,\ldots,m\}$, then $H(X) \in [0, \log m]$.
	
	\item $H(X) \ge H(X \mid Y)$ and $I(X; Y) = H(X) - H(X \mid Y) \ge 0$.
	
	\item Chain rule of mutual information:
	$
	I(X, Y; Z) = I(X; Z) + I(Y; Z \mid X).
	$
	
	\item Subadditivity of conditional entropy:
	$
	H(X,Y \mid Z) \;\le\; H(X \mid Z) + H(Y \mid Z).
	$
	
	\item If $X$ and $Z$ are independent, then $I(X; Y \mid Z) \ge I(X; Y)$. Likewise, if $X$ and $Z$ are independent given $W$, then $I(X; Y \mid Z, W) \ge I(X; Y \mid W)$.
\end{enumerate}

\vspace{2mm}
\noindent{\bf Probabilistic tools.\ } 
\begin{lemma}[Chernoff--Hoeffding bounds]
	\label{lem:chernoff}
	Let $X = \sum_{i \in [m]} X_i$ be a sum of independent random variables
	taking values in $[0,1]$, and let $\mu = \E[X]$. Then for $\delta \in [0,1]$,
	$$
	\Pr[X \ge (1+\delta)\mu] \le e^{-\delta^2\mu/3}
	\quad \text{and} \quad
	\Pr[X \le (1-\delta)\mu] \le e^{-\delta^2\mu/2},
	$$
	and for $\delta \ge 1$,
	$$
	\Pr[X \ge (1+\delta)\mu] \le e^{-\delta\mu/3}.
	$$
	Moreover, for every $a > 0$,
	$$
	\Pr[\abs{X-\mu} \ge a] \le 2e^{-2a^2/m}.
	$$
	The upper-tail bounds remain valid when $\mu$ is replaced by any $\mu_0\ge\E[X]$, and the lower-tail bound when $\mu$ is replaced by any $\mu_0\le\E[X]$.
\end{lemma}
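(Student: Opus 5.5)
The statement at the end is the Chernoff--Hoeffding lemma (Lemma~\ref{lem:chernoff}) in the preliminaries. These are the standard bounds, and I would prove them with the usual moment-generating-function argument. Two observations do most of the work: convexity of $x\mapsto e^{\lambda x}$ on $[0,1]$, and independence of the $X_i$.

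\textbf{Upper tail.} Fix $\lambda>0$ and write $p_i=\E[X_i]$. Convexity gives $e^{\lambda x}\le 1+x(e^\lambda-1)$ for $x\in[0,1]$. Hence
$$\E[e^{\lambda X_i}]\le 1+p_i(e^\lambda-1)\le \exp\bigl(p_i(e^\lambda-1)\bigr),$$
and by independence $\E[e^{\lambda X}]\le \exp\bigl(\mu(e^\lambda-1)\bigr)$. Markov's inequality applied to $e^{\lambda X}$ with $\lambda=\ln(1+\delta)$ gives
$$\Pr[X\ge(1+\delta)\mu]\le \left(\frac{e^\delta}{(1+\delta)^{1+\delta}}\right)^{\mu}.$$
It then remains to check two elementary inequalities:
\begin{itemize}
\item $\delta-(1+\delta)\ln(1+\delta)\le-\delta^2/3$ for $\delta\in[0,1]$;
\item $\delta-(1+\delta)\ln(1+\delta)\le-\delta/3$ for $\delta\ge1$.
\end{itemize}
Both follow by comparing the values at the endpoints and the derivatives of the two sides.

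\textbf{Lower tail.} Use $\lambda<0$ and the same bound on $\E[e^{\lambda X}]$. This gives $\bigl(e^{-\delta}/(1-\delta)^{1-\delta}\bigr)^\mu$, and the inequality $-\delta-(1-\delta)\ln(1-\delta)\le-\delta^2/2$ follows from the series expansion of $\ln(1-\delta)$.

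\textbf{Additive bound.} The bound $2e^{-2a^2/m}$ is Hoeffding's inequality. Apply Hoeffding's lemma $\E[e^{\lambda(X_i-p_i)}]\le e^{\lambda^2/8}$ for variables in $[0,1]$, multiply over the $m$ independent summands, optimize at $\lambda=4a/m$, and take a union bound over the two tails.

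\textbf{Replacing $\mu$ by $\mu_0$.} This is the only step needing care, and I expect it to be the main (mild) obstacle. For the upper tail with $\mu_0\ge\mu$, I would not argue monotonicity of the final bound in $\mu$ directly. Instead, raise the means of the summands: pick independent $X_i'\in[0,1]$ that stochastically dominate $X_i$ and have total mean exactly $\mu_0$. For example, take $X_i'=\max(X_i,B_i)$ with suitable independent Bernoulli $B_i$; the total mean can be tuned continuously up to $m$, and if $\mu_0>m$ the bound is trivial. Then
$$\Pr[X\ge(1+\delta)\mu_0]\le\Pr[X'\ge(1+\delta)\mu_0],$$
and we apply the bound already proved. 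A simpler alternative is to note that the MGF estimate $\exp\bigl(\mu(e^\lambda-1)\bigr)$ is increasing in $\mu$ for $\lambda>0$, so $\mu_0$ can be substituted directly into the optimization with the same choice of $\lambda$.

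For the lower tail with $\mu_0\le\mu$, the analogous substitution works. For $\lambda<0$, $e^\lambda-1<0$, so the MGF estimate is decreasing in $\mu$, and we may replace $\mu$ by $\mu_0$ before optimizing.
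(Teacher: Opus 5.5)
Your proposal is correct. It is the standard moment-generating-function proof. The paper gives no proof of this lemma, since it is quoted as a standard tool in the preliminaries. Of your two options for the $\mu_0$ clause, the direct observation that $\exp\bigl(\mu(e^\lambda-1)\bigr)$ is increasing in $\mu$ for $\lambda>0$ and decreasing for $\lambda<0$ is the cleanest, and it handles both tails. The only loose end is $\delta=1$ in the lower tail, where $\lambda=\ln(1-\delta)$ is undefined; there, let $\lambda\to-\infty$ instead, which gives $\Pr[X=0]\le e^{-\mu}\le e^{-\mu/2}$.
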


\begin{lemma}[Pinsker's inequality]
	\label{lem:pinsker}
	Let $P$ and $Q$ be probability distributions on a common finite set, and let
	$\TV(P,Q) = \frac{1}{2}\sum_x \abs{P(x)-Q(x)}$. If $D_{\mathrm{KL}}(P \Vert Q)$
	is measured in bits, then
	$$
	\TV(P,Q) \le \sqrt{\frac{\ln 2}{2} D_{\mathrm{KL}}(P \Vert Q)}.
	$$
\end{lemma}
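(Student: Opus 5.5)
The statement to prove is Pinsker's inequality with the divergence measured in bits. The plan is the standard two-step route. First reduce to the binary case with the data-processing inequality. Then verify the binary case by a one-variable calculus argument.

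Fix $P,Q$ on a finite set. Let $A=\{x:P(x)\ge Q(x)\}$, so that $\TV(P,Q)=P(A)-Q(A)$. Write $a=P(A)$ and $b=Q(A)$. The map $x\mapsto\one[x\in A]$ sends $P$ and $Q$ to $\mathrm{Bern}(a)$ and $\mathrm{Bern}(b)$. By the log-sum inequality (equivalently, convexity of $t\mapsto t\log t$ applied on each of $A$ and its complement), $D_{\mathrm{KL}}(P\Vert Q)\ge D_{\mathrm{KL}}(\mathrm{Bern}(a)\Vert\mathrm{Bern}(b))$. It therefore suffices to show, for all $0\le b\le a\le1$ and in natural-log units,
$$
a\ln\frac ab+(1-a)\ln\frac{1-a}{1-b}\ \ge\ 2(a-b)^2 .
$$
Converting to bits multiplies the left side by $1/\ln2$. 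This turns $\TV^2\le\frac12 D_{\mathrm{nats}}$ into $\TV^2\le\frac{\ln2}{2}D_{\mathrm{bits}}$, which is the stated form after taking square roots.

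For the binary inequality, fix $a$ and define $g(b)=a\ln\frac ab+(1-a)\ln\frac{1-a}{1-b}-2(a-b)^2$ for $b\in(0,a]$. We have $g(a)=0$, and
$$
g'(b)=-\frac ab+\frac{1-a}{1-b}+4(a-b)=(b-a)\left(\frac1{b(1-b)}-4\right).
$$
Since $b(1-b)\le1/4$, the second factor is nonnegative. Because $b\le a$, we get $g'(b)\le0$, so $g$ is nonincreasing on $(0,a]$ and $g(b)\ge g(a)=0$.

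The degenerate cases need a short remark. If $b=0<a$, the divergence is $+\infty$ and there is nothing to prove. If some $Q(x)=0<P(x)$, the inequality is likewise trivial. Zero terms with $P(x)=0$ contribute $0$ by convention.

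The only step requiring care is the data-processing reduction. I would state it explicitly via the log-sum inequality: $\sum_{x\in A}P(x)\ln\frac{P(x)}{Q(x)}\ge P(A)\ln\frac{P(A)}{Q(A)}$, and similarly on $A^c$. Summing the two gives the binary divergence. The calculus step is routine.
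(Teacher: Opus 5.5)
Your proof is correct and complete. It is the standard textbook argument: you reduce via the log-sum (data-processing) inequality to $\mathrm{Bern}(P(A))$ versus $\mathrm{Bern}(Q(A))$ with $A=\{x:P(x)\ge Q(x)\}$, prove the binary bound $D_{\mathrm{nats}}\ge 2(a-b)^2$ by monotonicity in $b$, and convert units using $D_{\mathrm{nats}}=(\ln 2)\,D_{\mathrm{bits}}$, which produces the stated constant $\frac{\ln 2}{2}$. The paper gives no proof of this lemma; it quotes it in the appendix as a standard tool for the proof of Lemma~\ref{lem:query-nearly-uniform}, so there is no different route to compare against.
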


\begin{lemma}[Fano's inequality]
	\label{lem:fano}
	Let $W$ be a bit and $V$ a random variable jointly distributed with $W$.
	Suppose there is a predictor $\hat{W} = f(V)$ with
	$\Pr[\hat{W} \ne W] \le \delta$ for some $\delta \le 1/2$. Then
	$$
	H(W \mid V) \le h_2(\delta),
	$$
	where $h_2(\delta) = -\delta\log_2\delta - (1-\delta)\log_2(1-\delta)$
	is the binary entropy function.
\end{lemma}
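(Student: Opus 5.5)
Let $E\coloneqq\one[\hat W\ne W]$ be the error indicator, so $\Pr[E=1]=\Pr[\hat W\ne W]\eqqcolon\delta'\le\delta$. We expand $H(E,W\mid V)$ by the chain rule in two ways. On one hand,
$$
H(E,W\mid V)=H(W\mid V)+H(E\mid W,V)=H(W\mid V),
$$
since $E$ is a deterministic function of $(W,\hat W)=(W,f(V))$. On the other hand,
$$
H(E,W\mid V)=H(E\mid V)+H(W\mid E,V).
$$
The plan is to show that the second expansion is at most $h_2(\delta)$.

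The last term vanishes. Given $V=v$ and $E=e$, the bit $W$ is determined: $W=f(v)$ if $e=0$, and $W=1-f(v)$ if $e=1$. Here we use that $W$ is binary, so the term $\delta\log(|\mathcal W|-1)$ from the general Fano inequality is zero. Hence $H(W\mid E,V)=0$.

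For the remaining term, conditioning reduces entropy, so $H(E\mid V)\le H(E)=h_2(\delta')$. The main point needing care is the passage from $\delta'$ to $\delta$. The function $h_2$ is increasing on $[0,1/2]$, and by hypothesis $\delta'\le\delta\le1/2$. Therefore $h_2(\delta')\le h_2(\delta)$.

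Combining the two expansions gives $H(W\mid V)\le h_2(\delta)$. This is the form invoked in the proof of Theorem~\ref{thm:tsi-lb}, applied there with $V=(Y,\Piqry,\PiAtwo)$. The decoder is $\widehat W$, which may be assumed deterministic after Yao's principle; if it used private randomness, one can include that randomness in $V$ without changing the argument. No step is a real obstacle. The only subtleties are the monotonicity requirement $\delta\le 1/2$ and the fact that the predictor must be a function of $V$ alone.
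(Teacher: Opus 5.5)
Your proof is correct. It is the standard chain-rule proof of Fano's inequality, specialized to a binary $W$: there $H(W\mid E,V)=0$ removes the $\delta\log(|\mathcal{W}|-1)$ term, and monotonicity of $h_2$ on $[0,1/2]$ covers the case $\Pr[\hat W\ne W]<\delta$. The paper gives no proof of its own and lists the lemma in the appendix as a standard tool, so there is nothing to compare against. The one tacit point is that $f$ takes values in $\{0,1\}$, which is needed for $H(W\mid E,V)=0$. This is without loss of generality, since remapping any other output to $0$ cannot increase the error.
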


\begin{lemma}[Jensen's inequality]
	\label{lem:jensen}
	Let $\varphi$ be concave on an interval containing the range of a random variable $X$ with $\E[\abs{X}] < \infty$. Then
	$$
	\E[\varphi(X)] \le \varphi(\E[X]),
	$$
	with the inequality reversed if $\varphi$ is convex.
\end{lemma}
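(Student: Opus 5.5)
The plan is to prove the concave case via a supporting line and then obtain the convex case by negation. Set $\mu=\E[X]$, which is finite because $\E[\abs{X}]<\infty$. Since the range of $X$ lies in an interval $I$, we also have $\mu\in I$ (taking the closure of the convex hull if necessary).

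First, use concavity of $\varphi$ on $I$ to find a slope $s$ with
$$
\varphi(x)\le\varphi(\mu)+s(x-\mu)\qquad\text{for all }x\in I.
$$
When $\mu$ lies in the interior of $I$, take $s$ to be the one-sided derivative $\varphi'_+(\mu)$. This exists because, for a concave function, the difference quotients $(\varphi(x)-\varphi(\mu))/(x-\mu)$ are monotone in $x$. The same monotonicity places the line above the graph on both sides of $\mu$.

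Next, substitute $x=X$ and take expectations. The right-hand side is integrable, so the expectation of $\varphi(X)$ is well-defined, possibly equal to $-\infty$. By linearity,
$$
\E[\varphi(X)]\le\varphi(\mu)+s\,\E[X-\mu]=\varphi(\mu).
$$
For the convex case, apply this argument to $-\varphi$, which is concave; this reverses the inequality.

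The main obstacle is the boundary case where $\mu$ is an endpoint of $I$, since a supporting line may then fail to exist, for example because of an infinite slope. In that case I would observe that $X\ge\mu$ almost surely (or $X\le\mu$), together with $\E[X]=\mu$. This forces $X=\mu$ almost surely, and the inequality then holds with equality. In the paper the lemma is only applied to $\sqrt{\cdot}$ on $[0,\infty)$ with a bounded nonnegative variable, so this edge case is benign there.
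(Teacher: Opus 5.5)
Your proof is correct. It is the standard supporting-line argument, and it handles the degenerate endpoint case properly. The paper gives no proof of this lemma; it only lists it among standard tools in the appendix, so there is no paper argument to compare against.

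One small tidy-up: drop the hedge about ``taking the closure of the convex hull.'' It is unnecessary, and slightly risky, because $\varphi$ need not be defined at an endpoint missing from $I$. The observation you already use in the endpoint case settles this: if $X>a$ almost surely then $\E[X]>a$. Hence $\mu\in I$ automatically, and $\varphi(\mu)$ is always defined.
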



\end{document}